\theoremstyle{break}
\newcommand{\vol}{\operatorname{Vol}}
\begin{document}

\mainmatter  % start of an individual contribution

\title{Analytic Theory to Differential Privacy}

\author{Genqiang Wu\inst{1,2} \and Xianyao Xia\inst{2} \and Yeping He\inst{2}}

\institute{SIE, Lanzhou University of Finance and Economics, Lanzhou 730020, China 
\and
Institute of Software Chinese Academy of Sciences, Beijing 100190, China \\
\email{genqiang80@gmail.com, \{xianyao,yeping\}@nfs.iscas.ac.cn}
}

\authorrunning{Genqiang Wu, Xianyao Xia and Yeping He}

\maketitle

\begin{abstract}
The purpose of this paper is to develop a mathematical analysis theory to solve differential privacy problems. 
The heart of our approaches is to use analytic tools to characterize the correlations among the outputs of different datasets, which makes it feasible to represent a differentially private mechanism with minimal number of parameters. These results are then used to  construct differentially private mechanisms analytically. 
Furthermore, our approaches are universal to almost all query functions.
We believe that the approaches and results of this paper are indispensable complements to the current studies of differential privacy that are ruled by the ad hoc and algorithmic approaches.
\end{abstract}

\begin{keywords}
differential privacy, analytic theory, universal mechanism, optimal mechanism %utility-privacy tradeoff
\end{keywords}
%\newpage

\section{Introduction}  \label{section-introduction}

Differential privacy \cite{DBLP:conf/tcc/DworkMNS06,DBLP:conf/icalp/Dwork06} studies how to query dataset while preserving the privacy of individuals whose sensitive information is contained in the dataset. The crux of differential privacy is to find efficient algorithms, called (privacy) mechanisms, to query sensitive dataset to obtain relatively accurate outputs while satisfying differential privacy.

Since the introduction of differential privacy in 2006 \cite{DBLP:conf/tcc/DworkMNS06,DBLP:conf/icalp/Dwork06}, it has obtained intensive attentions, both from academic community \cite{DBLP:journals/fttcs/DworkR14,salil-vadhan-cdp,DBLP:journals/spm/SarwateC13,DBLP:journals/tkde/ZhuLZY17} and from industry world\footnote{\url{http://devstreaming.apple.com/videos/wwdc/2016/709tvxadw201avg5v7n/709/709_engineering_privacy_for_your_users.pdf}}  \cite{DBLP:conf/ccs/ErlingssonPK14}.
However, some fundamental problems in differential privacy are still not solved: 
``Current approaches are \emph{too ad hoc} and frequently
do \emph{not achieve anything close to the best tradeoffs} between accuracy and privacy'', as commented by Dwork et al.\cite{DBLP:journals/corr/AbowdADKMR17}.
Specifically, first, there are too many different problems in differential privacy and each problem has very different mechanisms from others'. 
For example, there are the data publishing problems \cite{DBLP:conf/sigmod/ZhangXX16,DBLP:conf/kdd/MohammedCFY11,DBLP:conf/ccs/ChenAC12,DBLP:journals/pvldb/ChenMFDX11}, the principal component analysis problems \cite{DBLP:conf/stoc/HardtR12,DBLP:conf/stoc/DworkTT014,DBLP:conf/soda/KapralovT13}, the classification problems \cite{DBLP:journals/jmlr/ChaudhuriMS11,DBLP:journals/pvldb/ZhangZXYW12} and the clustering problems \cite{DBLP:conf/codaspy/SuCLBJ16} among others in differential privacy, and 
it would be hard to imagine that the mechanisms of data publishing can be used to solve other problems mentioned above. 
Second, even though there are some mechanisms, such as the sensitivity-based mechanisms \cite{DBLP:conf/tcc/DworkMNS06,DBLP:conf/stoc/NissimRS07,DBLP:conf/sigmod/ZhangCPSX15} and the exponential mechanism \cite{DBLP:conf/focs/McSherryT07}, are universal to deal with many differential privacy problems, these mechanisms are far less optimal to the tradeoffs between  utility and privacy.

Then, the questions are: whether there exist approaches that are both universally applicable and (near) optimal to most differential privacy problems? if there do exist, how can we construct them? These are challenging questions since the current works in differential privacy don't give any obvious hint about how to deal with them. To answer these questions, there are many works to be done.

\subsection{Outline and Contribution}

This paper mainly do three works to answer the above questions. 

First, since the current works in differential privacy are done in a separated and informal way, a formal and universal illustration to differential privacy problems is needed to pave the way to study differential privacy rigorously and universally.
We will give an abstract model to mathematically formalize differential privacy problems in a universal way, which will show that a query function in differential privacy, in fact, is an \emph{operator} in functional analysis \cite{Erwin-Kreyszig1978}. These materials will be presented in Section \ref{section-DP-model}.

Second, we will study the common properties or structures of differential privacy mechanisms that are applicable to most differential privacy problems. 
The complexity of constructing a differential privacy mechanism $\mathcal M$ is mainly due to that 
 the outputs $\mathcal M(x), x\in\mathcal D$, where each $\mathcal M(x)$ is a probability distribution and $\mathcal D$ is the universe of datasets, are strongly correlated to each other, which defeats most attempts to change some $\mathcal M(x)$'s to improve utility while letting the results still satisfy differential privacy. We will introduce analytic tools to characterize these correlations and then simplify the approaches to construct mechanisms. Especially, we will discuss what are the minimal amount of parameters that are needed to represent a differential privacy mechanism. Our approaches, in principle, are motivated by some approaches in functional analysis \cite{Erwin-Kreyszig1978}.
These problems are discussed in Section \ref{section-representation-theory}.

Third, there should be approaches to balance privacy and utility. 
These approaches should both be applicable for  and (in some extent) optimal to most differential privacy problems in constructing mechanisms.
By using the results in Section \ref{section-representation-theory} we can make changes to some or all $\mathcal M(x), x\in\mathcal D$ while letting the results still satisfy differential privacy, which makes it feasible to construct mechanisms to balance privacy and utility at will.
We will discuss these problems in Section \ref{section-mechanism-design}.

In brief, we will develop a mathematical analysis theory to study differential privacy.

\section{Preliminaries}  \label{section-prelimiaries}

\subsection{Notations}

The notational conventions of this paper are summarized in Table \ref{table-1}.
In this paper, unless noted otherwise, any set is \emph{not} a multiset. Set $\epsilon >0$.

\begin{table}[t]
\caption{Table of Notation}  \label{table-1}
\begin{center}
    \begin{tabular}{ | l | l |}
    \hline
    Notation & Description  \\ \hline \hline
    $\|z\|_p$ & the $\ell_p$-norm of the real vector $z$ \\ \hline
    $\bar{\mathbb N}$       & the set of the natural numbers including 0 \\  \hline
    $\mathcal D, \bar d$   &  the set of possible datasets, the metric on the set $\mathcal D$, respectively   \\ \hline
    $\mathcal R, d$  &  the set $\{f(x):x\in \mathcal D\}$, the metric on the set $\mathcal R$, respectively \\ \hline     
    $\mathcal P(\mathcal R)$ & the set of possible probability distributions on $\mathcal R$ \\  \hline
    $\mathcal M(x)$ & a probability distribution in $\mathcal P(\mathcal R)$ or a random variable following it \\ \hline
    %the output probability distribution of the mechanism $\mathcal M$ when the input dataset is $x$ \\
    $p^x(r)$    & the probability distribution of the random variable $\mathcal M(x)$ \\ \hline
    $q^x(r)$    &  the discretized probability distribution of $p^x(r)$ \\ \hline
    $\mathcal R_i^x$ & the $i$th layer of the probability distribution $p^x(r)$ or $q^x(r)$ \\ \hline
    $\mathcal N_i^x$ & the set $\{ y \in \mathcal D: i-1 < \bar d(x,y) \le i \}$ \\   \hline
    ${L}_{q}^{x}(r)$    &    the number of layer of $r$ in the probability distribution $q^x(r)$ \\ \hline
    $A-B$  &   the set difference $\{x: x\in A \wedge x\notin B\}$  \\ \hline
    $A-y$   &   the set $\{x-y:x\in A\}$ \\ \hline
    $N_{\delta}^{f(x)}$   &   the $\delta$-neighborhood of $f(x)$, i.e., the set $ \{r\in \mathcal R: d(r,f(x)) \le \delta\}$\\ \hline
    $\mathbb B$  & the universe of $\epsilon$-differential privacy mechanisms of the function $f$ \\ \hline
    $\bar{\mathbb B}$ &  the universe of the set sequences of mechanisms in $\mathbb B$ \\ \hline
    $\mathbb C$  & the universe of set sequences of $f$ satisfying (\ref{equation-7})\\ \hline
    \end{tabular}
\end{center}
\end{table}

\subsection{Differential Privacy}  \label{section-preliminary-dp-model}

A dataset is a collection (a multiset) of $n$ records, each of which is derived from the record universe $\mathcal X$ and denotes the information of one individual. We use the histogram representation $x \in \mathbb {\bar N}^{|\mathcal X|}$ to denote the dataset $x$, where the $i$th entry $x_i$ of $x$ represents the number of elements in $x$ of type $i \in \mathcal X$ \cite{DBLP:conf/stoc/NikolovTZ13,DBLP:journals/fttcs/DworkR14,DBLP:journals/vldb/LiMHMR15}. Two datasets $x,x'$ are said to be \emph{neighbors (or neighboring datasets)} if $\|x-x'\|_1=1$.

Differential privacy \cite{DBLP:conf/icalp/Dwork06,DBLP:journals/fttcs/DworkR14} characterizes privacy by capturing the changes of outputs when one's record in the queried dataset is changed. The changes of datasets are modeled by the notion of the neighboring datasets. For the dataset universe $\mathcal D$ and a query function $f$, let $\mathcal R\supseteq\{f(x): x \in \mathcal D\}$ and equip $\mathcal R$ with a Borel $\sigma$-algebra $\mathcal B$ \cite{athreya2006measure}.

\begin{definition}[$\epsilon$-Differential Privacy]
For the dataset universe $\mathcal D$, let $\mathcal P(\mathcal R)$ denote the set of all the probability measures on $(\mathcal R,\mathcal B)$. A mapping $\mathcal M: \mathcal D \rightarrow \mathcal P(\mathcal R)$ gives \emph{$\epsilon$-differential privacy} if for any two neighbors $x, x' \in \mathcal D$, and any $S \in \mathcal B$, there is
\begin{align} \label{formula-inequality-dp}
\Pr[ \mathcal M(x) \in S] \le \exp(\epsilon) \Pr[ \mathcal M(x')\in S],
\end{align}
where we abuse the notation $\mathcal M(x)$ as either denoting a probability distribution in $\mathcal P(\mathcal R)$ or denoting a random variable following the probability distribution.

\end{definition}

\subsection{Achieving Differential Privacy}

The global sensitivity-based method is a basic approach to achieve differential privacy \cite{DBLP:conf/tcc/DworkMNS06}. We first define the global sensitivity.

\begin{definition}[Global Sensitivity]
For the query function $f: \mathcal D\rightarrow \mathcal R$, if $\mathcal R\subseteq \mathbb R^k$, then the global sensitivity of $f$ is defined as
\begin{align*}
\Delta f = \max_{x,x'\in \mathcal D:\|x-x'\|_1= 1}\|f(x)-f(x')\|_1.
\end{align*}
Furthermore, letting $s^x: \mathcal R \rightarrow \mathbb (-\infty,0]$ be the \emph{score function} when the inputted dataset is  $x$, the global sensitivity of $s^x$ is defined as
\begin{align*}
\Delta s= \max_{r\in \mathcal R,x,x'\in \mathcal D:\|x-x'\|_1= 1} |s^x(r)-s^{x'}(r)|.
\end{align*}
\end{definition}

The Laplace mechanism \cite{DBLP:conf/icalp/Dwork06,DBLP:journals/fttcs/DworkR14} is one important global sensitivity-based mechanism and the Exponential mechanism \cite{DBLP:conf/focs/McSherryT07} is one generalization of the global sensitivity-based mechanisms.

\begin{definition} \label{definition-laplace-gaussian-exponential}
The \emph{Laplace mechanism} $\mathcal{M}(x)$ generates a real random vector $r =(r_1, \ldots, r_k)$ with probability distribution
\[ p^x(r) = \frac{\epsilon}{2\Delta f} \exp \left(-\frac{\epsilon\|r-f(x)\|_1}{\Delta f}\right). \]
The \emph{Exponential mechanism} $\mathcal{M}(x)$ outputs an element $r \in \mathcal R$ with probability distribution \begin{align}
p^x(r)=\frac{1}{\alpha^x}\exp\left(\frac{\epsilon s^x(r)}{2\Delta s}\right),
\end{align}
where $\alpha^x$ is the normalizor. 
\end{definition}

Both the Laplace mechanism and the Exponential mechanism satisfy $\epsilon$-differential privacy \cite{DBLP:journals/fttcs/DworkR14}.

\subsection{A Lemma to Mediant Inequalities}

%The following lemma will be used frequently in this paper.

\begin{lemma}  \label{chap-optimal-mechanism:lemma-1}
Let $g(x)=\frac{\alpha_0+\alpha_1x}{\beta_0+\beta_1x}, x\in \mathbb R$, where $\alpha_i \ge 0, \beta_i >0$ for $i\in\{0,1\}$.  If $\frac{\alpha_0}{\beta_0} <\frac{ \alpha_1}{\beta_1}$, then $g(x)$ is increasing. Otherwise, if $\frac{\alpha_0}{\beta_0} \ge\frac{ \alpha_1}{\beta_1}$, then $g(x)$ is decreasing.
\end{lemma}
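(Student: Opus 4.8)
The plan is to prove this by a direct computation of the derivative of $g$. First I would note that $g$ is differentiable wherever $\beta_0 + \beta_1 x \neq 0$, and since we want monotonicity on all of $\mathbb R$ we should really restrict attention to an interval on which the denominator keeps a fixed sign (the statement implicitly assumes this, or one works on each such interval separately). Computing via the quotient rule gives
\begin{align*}
g'(x) = \frac{\alpha_1(\beta_0+\beta_1 x) - \beta_1(\alpha_0+\alpha_1 x)}{(\beta_0+\beta_1 x)^2} = \frac{\alpha_1\beta_0 - \alpha_0\beta_1}{(\beta_0+\beta_1 x)^2}.
\end{align*}
The key observation is that the numerator $\alpha_1\beta_0 - \alpha_0\beta_1$ is a constant independent of $x$, and the denominator is a square, hence positive. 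Therefore the sign of $g'(x)$ is exactly the sign of $\alpha_1\beta_0 - \alpha_0\beta_1$.

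Next I would translate the sign condition on $\alpha_1\beta_0 - \alpha_0\beta_1$ into the ratio condition stated in the lemma. Since $\beta_0 > 0$ and $\beta_1 > 0$, dividing by $\beta_0\beta_1 > 0$ preserves the inequality, so $\alpha_1\beta_0 - \alpha_0\beta_1 > 0$ is equivalent to $\frac{\alpha_1}{\beta_1} > \frac{\alpha_0}{\beta_0}$, and likewise $\alpha_1\beta_0 - \alpha_0\beta_1 \le 0$ is equivalent to $\frac{\alpha_0}{\beta_0} \ge \frac{\alpha_1}{\beta_1}$. Combining with the sign analysis of $g'$ above: if $\frac{\alpha_0}{\beta_0} < \frac{\alpha_1}{\beta_1}$ then $g'(x) > 0$ everywhere, so $g$ is increasing; if $\frac{\alpha_0}{\beta_0} \ge \frac{\alpha_1}{\beta_1}$ then $g'(x) \le 0$ everywhere, so $g$ is decreasing (non-strictly, becoming constant precisely in the equality case, which is consistent with "decreasing" in the weak sense used here).

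There is no real obstacle here — the proof is a two-line derivative computation plus an elementary manipulation of inequalities. The only subtlety worth a remark is the domain issue: $g$ has a pole at $x = -\beta_0/\beta_1$ when that point is real, so "increasing on $\mathbb R$" should be read as "increasing on each connected component of the domain", or one simply notes that in the intended application the relevant variable $x$ ranges over a set where $\beta_0 + \beta_1 x$ stays positive (e.g. $x \ge 0$), and there $g$ is genuinely monotone in the usual sense. I would phrase the final write-up to make the constant-numerator observation explicit, since that is the crux of why the monotonicity is global rather than merely local.
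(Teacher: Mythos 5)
Your proof is correct and takes essentially the same route as the paper: compute $g'(x)$ by the quotient rule, observe the numerator $\alpha_1\beta_0-\alpha_0\beta_1$ is a constant whose sign matches the stated ratio condition, and conclude. Your additional remark about the pole at $x=-\beta_0/\beta_1$ is a fair point the paper glosses over, but it does not change the argument.
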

\begin{proof}
Note that the derivative of $g(x)$ is $g'(x)=\frac{\alpha_1\beta_0-\alpha_0\beta_1}{(\beta_0x+\beta_1)^2}$, by which the claims are immediate.
\qed
\end{proof}

\section{Abstract Model of Differential Privacy}  \label{section-DP-model}

In this section we present an abstract model to differential privacy, of which the intentions are to formalize differential privacy problems in a universal and formal way, and to pave the way to discuss the common properties of differential privacy problems.
There are somewhat similar treatments in \cite{DBLP:journals/isci/HolohanLM15,DBLP:conf/pet/ChatzikokolakisABP13}. 

We will model each differential privacy problem as a problem of a query function $f: \mathcal D \rightarrow \mathcal R$, of which the domain $\mathcal D$ is the set of possible datasets  on which a metric is defined, and of which 
$\mathcal R \supseteq \{f(x):x\in \mathcal D\}$ on which another one metric is defined. That is, a query function in differential privacy, in general, is an \emph{operator} in functional analysis \cite{Erwin-Kreyszig1978}. For the simplicity of presentation, in the following parts of this paper, we set $\mathcal R =\{f(x):x\in \mathcal D\}$.

\subsection{Dataset Metric Space and Value Metric Space}  \label{subsection-theory-metric-space}

The dataset universe is modeled as a set $\mathcal D$ on which a metric $\bar d$ is defined.\footnote{The definitions of the metric and the metric space follow the references \cite{Micheal2007metric-space,Erwin-Kreyszig1978}.}

\begin{definition}[Dataset Metric Space] \label{definition-dataset-metric-space}
Let $f$ be a function defined on the set $\mathcal D$ on which a metric $\bar d$ is defined. Then the metric space $(\mathcal D, \bar d)$ is called the \emph{dataset metric space} of $f$. Two elements $x,y \in \mathcal D$ are said to be neighbors (or neighboring datasets) of distance $k$ if $k-1 < \bar d(x,y) \le k$, for $k \in \bar{ \mathbb N}$. When $k= 1$, $x,y$ are said to be neighbors (or neighboring datasets).
\end{definition}

Set $\mathcal N_i^x=\{ y \in \mathcal D: i-1 < \bar d(x,y) \le i \}$, for $i \in \bar{\mathbb N}$. Set $\mathcal {\bar N}_i^{x}=\{y \in \mathcal D:\bar d(x,y)\le i\}$ for $i \in \bar{\mathbb N}$ and set $\mathcal N^x=\{ y \in \mathcal D: \bar d(x,y) \le 1 \}$ for abbreviation. 

The codomain of the query function $f$ on $\mathcal D$ is modeled as a set $\mathcal R$ on which a metric $d$ is defined.
\begin{definition}[Value Metric Space] \label{definition-value-metric-space}
For a function $f$ on $\mathcal D$, set $\mathcal R=\{f(x): x\in \mathcal D\}$. Defining a metric $d$ on $\mathcal R$, then $(\mathcal R, d)$ is called the \emph{value metric space} of $f$. Equipping $\mathcal R$ with the Borel $\sigma$-algebra $\mathcal B$ generated by the open sets in $\mathcal R$ (in the metric topology), then $(\mathcal R, \mathcal B)$ is a measurable space \cite{athreya2006measure}.
\end{definition}

The product metric space and the product probability space are used to model the batch query functions.

\begin{definition}[Product Metric Space]
If $(\mathcal R_1,d_1),\ldots,(\mathcal R_n,d_n)$ are metric spaces, and $N$ is a norm on $\mathbb R^n$, then $\big(\mathcal R_1\times \cdots \times \mathcal R_n, N(d_1,\ldots,d_n)\big)$ is a metric space, where the \emph{product metric} $N(d_1,...,d_n)$ is defined by
\[N(d_1,...,d_n)\big((x_1,\ldots,x_n),(y_1,\ldots,y_n)\big) = N\big(d_1(x_1,y_1),\ldots,d_n(x_n,y_n)\big),\]
and the induced topology agrees with the \emph{product topology}.
\end{definition}

\begin{definition}[Product Probability Space]
Let $(\mathcal R_1, \mathcal B_1, \mu_1),$ $\ldots,$  $(\mathcal R_n, \mathcal B_n, \mu_n)$ be $n$ probability spaces. Then the probability space $(\mathcal R, \mathcal B, \mu)$, defined by $\mathcal R= \mathcal R_1 \times \cdots \mathcal \times \mathcal R_n$, $\mathcal B = \mathcal B_1 \times \cdots \times \mathcal B_n$ and $\mu= \mu_1 \times \cdots \times \mu_n$, is called the \emph{product probability space} of the $n$ probability spaces.
\end{definition}

For $n$ query functions $f_1,\ldots, f_n$ over the dataset metric space $(\mathcal D,\bar d)$, let $(\mathcal R_1,d_1),$ $\ldots, (\mathcal R_n,d_n)$ be their value metric spaces, respectively. Then the product metric space $\big(\mathcal R_1\times \cdots \times \mathcal R_n, N(d_1,\ldots,d_n)\big)$ is called the (product) value metric space of $f_1,\ldots, f_n$.

\subsection{Definition of Differential Privacy}

For the query function $f:\mathcal D \rightarrow \mathcal R$, let $(\mathcal D, \bar d)$, $(\mathcal R, d)$ be its dataset metric space and value metric space, respectively. Let $(\mathcal R, \mathcal B)$ be a measurable space. 

\begin{definition}[$\epsilon$-Differential Privacy]  \label{definition-abstract-dp}
%For a dataset metric space $(\mathcal D, \bar d)$, 
Let $\mathcal P(\mathcal R)$ denote the set of all the probability distributions on $(\mathcal R,\mathcal B)$. A mapping $\mathcal M: \mathcal D \rightarrow \mathcal P(\mathcal R)$ gives \emph{$\epsilon$-differential privacy} if, for any two neighbors $x, x' \in \mathcal D$ and any $S \in \mathcal B$, there is
\begin{align} \label{formula-inequality-dp}
\Pr[ \mathcal M(x) \in S] \le \exp(\epsilon) \Pr[ \mathcal M(x')\in S].
\end{align}
%where we abuse the notation $\mathcal M(x)$ as either denoting a probability distribution in $\mathcal P(\mathcal R)$ or denoting a random variable following the probability distribution. %\alert{The mapping $\mathcal M$ is said to be a mechanism to achieve DP.}
\end{definition}

For the random variable $\mathcal M(x)$, let $p^x(r)$ be its probability distribution. Then the mechanism $\mathcal M$ can be represented by the set $\left\{p^x(r): x\in \mathcal D \right\}$.

\begin{proposition}[Composition Privacy]  \label{proposition-composition-privacy}
For the dataset metric space $(\mathcal D, \bar d)$, let $\mathcal M_i$ be $\epsilon_i$-differentially private on $(\mathcal R_i,\mathcal B_i)$ for $i \in \{1,\ldots,n\}$. Then the composition of $\mathcal M_1,\ldots,\mathcal M_n$, defined by $\mathcal M(x)= (\mathcal M_1(x),\ldots, \mathcal M_n(x))$, $x \in \mathcal D$, satisfies $\sum_{i=1}^n \epsilon_i$-differential privacy on the product probability space $(\mathcal R,\mathcal B)$.
\end{proposition}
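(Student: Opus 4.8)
The plan is to verify Definition~\ref{definition-abstract-dp} for the product law directly, reducing everything to the $n$ one-dimensional hypotheses. Fix two neighbors $x,x'\in\mathcal D$ and, for each $i$, let $\nu_i$ and $\nu_i'$ be the laws of $\mathcal M_i(x)$ and $\mathcal M_i(x')$ on $(\mathcal R_i,\mathcal B_i)$; by assumption $\nu_i(A)\le e^{\epsilon_i}\nu_i'(A)$ for every $A\in\mathcal B_i$. Since the composition is taken on the product probability space, $\mathcal M(x)$ and $\mathcal M(x')$ have laws $\nu=\nu_1\times\cdots\times\nu_n$ and $\nu'=\nu_1'\times\cdots\times\nu_n'$ on $(\mathcal R,\mathcal B)$, and it suffices to prove $\nu(S)\le e^{\sum_{i=1}^n\epsilon_i}\,\nu'(S)$ for all $S\in\mathcal B$. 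I would first record the elementary upgrade from a pointwise measure bound to an integral bound: $\nu_i\le e^{\epsilon_i}\nu_i'$ implies $\int h\,d\nu_i\le e^{\epsilon_i}\int h\,d\nu_i'$ for every nonnegative measurable $h$ on $\mathcal R_i$ (check on indicators, extend to simple functions, pass to the limit by monotone convergence).

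With this in hand, proceed by induction on $n$; the case $n=1$ is the hypothesis. For the step, write $\tilde\nu=\nu_1\times\cdots\times\nu_{n-1}$ and $\tilde\nu'=\nu_1'\times\cdots\times\nu_{n-1}'$, so $\nu=\tilde\nu\times\nu_n$ and $\nu'=\tilde\nu'\times\nu_n'$, and by the induction hypothesis $\tilde\nu(B)\le e^{\sum_{i<n}\epsilon_i}\tilde\nu'(B)$ for every $B$ in the product $\sigma$-algebra of $\mathcal R_1\times\cdots\times\mathcal R_{n-1}$. For $S\in\mathcal B$ and $y\in\mathcal R_n$ let $S^{(y)}=\{\mathbf r:(\mathbf r,y)\in S\}$ denote the $y$-section. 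Tonelli's theorem applies (all integrands nonnegative) and gives $\nu(S)=\int_{\mathcal R_n}\tilde\nu(S^{(y)})\,d\nu_n(y)$, along with measurability of $y\mapsto\tilde\nu'(S^{(y)})$. Bounding the integrand by $e^{\sum_{i<n}\epsilon_i}\tilde\nu'(S^{(y)})$ and then invoking the integral upgrade with $h(y)=\tilde\nu'(S^{(y)})$ and constant $e^{\epsilon_n}$, one gets $\nu(S)\le e^{\sum_{i<n}\epsilon_i}\int\tilde\nu'(S^{(y)})\,d\nu_n(y)\le e^{\sum_{i=1}^n\epsilon_i}\int\tilde\nu'(S^{(y)})\,d\nu_n'(y)=e^{\sum_{i=1}^n\epsilon_i}\,\nu'(S)$, the last equality being Tonelli again. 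As $x,x'$ were arbitrary neighbors, $\mathcal M$ is $\sum_{i=1}^n\epsilon_i$-differentially private.

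I expect the only real difficulty to be measure-theoretic, not conceptual. On a measurable rectangle $S=S_1\times\cdots\times S_n$ the inequality is immediate, $\nu(S)=\prod_i\nu_i(S_i)\le\prod_i e^{\epsilon_i}\nu_i'(S_i)=e^{\sum_i\epsilon_i}\nu'(S)$, and one is tempted to lift this to all of $\mathcal B$ by a Dynkin argument; but the family $\{S\in\mathcal B:\nu(S)\le e^{\sum_i\epsilon_i}\nu'(S)\}$ is \emph{not} a $\lambda$-system, since closure under proper differences would force the reverse inequality. This is exactly why the proof routes through the section/Tonelli induction rather than a bare $\pi$--$\lambda$ step; the two ingredients that require (routine) care are the integral upgrade and the measurability of sections. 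If one is willing to assume, as the paper's notation $p^x(r)$ suggests, that every $\mathcal M_i(x)$ has a density $p_i^x$ with respect to a common $\sigma$-finite reference measure, there is a one-line alternative: $\epsilon_i$-differential privacy then reads $p_i^x\le e^{\epsilon_i}p_i^{x'}$ almost everywhere, the joint density factorizes as $p^x(r_1,\dots,r_n)=\prod_i p_i^x(r_i)$, and integrating the resulting pointwise bound $p^x\le e^{\sum_i\epsilon_i}p^{x'}$ over $S$ gives the claim; the argument above is the variant that does not presuppose densities.
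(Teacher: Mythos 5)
Your proof is correct. Note, though, that the paper does not actually supply an argument here: its ``proof'' consists of the single sentence that the claim is similar to Theorem 3.14 of Dwork--Roth and is omitted. The argument it gestures at is essentially the density one-liner you sketch in your last sentences: factor the joint density as $\prod_i p_i^x(r_i)$, bound each factor by $e^{\epsilon_i}p_i^{x'}(r_i)$, and integrate over $S$. Your main route --- the measure-level inequality $\nu_i \le e^{\epsilon_i}\nu_i'$, upgraded to integrals of nonnegative functions and then fed through a Tonelli/section induction on $n$ --- is genuinely different and strictly more general: it needs no common dominating reference measure, which is the honest setting for the paper's abstract model where each $\mathcal M_i(x)$ is just an arbitrary probability measure on $(\mathcal R_i,\mathcal B_i)$. (For a single fixed pair of neighbors one could always dominate by $\nu_i+\nu_i'$ and recover the density argument, but your version avoids even that detour.) Your remark that the family $\{S:\nu(S)\le e^{\sum_i\epsilon_i}\nu'(S)\}$ fails to be a $\lambda$-system, so that a bare $\pi$--$\lambda$ lift from rectangles does not work, is accurate and worth keeping: it explains why the sectioning step is actually necessary rather than a stylistic choice. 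In short, what the paper buys by omission is brevity; what your argument buys is a proof that is actually valid at the level of generality the paper claims.
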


\begin{proof}
The proof is similar with the one of Theorem 3.14 in \cite{DBLP:journals/fttcs/DworkR14} and is omitted. %\alert{The proof seems needing advanced techniques for the dependent cases.}
\qed
\end{proof}

\begin{proposition}[Group Privacy]
Let $\mathcal M$ be an $\epsilon$-differentially private mechanism. Assume that, for any $x,y \in \mathcal D$ with $\bar d(x,y)=i$ for $i \in \mathbb N$, there exists $x' \in \mathcal D$ such that $\bar d(x,x')=1$ and $\bar d(x',y)=i-1$. Then
\[ \Pr[\mathcal M(x) \in S] \le \exp(i\epsilon) \Pr[\mathcal M(y) \in S], \]
for any $S \in \mathcal B$.
\end{proposition}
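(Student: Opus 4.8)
The plan is to prove the bound by induction on the integer distance $i = \bar d(x,y)$, using the one-step differential privacy guarantee of Definition~\ref{definition-abstract-dp} together with the standing hypothesis that every distance-$i$ pair can be ``split'' through an intermediate dataset $x'$ lying at distance $1$ from $x$ and distance $i-1$ from $y$. Concretely, I would show: for every $i \in \mathbb N$, every pair $x,y \in \mathcal D$ with $\bar d(x,y)=i$, and every $S \in \mathcal B$, one has $\Pr[\mathcal M(x)\in S] \le \exp(i\epsilon)\Pr[\mathcal M(y)\in S]$.

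For the base case $i=1$: by the convention in Definition~\ref{definition-dataset-metric-space} (the case $k=1$), $\bar d(x,y)=1$ means $x$ and $y$ are neighbors, so Definition~\ref{definition-abstract-dp} directly gives $\Pr[\mathcal M(x)\in S]\le \exp(\epsilon)\Pr[\mathcal M(y)\in S]$. For the inductive step, fix $i \ge 2$, assume the claim for all pairs at distance $i-1$, and let $x,y$ satisfy $\bar d(x,y)=i$. By hypothesis pick $x' \in \mathcal D$ with $\bar d(x,x')=1$ and $\bar d(x',y)=i-1$. Since $\bar d(x,x')=1$, the pair $x,x'$ are neighbors, so for all $S \in \mathcal B$
\[ \Pr[\mathcal M(x)\in S] \le \exp(\epsilon)\,\Pr[\mathcal M(x')\in S], \]
while the induction hypothesis applied to the pair $x',y$ gives
\[ \Pr[\mathcal M(x')\in S] \le \exp((i-1)\epsilon)\,\Pr[\mathcal M(y)\in S]. \]
Chaining these two inequalities yields $\Pr[\mathcal M(x)\in S] \le \exp(i\epsilon)\Pr[\mathcal M(y)\in S]$, which closes the induction.

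This argument is essentially bookkeeping, so I do not expect a genuine obstacle; the only points needing a moment's care are (i) that the base case is exactly the ``neighbors'' case of Definition~\ref{definition-dataset-metric-space}, so that the differential privacy inequality is legitimately applicable to the one-step link $x \to x'$, and (ii) that the chaining holds uniformly in $S \in \mathcal B$ — each inequality in the chain is stated for the same event $S$, so no measurability subtlety arises. The symmetric bound $\Pr[\mathcal M(y)\in S] \le \exp(i\epsilon)\Pr[\mathcal M(x)\in S]$ would follow the same way, since Definition~\ref{definition-abstract-dp} applies to any ordered pair of neighbors.
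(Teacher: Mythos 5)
Your induction-and-chaining argument is correct and is exactly the argument the paper has in mind: the paper's own ``proof'' merely asserts that the proposition is an immediate corollary of the differential privacy inequality, and your write-up supplies the standard details (base case from the $k=1$ neighbor convention, inductive step through the intermediate dataset $x'$ guaranteed by the hypothesis). No gap; you have simply made explicit what the paper leaves implicit.
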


\begin{proof}
The proposition is an immediate corollary of the inequality (\ref{formula-inequality-dp}).
\qed
\end{proof}

\subsection{Utility and Optimal Mechanism Problem}  \label{subsection-theory-utility}

%In the section we see how to measure the accuracy or noise complexity of mechanisms.

Let $(\mathcal R,d)$ be the value metric space of $f$.  %Set $C_T^x= \{r \in \mathcal R :d(f(x),r) \le T\}$ for all $x \in \mathcal D$, where $T>0$.
We use the expected distortion $P^x$ between the random variable $\mathcal M(x)$ and $f(x)$  to measure the utility of the mechanism $\mathcal M$ at the dataset $x$, i.e.,
\begin{align}  \label{equation-16}
P^x= \mathbb E_p[d(\mathcal M(x),f(x))] = \int_{r \in \mathcal R} d(r,f(x)) p^x(dr),
\end{align}
where $p^x(r)$ is the probability distribution of $\mathcal M(x)$.
We use the set $\{ P^x: x\in \mathcal D\}$ to measure the utility of $\mathcal M$. An alternative to measure the utility of $\mathcal M$  is to use the expected value of $P^x$, i.e.,
\begin{align}  \label{equation-19}
P=\mathbb E P^x=\int_{x\in \mathcal D} P^xp(dx)=\int_{x\in \mathcal D} \int_{r \in \mathcal R} d(r,f(x)) p^x(dr)p(dx),
\end{align}
where $p(x)$ is the occurring probability distribution of datasts in $\mathcal D$. 

Let the set 
\begin{align} \label{equation-26}
\mathbb B= \left\{\{p^x(r):  x\in \mathcal D\}: \{p^x(r):  x\in \mathcal D\} \mbox{ satisfies $\epsilon$-differential privacy}\right\}
\end{align}
denote the universe of $\epsilon$-differential privacy mechanisms of the query function $f$. Then, the \emph{Pareto optimal mechanism problem} of the query function $f$ would be the multi-objective optimization problem \cite{citeulike:3789084}
\begin{align} \label{equation-23}
\min_{\{p^x(r):  x\in \mathcal D\} \in \mathbb B}\{ P^x: x \in \mathcal D\}.
\end{align}
Similarly, the \emph{expected optimal mechanism problem} of the query function $f$ would be the optimization problem
\begin{align} \label{equation-25}
\min_{\{p^x(r):  x\in \mathcal D\} \in \mathbb B}P.
\end{align}

\subsection{Query Function}  \label{subsection-theory-function}

Notice that the definition of a query function in above sections is consistent with the definition of an \emph{operator} in functional analysis \cite{Erwin-Kreyszig1978}. Therefore, following the tradition of functional analysis, a query function $f:\mathcal D\rightarrow \mathcal R$ is also called an \emph{operator} $f:\mathcal D\rightarrow \mathcal R$.

The linear function is known to be one kind of the simplest query functions in differential privacy, which is a generalization of the sum function or the counting function \cite{DBLP:journals/fttcs/DworkR14}.

\begin{definition}[Linear Function]  \label{definition-linear-function}
For the query function $f:\mathcal D \rightarrow \mathcal R$, assume both of $\mathcal D, \mathcal R$ are vector spaces over the same field \cite{Erwin-Kreyszig1978}. If
$f(x+y)=f(x)+f(y)$ for all $x,y \in \mathcal D$, the function $f$ is said to be a linear (query) function or a linear operator \cite{Erwin-Kreyszig1978}.
\end{definition}

Note that, for a linear function $f$, the set
\begin{align*}
\mathcal V_f :=& \{f(x')-f(x): x' \in \mathcal N^x\} \\
=& \{f(x'-x): x' \in \mathcal N^x\}  \\
=&  \{f(y): \bar d(0,y) \le 1\}
\end{align*}
has no difference for different $x$ and seems to be different from those of other linear queries.
Therefore, we can use $\mathcal V_f$ to represent $f$. We call $\mathcal V_f$ \emph{the neighboring set} of the linear function $f$.
Any query function, which is not a linear function, is said to be a non-linear (query) function.

\begin{definition}[Monotonic Function]  \label{definition-monotonic-function}
The function $f$ is said to be a \emph{monotonic (query) function} if for any $x \in \mathcal D$ and all $y,z \in \mathcal D$ such that $\bar d(x,y)> \bar d(x,z)$, there is $d(f(x),f(y))\ge d(f(x),f(z))$, and to be a \emph{strictly monotonic (query) function} if $d(f(x),f(y))> d(f(x),f(z))$.
\end{definition}

The identity function is used to model the data publication problem in differential privacy \cite{DBLP:conf/ccs/ChenAC12,DBLP:journals/pvldb/ChenMFDX11}.

\begin{definition}[Permutation Function and Identity Function] \label{definition-permutation-function}
The injective function $f:\mathcal D \rightarrow \mathcal R$ is called a permutation function if $\mathcal D=\mathcal R$.
%if its value metric space $(\mathcal R, d)$ is the same as its dataset metric space $(\mathcal D,\bar d)$. 
Moreover, if $f(x)=x$ for all $x \in \mathcal D$, then $f$ is called an identity function.
\end{definition}

%\alert{Delete?}

\begin{definition}[Global Sensitivity and Local Sensitivity]
Let $(\mathcal D,\bar d)$, $(\mathcal R,d)$ be the dataset metric space, the value metric space of the function $f$, respectively. The global sensitivity of $f$ is defined as
\[  \Delta f = \max_{x,x'\in \mathcal D:\bar d(x',x) \le 1} d(f(x),f(x')). \]
The local sensitivity of $f$ at $x$ is defined as
\begin{align*}
\Delta^x_{f}=\max_{x' \in \mathcal D:\bar d(x',x) \le 1} d(f(x),f(x')).
\end{align*}
%The local sensitivity of distance $i$ of $f$ at $x$ is defined as
%\begin{align*}
%\Delta^x_{i} = \Delta^x_{i,f}=\max_{x' \in \mathcal D:\bar d(x',x) \le i} \Delta^{x'}_{f}.
%\end{align*}
%Obviously, $\Delta_f^x = \Delta_0^x $.
\end{definition}

Note that the definitions of the global sensitivity and the local sensitivity are consistent with those in  \cite{DBLP:conf/tcc/DworkMNS06,DBLP:conf/stoc/NissimRS07,DBLP:conf/sigmod/ZhangCPSX15}.

\subsection{Instance Interpretation}  \label{subsection:instance-interpretation}

The abstract model in Section \ref{section-DP-model} is consistent with the classic differential privacy model. To see this, we give some instance interpretations. 

First, for a query function, the set $\mathcal D$ in the dataset metric space $(\mathcal D, \bar d)$ is equivalent to the dataset universe in the classic differential privacy model. The metric $\bar d$ captures the mathematical meaning of the neighboring relation of datasets. The details are as follows. There are two different definitions about neighboring datasets in differential privacy: \emph{bounded neighboring datasets} and \emph{unbounded neighboring datasets} \cite{DBLP:conf/sigmod/KiferM11}. For the definition of bounded neighboring datasets, all of the datasets are assumed to have the same number $n$ of records. Two datasets $x,x' \in \mathcal D$ are said to be neighboring datasets if $\|x-x'\|_1 =2$, where $x,x'$ are their histogram representations as noted in Section \ref{section-preliminary-dp-model}. In this case, we can set $\bar d(x,y)= \frac{\|x-y\|_1}{2}$. For the definition of unbounded neighboring datasets, the number of records in a dataset can be any natural number. Two datasets $x,x' \in \mathcal D$ are said to be neighboring datasets if $\|x-x'\|_1 =1$. In this case, we can set $\bar d(x,y)= \|x-y\|_1$.

Theoretically, in differential privacy, almost all of data processing problems can be explained as a function $f$ whose domain is set to be $\mathcal D$ and whose codomain is set to be $\mathcal R$, such as the SQL query problems \cite{DBLP:conf/icdm/HayLMJ09,DBLP:conf/tcc/KasiviswanathanNRS13,DBLP:conf/asiacrypt/DworkNRR15,DBLP:conf/stoc/HardtT10}, the statistical problems \cite{DBLP:conf/stoc/DworkL09,DBLP:journals/corr/DworkS015,DBLP:conf/focs/RogersRST16}, and the data ming or machine learning problems
\cite{DBLP:conf/kdd/McSherryM09,DBLP:conf/kdd/MohammedCFY11,DBLP:conf/nips/ChaudhuriHS14,DBLP:journals/jmlr/ChaudhuriSS13,DBLP:conf/nips/ChaudhuriV13,DBLP:journals/jmlr/ChaudhuriMS11,DBLP:conf/stoc/DworkTT014,DBLP:journals/pvldb/ZhangZXYW12,DBLP:conf/sigmod/ZhangCPSX14}.
The idea of differential privacy to preserve privacy can be explained as follows: When the real dataset is $x$, in order to preserve privacy, a differentially private mechanism first samples a dataset $y \in \mathcal D$ (according to a probability distribution) and then outputs $f(y)$ as the final query result of $f$. There should be a distortion function $D^x(r)$ to measure the distortion when querying the dataset $x$ but obtaining $f(y)$. Then, the metric $d$ of the value metric space $(\mathcal R, d)$ can be set as
\begin{align}
d(f(x),r)= D^x(r), \hspace{1cm} x\in \mathcal D, r\in \mathcal R,
\end{align}
which is a measure of the distortion of querying the dataset $x$ but obtaining $r$. For example, if $\mathcal R\subseteq \mathbb R^k$, we can set $d(f(x),r)=\|f(x)-r\|_p$\cite{DBLP:conf/stoc/NikolovTZ13,DBLP:journals/vldb/LiMHMR15}; if $\mathcal R$ is a set of real matrices, we can use a norm over matrices to define $d$ \cite{DBLP:conf/stoc/HardtR12,DBLP:conf/stoc/DworkTT014,DBLP:conf/soda/KapralovT13}. For the case of $\mathcal R$ being a set of non-numeric elements, the corresponding distortion function $D^x(r)$, in general, will not satisfy the triangular inequality property and the symmetric property of metric \cite{Micheal2007metric-space}, such as those in \cite{DBLP:conf/kdd/MohammedCFY11,DBLP:conf/ccs/ChenAC12,DBLP:journals/pvldb/ChenMFDX11}. In this condition, the metric $d$ can be considered as an approximation of $D^x(r)$ and we would treat the real problem by the method found when treating the ideal problem, which would simplify the complexity of complex problems. We now give some examples.

\begin{example}[counting query]
A counting query function $f$ outputs non-negative integers. Then, we can set $\mathcal R=\{0,\ldots, k\}$, and $d(r,f(x))=|r-f(x)|$. The subgraph counting query  \cite{DBLP:journals/tods/KarwaRSY14,DBLP:conf/tcc/KasiviswanathanNRS13,DBLP:conf/sigmod/ChenZ13,DBLP:conf/icdm/HayLMJ09} is a special kind of counting queries.
\end{example}

\begin{example}[multi-linear queries \cite{DBLP:conf/stoc/NikolovTZ13,DBLP:journals/fttcs/DworkR14,DBLP:journals/vldb/LiMHMR15,DBLP:conf/nips/WangFZW13}]
For $k$ real valued linear queries $f_1, \ldots, f_k$, we can set $f(x)=(f_1(x),\ldots,f_k(x))$ for all $x\in \mathcal D$. Then $\mathcal R$ can be set as a subset of $\mathbb R^k$ and $d(r,f(x))=\|r-f(x)\|_{p}$.
\end{example}

\begin{example}[data publishing  and synthetic dataset generation]
For the data publishing problem \cite{DBLP:conf/kdd/MohammedCFY11,DBLP:conf/ccs/ChenAC12,DBLP:journals/pvldb/ChenMFDX11,DBLP:conf/kdd/ChenFDS12,DBLP:conf/sigmod/ZhangXX16,DBLP:conf/sigmod/HayMMCZ16,DBLP:conf/icde/YaroslavtsevCPS13}, the query function can be defined as the identity function as defined in Definition \ref{definition-permutation-function} where the codomain $\mathcal R$ of $f$ is the same as its domain $\mathcal D$. There will be different ways to set the metric $d$, in which the simplest way is to set $d=\bar d$, i.e., the metric induced by the $\ell_1$-norm. The synthetic dataset generation problem \cite{DBLP:conf/tcc/GuptaRU12,DBLP:conf/focs/HardtR10,DBLP:conf/stoc/RothR10,DBLP:conf/nips/HardtLM12} is a special case of the data publishing problem where the metric $d$ is the induced metric of the $\ell_2$-norm of the linear queries' output vector.
\end{example}

\begin{example}[principal component analysis \cite{DBLP:conf/stoc/HardtR12,DBLP:conf/stoc/DworkTT014,DBLP:conf/soda/KapralovT13,DBLP:conf/kdd/McSherryM09,DBLP:conf/stoc/HardtR13}]
For the principal component analysis problem, each record is a real-valued vector and a dataset $x$ is an $n\times m$ real-valued matrix. The value $f(x)$ is the principal component analysis matrix of $x$, i.e., a $k\times m$ real-valued matrix. Then $\mathcal R$ is a set of $k\times m$ real-valued matrices. The metric $d(f(x),r)$ can be set as the spectral norm or the frobenius norm of the matrix $r-f(x)$.
\end{example}

\begin{example}[linear classifier \cite{DBLP:journals/jmlr/ChaudhuriMS11,DBLP:conf/nips/ChaudhuriM08,DBLP:journals/pvldb/ZhangZXYW12}]
For the linear classifier problem, the value $f(x)$ can be set as the classifier of the dataset $x$ (if there are several candidates, just choose one randomly), i.e., a $k$ dimensional real-valued vector, which is the output of a classifier algorithm, such as the logistic regression algorithm. Then $\mathcal R$ is a set of $k$ dimensional real-valued vectors and $d(r,f(x))=\|r-f(x)\|_p$.
\end{example}

\begin{example}[functional output \cite{DBLP:journals/jmlr/HallRW13,DBLP:journals/jmlr/ChaudhuriMS11,DBLP:journals/corr/abs-0911-5708}]
As noted in \cite{DBLP:journals/jmlr/HallRW13}, there are many applications, in which the outputs are functions, such as the density functions. In this case, the value of $f(x)$ would be a function and the codomain $\mathcal R$ would be a set of functions. Let $(\mathcal F,\|\cdot\|)$ be a normed space \cite{Erwin-Kreyszig1978} and let $\mathcal R\subseteq \mathcal F$, then $d(r,f(x))=\|r-f(x)\|$.
\end{example}

\section{Representation Theory to Mechanisms} \label{section-representation-theory}

The main obstacle to solve the optimal mechanism problems (\ref{equation-23}) and (\ref{equation-25}) is that we are almost unknown to the set $\mathbb B$, the universe of the $\epsilon$-differential privacy mechanisms of the query function $f$. The set $\mathbb B$ is complex since, for each mechanism $\{p^x(r),x\in \mathcal D\} \in \mathbb B$, the probability distributions $p^x(r),x\in \mathcal D$ are strongly correlated to each other. In this section, we first study the correlations among the probability distributions $p^x(r),x\in \mathcal D$, and then study the representation of mechanisms and the structure of the set $\mathbb B$. Our study is motivated by some approaches to study the representation of elements in Hilbert space, where each element in a Hilbert space can be uniquely represented by the Fourier coefficients through a orthonormal basis \cite[Chapter 3]{Erwin-Kreyszig1978}.
% operators and the structure of the operator space in functional analysis \cite[Chapter 3]{Erwin-Kreyszig1978}. 
 Note that we don't mean that the set $\mathbb B$ has similar structure with Hilbert spaces. In fact, they are very different.

Let $f:\mathcal D \rightarrow \mathcal R$ be any one query function as defined in Section \ref{section-DP-model}. For the clarity of presentation, in the following parts of this paper, we assume that both $\mathcal D$ and $\mathcal R$ are discrete. 
In this setting, for any $A\subseteq \mathcal R$, we set $\int_A dr = |A|$. Other cases can be treated similarly.

\subsection{Discretization of Mechanism}

The discretization of mechanism is our first step to characterize the correlations among $p^x(r),x\in \mathcal D$. Our idea is to substitute the study of the discretized mechanism for the study of its original mechanisms.

\begin{definition}[Discretization of Mechanism]  \label{definition-discretization}
For the mechanism $\{p^x(r):x\in \mathcal D\}$, let $M=\max_{r\in \mathcal R,x\in \mathcal D} p^x(r)$.
For each $x\in \mathcal D$, set
\begin{align}  \label{equation-8}
\mathcal R_i^x =\left\{r \in \mathcal R: \exp(-(i+1)\epsilon) < \frac{p^x(r)}{M} \le \exp(-i\epsilon) \right\}, i\in \bar{\mathbb N}.
\end{align}
Then the mechanism $\{q^x(r):x\in \mathcal D\}$ is said to be the discretization (or discretized mechanism) of $\{p^x(r):x\in \mathcal D\}$ if
\begin{align}
q^x(r)=\frac{\exp(-i\epsilon)}{\alpha^x} , \mbox{  for all } r\in \mathcal R_i^x,
\end{align}
where $\alpha^x=\sum_{i=0}^{\infty}\exp(-i\epsilon) \left|\mathcal R_i^x \right|$ is the normalizer.
\end{definition}

We now show that the discretized mechanism $\{q^x(r):x\in \mathcal D\}$ has similar privacy level and utility level with its original mechanism $\{p^x(r):x\in \mathcal D\}$.

\begin{theorem}  \label{theorem-approximation}
Let $\{p^x(r):x\in \mathcal D\}$ and $\{q^x(r):x\in \mathcal D\}$ be as shown in Definition \ref{definition-discretization}. If $\{p^x(r):x\in \mathcal D\}$ is $\epsilon$-differentially private, then $\{q^x(r):x\in \mathcal D\}$ at least satisfies $2\epsilon$-differential privacy. 
Furthermore, letting $\mathbb E_p d(f(x),r), \mathbb E_q d(f(x),r)$ denote the utility of $\{p^x(r):x\in \mathcal D\}$ and $\{q^x(r):x\in \mathcal D\}$ at the dataset $x$, respectively, as shown in (\ref{equation-16}), then we have
\begin{align}\exp(-\epsilon)\mathbb E_q d(f(x),r)\le \mathbb E_p d(f(x),r)\le \exp(\epsilon)\mathbb E_q d(f(x),r).
\end{align}
%\alert{Furthermore,  their accuracies compared to the one of constructed are at most $\exp(\epsilon)$ times.}
\end{theorem}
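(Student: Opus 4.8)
The plan is to convert the defining inequalities of the layers $\mathcal R_i^x$ into a two-sided pointwise comparison between $p^x$ and $q^x$, and then control the normalizers $\alpha^x$ by a squeezing argument. Fix $x\in\mathcal D$ and $r\in\mathcal R_i^x$, so $L_q^x(r)=i$ and $q^x(r)=\exp(-i\epsilon)/\alpha^x$. By the definition of $\mathcal R_i^x$ we have $M\exp(-(i+1)\epsilon)<p^x(r)\le M\exp(-i\epsilon)$, equivalently, using $q^x(r)=\exp(-i\epsilon)/\alpha^x$,
\[ \exp(-\epsilon)\,M\alpha^x\,q^x(r)\;<\;p^x(r)\;\le\;M\alpha^x\,q^x(r). \]
Summing this over all $r$ and using $\sum_r p^x(r)=\sum_r q^x(r)=1$ gives $1\le M\alpha^x<\exp(\epsilon)$ for every $x$ (the strict inequality survives the summation because each omitted term is strictly positive), and in particular $\alpha^{x'}/\alpha^x<\exp(\epsilon)$ for any $x,x'\in\mathcal D$.

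The utility statement then falls out at once: multiplying the displayed estimate by $d(f(x),r)\ge 0$ and summing over $r$ yields $\exp(-\epsilon)M\alpha^x\,\mathbb E_q d(f(x),r)\le\mathbb E_p d(f(x),r)\le M\alpha^x\,\mathbb E_q d(f(x),r)$, and substituting $1\le M\alpha^x<\exp(\epsilon)$ gives the claimed two-sided bound.

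For the privacy statement I would \emph{not} substitute the displayed estimate directly, since that route loses a factor $\exp(\epsilon)$ on each of two steps and only yields $3\epsilon$; instead I would compare layer indices. Since $\mathcal R$ is discrete, (\ref{formula-inequality-dp}) is equivalent to the pointwise bound $p^x(r)\le\exp(\epsilon)p^{x'}(r)$ for neighbors $x,x'$ and all $r$. For such a pair and any $r$ with $p^x(r)>0$ (whence $p^{x'}(r)>0$ as well, so the supports coincide), put $i=L_q^x(r)$ and $j=L_q^{x'}(r)$. The inequalities defining $\mathcal R_i^x$ and $\mathcal R_j^{x'}$ combined with $p^x(r)\le\exp(\epsilon)p^{x'}(r)$ give $M\exp(-(i+1)\epsilon)<\exp(\epsilon)M\exp(-j\epsilon)$, i.e.\ $j\le i+1$, and symmetrically $i\le j+1$; hence $|i-j|\le 1$. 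Therefore
\[ \frac{q^x(r)}{q^{x'}(r)}=\exp\big((j-i)\epsilon\big)\cdot\frac{\alpha^{x'}}{\alpha^x}\le\exp(\epsilon)\cdot\exp(\epsilon)=\exp(2\epsilon), \]
and summing $q^x(r)\le\exp(2\epsilon)q^{x'}(r)$ over $r\in S$ (points of zero probability contributing $0$ to both sides) gives $\sum_{r\in S}q^x(r)\le\exp(2\epsilon)\sum_{r\in S}q^{x'}(r)$, i.e.\ $2\epsilon$-differential privacy.

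The only genuinely delicate point is the privacy constant: there are two independent sources of slack — the possible $\pm1$ mismatch of layer indices across neighbors, and the mismatch of the normalizers $\alpha^x,\alpha^{x'}$ — and the crux is that each is bounded by exactly $\exp(\epsilon)$, the latter precisely because of the squeeze $1\le M\alpha^x<\exp(\epsilon)$, so that $\exp(2\epsilon)$ absorbs both simultaneously. The remaining bookkeeping (zero-probability outcomes, and, if $\mathcal R$ were not assumed discrete/finite, convergence of $\alpha^x$ and of the expectations) does not affect the argument and I would dispatch it briefly.
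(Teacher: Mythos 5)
Your proof is correct and follows essentially the same route as the paper's: a pointwise sandwich of $p^x(r)$ between $\exp(-\epsilon)M\alpha^x q^x(r)$ and $M\alpha^x q^x(r)$ for the utility bound, and the observation that layer indices of neighboring datasets differ by at most one for the $2\epsilon$ privacy bound. The one place you go beyond the paper is in explicitly deriving the squeeze $1\le M\alpha^x<\exp(\epsilon)$ to control the ratio of normalizers $\alpha^{x'}/\alpha^x$ --- a step the paper's proof uses implicitly without justification --- which is a worthwhile tightening.
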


\begin{proof}
We first prove the claim about privacy. 
Let $\{p^x(r):x\in \mathcal D\}$ be $\epsilon$-differentially private and let $x,x'\in \mathcal D$ be two neighbors.
For any $r\in \mathcal R$, assume $q^x(r)=\frac{\exp(-i\epsilon)}{\alpha^x}$ and $q^{x'}(r)=\frac{\exp(-j\epsilon)}{\alpha^{x'}}$.
%Set $p^x(r_0)=M^x$. Then, for any two neighbors $x,x'\in \mathcal D$, there is $\frac{M^x}{M^{x'}} \le \frac{p^{x}(r_0)}{p^{x'}(r_0)}\le e^{\epsilon}$ since $\{p^x(r):x\in \mathcal D\}$ satisfies $\epsilon$-differential privacy.
We have
\begin{align}
%\frac{q^x(r)}{q^{x'}(r)}=\frac{\exp(-i\epsilon)}{\exp(-j\epsilon)}=
\frac{p^x(r)}{M}/\frac{p^{x'}(r)}{M} = \frac{p^x(r)}{p^{x'}(r)}  \le \exp(\epsilon),
\end{align}
which ensures $|i-j|\le 1$ by the equality (\ref{equation-8}).
Then
\begin{align}
\frac{q^x(r)}{q^{x'}(r)} =\frac{\exp(-i\epsilon)}{\exp(-j\epsilon)} / \frac{\alpha^{x'}}{\alpha^x}<  \exp(2\epsilon)
\end{align}
for any $r\in \mathcal R$, which ensures the $2\epsilon$-differential privacy of $\{q^x(r):x\in \mathcal D\}$.

Now, we prove the claim about utility. By (\ref{equation-16}), we have 
\begin{align}
\mathbb E_p d(f(x),r) =& \sum_{r\in \mathcal R}p^x(r)d(f(x),r)      \\
=& \sum_{i=0}^{\infty} \sum_{r\in \mathcal R_i^x}p^x(r) d(f(x),r)                          \\
=& \sum_{i=0}^{\infty} \sum_{r\in \mathcal R_i^x}\frac{p^x(r)}{\sum_{r\in\mathcal R}p^x(r)} d(f(x),r)  \\
\le& \sum_{i=0}^{\infty} \sum_{r\in \mathcal R_i^x}\frac{\exp(-i\epsilon)M}{ \sum_{i=0}^{\infty}\sum_{r\in \mathcal R_i^x}\exp(-(i+1)\epsilon)M } d(f(x),r)  \\
=& \exp(\epsilon)\sum_{i=0}^{\infty} \sum_{r\in \mathcal R_i^x}\frac{\exp(-i\epsilon)}{ \sum_{i=0}^{\infty}\left| \mathcal R_i^x\right|\exp(-i\epsilon) } d(f(x),r)  \\
=& \exp(\epsilon)\mathbb E_q d(f(x),r).
\end{align}

Similarly, we have
\begin{align}
\mathbb E_p d(f(x),r)
=& \sum_{i=0}^{\infty} \sum_{r\in \mathcal R_i^x}\frac{p^x(r)}{\sum_{r\in\mathcal R}p^x(r)} d(f(x),r)  \\
\ge& \sum_{i=0}^{\infty} \sum_{r\in \mathcal R_i^x}\frac{\exp(-(i+1)\epsilon)M}{ \sum_{i=0}^{\infty}\sum_{r\in \mathcal R_i^x}\exp(-i\epsilon)M } d(f(x),r)  \\
=& \exp(-\epsilon)\sum_{i=0}^{\infty} \sum_{r\in \mathcal R_i^x}\frac{\exp(-i\epsilon)}{ \sum_{i=0}^{\infty}\left| \mathcal R_i^x\right|\exp(-i\epsilon) } d(f(x),r)  \\
=& \exp(-\epsilon)\mathbb E_q d(f(x),r).
\end{align}

The claims are proved.
\qed
\end{proof}

Theorem \ref{theorem-approximation} implies that we can approximately substitute the study of the discretized mechanism for the study of its original mechanism with less deviation from both utility and privacy.

\begin{definition}[Equivalence of Mechanisms \uppercase\expandafter{\romannumeral1}]  \label{definition-equivalence-mechanisms}
Let $\{p^x(r):x\in \mathcal D\}$ and $\mathcal R^x_i$ be as shown in Definition \ref{definition-discretization}.
Then $\{\mathcal R^x_i: i\in \bar{\mathbb N}, x\in \mathcal D\}$ is called the set sequences of the mechanism $\{p^x(r):x\in \mathcal D\}$, and the set $\mathcal R^x_i$ is called the $i$\emph{th} layer of the probability distribution $p^x(r)$.
\emph{Two mechanisms are said to be equivalent if they share the same set sequences.}
Furthermore, if a point $r\in \mathcal R$ is in the $i$\emph{th} layer of the probability distribution $p^x(r)$, we denote ${L}_p^{x}(r)=i$.
\end{definition}

Note that the set sequences $\{\mathcal R^x_i: i\in \bar{\mathbb N}, x\in \mathcal D\}$ contains all the information to construct the mechanism $\{q^x(r):x\in \mathcal D\}$. Therefore, we can use the former to denote the later or to denote the equivalent mechanism $\{p^x(r):x\in \mathcal D\}$ when there is no ambiguity.

\subsection{Representation of Mechanisms}

We now discuss the properties of the set sequences, which accurately capture the correlations among $p^x(r),x\in \mathcal D$.

\begin{theorem}  \label{lemma-4}
Let $\{p^x(r):x\in \mathcal D\}$ and $\{\mathcal R^x_i: i\in \bar{\mathbb N}, x\in \mathcal D\}$ be as shown in Definition \ref{definition-discretization}. Set $\mathcal A_{i}^x= \cup_{j=0}^i\mathcal R_{j}^x$ for each $x\in \mathcal D$ and each $i \in \bar{\mathbb N}$.
If $\{p^x(r):x\in \mathcal D\}$ satisfies $\epsilon$-differential privacy, then
\begin{enumerate}
\item for any two neighbors $x,x'\in \mathcal D$, there is $\mathcal R^x_i\cap \mathcal R^{x'}_j \ne \emptyset$ only if $|i-j|\le 1$;
\item there is
\begin{align}  \label{equation-9}
\mathcal R_i^x\supseteq\cup_{x'\in \mathcal D:0<\bar d(x,x')\le 1}\mathcal R_{i-1}^{x'}-\mathcal A_{i-1}^x
\end{align}
for each $x\in \mathcal D$ and each $i \in \mathbb N$;
\item there is
\begin{align} \label{equation-12}
\mathcal A_{i}^x \supseteq \cup_{y\in \mathcal D:\bar d(x,y) \le i}\mathcal R_{0}^y
\end{align}
for each $x\in \mathcal D$ and each $i \in\bar{ \mathbb N}$;
\item there is
\begin{align}  \label{equation-11}
\mathcal R_i^x\supseteq \cup_{y\in \mathcal D: i-1 <\bar d(x,y) \le i}\mathcal R_{0}^y -\mathcal A_{i-1}^x
\end{align}
for each $x\in \mathcal D$ and each $i \in \mathbb N$.
\end{enumerate}
\end{theorem}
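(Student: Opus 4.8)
The plan is to derive all four items from the single hypothesis that $\{p^x(r):x\in\mathcal D\}$ is $\epsilon$-differentially private, using only the definition of the layers $\mathcal R_i^x$ in (\ref{equation-8}) and the group-privacy-style iteration of the basic inequality. The engine throughout is the observation that $r\in\mathcal R_i^x$ encodes a two-sided bound $\exp(-(i+1)\epsilon)M < p^x(r)\le \exp(-i\epsilon)M$, so that $\epsilon$-DP between neighbors $x,x'$, i.e. $\exp(-\epsilon)\le p^x(r)/p^{x'}(r)\le\exp(\epsilon)$, translates directly into a constraint on the layer indices. I would first record this translation as a running sublemma: if $r\in\mathcal R_i^x\cap\mathcal R_j^{x'}$ for neighbors $x,x'$, then $\exp(-(i+1)\epsilon)M < p^x(r)\le \exp(-i\epsilon)M$ and likewise for $j$, so $p^x(r)/p^{x'}(r) < \exp(-i\epsilon)/\exp(-(j+1)\epsilon)=\exp((j+1-i)\epsilon)$ and symmetrically $> \exp((j-1-i)\epsilon)$; combining with the DP bounds forces $|i-j|\le 1$. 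This is exactly item~1, and it is essentially the computation already performed inside the proof of Theorem~\ref{theorem-approximation}.

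For item~2, fix $x$, $i\in\mathbb N$, and take $r$ in the right-hand side of (\ref{equation-9}): there is a neighbor $x'$ of $x$ with $r\in\mathcal R_{i-1}^{x'}$, and $r\notin\mathcal A_{i-1}^x=\cup_{j\le i-1}\mathcal R_j^x$, so the layer index $L_p^x(r)=:k$ satisfies $k\ge i$. On the other hand, by item~1 applied to the neighbors $x,x'$ at the point $r$, the indices $k$ (for $x$) and $i-1$ (for $x'$) differ by at most $1$, so $k\le i$. Hence $k=i$, i.e. $r\in\mathcal R_i^x$. The only subtlety is to make sure that every $r\in\mathcal R$ lies in exactly one layer $\mathcal R_k^x$ for some $k\in\bar{\mathbb N}$ (with $k$ possibly large, but finite since $p^x(r)>0$ for $r\in\mathcal R=\{f(x):x\in\mathcal D\}$—here the discreteness assumption and the convention that $\mathcal R$ is the image are what guarantees $p^x(r)>0$); this should be stated explicitly so that "$L_p^x(r)$ is well-defined" is not taken for granted.

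Item~3 is the "group privacy" iteration of item~2, and item~4 is a restatement of item~3's increment. For item~3 I would induct on $i$. The base case $i=0$ is the trivial inclusion $\mathcal A_0^x=\mathcal R_0^x\supseteq\mathcal R_0^x$. For the inductive step, suppose the claim holds for $i-1$; given $y$ with $\bar d(x,y)\le i$, either $\bar d(x,y)\le i-1$, in which case $\mathcal R_0^y\subseteq\mathcal A_{i-1}^x\subseteq\mathcal A_i^x$ by induction, or $i-1<\bar d(x,y)\le i$. In the latter case I would like to route through a neighbor: pick $x'$ with $\bar d(x,x')\le 1$ and $\bar d(x',y)\le i-1$ (this is exactly the hypothesis used in the Group Privacy proposition), apply the inductive hypothesis to $x'$ to get $\mathcal R_0^y\subseteq\mathcal A_{i-1}^{x'}$, and then push from $x'$ back to $x$ one layer using item~2: any $r\in\mathcal A_{i-1}^{x'}$ has $L_p^x(r)\le i$ by repeated application of item~1 along the edge $x,x'$, so $r\in\mathcal A_i^x$. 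Then item~4 follows by taking the part of (\ref{equation-12}) with $\bar d(x,y)$ strictly greater than $i-1$ and subtracting off $\mathcal A_{i-1}^x$, exactly as in the passage from item~1–2 to (\ref{equation-11}).

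The main obstacle I anticipate is not any single hard inequality but the bookkeeping around two points: (a) ensuring the neighbor-chaining hypothesis "there exists $x'$ with $\bar d(x,x')=1$, $\bar d(x',y)=i-1$" is actually available—it is assumed in the Group Privacy proposition but not globally, so I would either add it as a standing hypothesis to Theorem~\ref{lemma-4} or note that for the histogram/$\ell_1$ metrics it holds automatically; and (b) being careful that item~1 gives $|i-j|\le 1$ only for genuine neighbors, so every push across a distance-$i$ gap must be decomposed into $i$ unit steps, which is why items~3 and~4 need an induction rather than a one-line argument. Everything else is a direct unwinding of (\ref{equation-8}) together with the DP inequality.
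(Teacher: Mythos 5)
Your proof is correct and follows essentially the same route as the paper's: item 1 by translating the pointwise DP inequality into a constraint on layer indices, item 2 by sandwiching the index of $r$ between $i$ (from $r\notin\mathcal A_{i-1}^x$) and $i$ (from item 1), item 3 by induction on $i$ through neighbors, and item 4 from $\mathcal R_i^x\supseteq\mathcal A_i^x-\mathcal A_{i-1}^x$ together with item 3. The one substantive issue you flag --- that the inductive step of item 3 needs, for every $y$ with $i-1<\bar d(x,y)\le i$, an intermediate $x'$ with $\bar d(x,x')\le 1$ and $\bar d(x',y)\le i-1$ --- is genuine: the paper's own proof uses exactly this property at its step $\supseteq_c$ while attributing it to the triangle inequality (which only yields the reverse containment), so stating it as an explicit hypothesis, as you propose, is a refinement rather than a departure.
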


\begin{proof}
We prove the first claim by contradiction. Assume that there exists $r_0\in \mathcal R$ such that $r_0 \in\mathcal R^x_i\cap \mathcal R^{x'}_{i+2}$. Then $\frac{p^x(r_0)/M}{p^{x'}(r_0)/M} =\frac{p^x(r_0)}{p^{x'}(r_0)}> \exp(\epsilon)$, which is contrary to the $\epsilon$-differential privacy of $\{p^x(r):x\in \mathcal D\}$. Other cases can be treated similarly. Therefore, the first claim is correct.

Now we prove the second claim. By the first claim, there is $\mathcal A_1^x \supseteq \mathcal R_0^{x'}$ for each neighbor $x'$ of $x$, which ensures
\begin{align}
\mathcal R_1^x\supseteq\cup_{x'\in \mathcal D:0<\bar d(x,x')\le 1}\mathcal R_{0}^{x'}-\mathcal A_{0}^x.
\end{align}
%Now assume that for $i=k$, there are
%\begin{align}
%\mathcal R_k^x\supseteq\cup_{x'\in \mathcal D:0<\bar d(x,x')\le 1}\mathcal R_{k-1}^{x'}-\mathcal A_{k-1}^x, \mbox{ for all } x\in \mathcal D.
%\end{align}
When $i \ge 2$, by the first claim, there are $\left( \mathcal R_{i-1}^x\cup \mathcal R_{i}^x\cup \mathcal R_{i+1}^x \right)\supseteq \mathcal R_{i}^{x'}$ for each neighbor $x'$ of $x$. Therefore, we have
\begin{align}
\mathcal R_{i+1}^x\supseteq\cup_{x'\in \mathcal D:0<\bar d(x,x')\le 1}\mathcal R_{i}^{x'}-\mathcal A_{i}^x.
\end{align}
The second claim is proved.

We prove the third claim by induction. First, it is easy to verify that the equation (\ref{equation-12}) is correct when $i=0$. Second, assume the equation (\ref{equation-12}) holds for any $x\in \mathcal D$ when $i\le k$, where $k \in \bar{\mathbb N}$.
We have
\begin{align*}
\mathcal A_{k+1}^x  \supseteq_a \left(\cup_{x'\in \mathcal D: 0<\bar d(x,x')\le 1}\mathcal A_k^{x'}\right) &\supseteq_b \left(\cup_{x'\in \mathcal D:0<\bar d(x,x')\le 1} \left( \cup_{y\in \mathcal D:\bar d(x',y)\le k} \mathcal R_0^y \right)\right) \\
&\supseteq_c \left( \cup_{y\in \mathcal D:\bar d(x,y)\le k+1} \mathcal R_0^y \right),
\end{align*}
where $\supseteq_a$ is due to the first claim and the definition of $\mathcal A_{i}^x$, $\supseteq_b$ is due to the assumption, and $\supseteq_c$ is due to the triangular inequality property of the metric $\bar d$.
The third claim is proved.

Now we prove the forth claim. Since $\mathcal A_{i}^x= \cup_{j=0}^i\mathcal R_{j}^x=\mathcal R_{i}^x\cup \mathcal A_{i-1}^x$, we have
\begin{align}
\mathcal R_i^x \supseteq& \mathcal A_{i}^x-\mathcal A_{i-1}^x  \\
  \supseteq& \cup_{y\in \mathcal D: d(x,y) \le i}\mathcal R_{0}^y-\mathcal A_{i-1}^x
\supseteq \cup_{y\in \mathcal D: i-1 < d(x,y) \le i}\mathcal R_{0}^y -\mathcal A_{i-1}^x.
\end{align}
The forth claim is proved.

The proof is complete.
\qed
\end{proof}

Theorem \ref{lemma-4} shows some important properties of differentially private mechanisms, which characterize the correlations among $\mathcal M(x), x\in \mathcal D$ well. These properties, especially the fourth one, are the basis of this paper to analyze differential privacy.
We are more interested in one special case of (\ref{equation-11}), where, for each $x\in \mathcal D$ and each $i \in \mathbb N$, there is
\begin{align}  \label{equation-10}
\mathcal R_i^x= \cup_{y\in \mathcal D: i-1 < d(x,y) \le i}\mathcal R_{0}^y -\mathcal A_{i-1}^x,
\end{align}
%for each $x\in \mathcal D$ and each $i \in \mathbb N$, 
which gives the boundary condition required to satisfy differential privacy.
This special case indicates an interesting phenomenon: the corresponding mechanism $\{\mathcal R^x_i: i\in \bar{\mathbb N}, x\in \mathcal D\}$ is determined only by the initial values $\{\mathcal R^x_0:x\in\mathcal D\}$ through the construction rule (\ref{equation-10}).
This phenomenon motivates us to explore whether any one mechanism has the similar concise representation.

In order to achieve the aim, we need to rewrite the construction rule (\ref{equation-11}), which is shown as follows.
For the mechanism $\{\mathcal R^x_i: i\in \bar{\mathbb N}, x\in \mathcal D\}$, set
\begin{align} \label{equation-31}
\tilde{\mathcal R}_i^x = \mathcal R_{i}^x - \left(\cup_{y\in \mathcal D: i-1 < \bar d(x,y) \le i}\mathcal R_{0}^y-\mathcal A_{i-1}^x \right)
\end{align}
for each $i\in \mathbb N$ and each $x\in \mathcal D$.
Assume there exists a set $I^x=\{i_0^x,i_1^x,\ldots,i_{t^x}^x\} \subseteq \mathbb N$ ($t^x$ may be $\infty$) such that $\tilde{\mathcal R}_i^x \ne \emptyset$ for each $i \in I^x=\{i_0^x,i_1^x,\ldots,i_{t^x}^x\}$, and that $\tilde{\mathcal R}_i^x = \emptyset$ for any $i \notin I^x$.
Then, we have
%Let the set $\tilde{\mathcal R}_i^x \subseteq \mathcal R$ satisfy $\tilde{\mathcal R}_i^x \cap \left(\cup_{y\in \mathcal D: d(x,y) \le i}\mathcal R_{0}^y \right)=\emptyset$ for each $i \in I^x=\{i_0^x,\ldots,i_t^x\}$, where each $i_j^x \in \mathbb N$ and $i_0^x<\cdots<i_t^x$.
%Let $\tilde{\mathcal R}_i^x=\emptyset$ for all $i \notin I^x=\{i_0^x,\ldots,i_t^x\}$.
%For each $i \in \mathbb N$, we set
\begin{align}  \label{equation-7}
\mathcal R_{i}^x=\tilde{\mathcal R}_i^x \cup \left(\cup_{y\in \mathcal D: i-1 <\bar d(x,y) \le i}\mathcal R_{0}^y-\mathcal A_{i-1}^x \right)
\end{align}
for each $i\in \mathbb N$ and each $x\in \mathcal D$.
Denote $\mathcal R^x_{I^x}= \{\tilde{\mathcal R}^x_{i_0^x}, \ldots, \tilde{\mathcal R}^x_{i_{t^x}^x}\}$. Then the mechanism $\{\mathcal R^x_i: i\in \bar{\mathbb N}, x\in \mathcal D\}$ is uniquely determined by the initial values $\{\mathcal R_0^x, \mathcal R^x_{I^x}: x\in \mathcal D\}$ and the construction rule (\ref{equation-7}).
In this manner, the equation (\ref{equation-10}) can be considered as a special case of the equation (\ref{equation-7}) where $\mathcal R^x_{I^x}=\emptyset$ for all $x\in \mathcal D$. We then have the follow representation theorem of mechanisms.

\begin{theorem}[Representation of Mechanism]  \label{theorem-representation}
Let $\{p^x(r):x\in \mathcal D\}$ and $\{\mathcal R^x_i: i\in \bar{\mathbb N}, x\in \mathcal D\}$ be as shown in Definition \ref{definition-discretization}.
If $\{p^x(r):x\in \mathcal D\}$ satisfies $\epsilon$-differential privacy, then $\{\mathcal R^x_i: i\in \bar{\mathbb N}, x\in \mathcal D\}$ is uniquely determined by the initial values $\{\mathcal R_0^x, \mathcal R^x_{I^x}: x\in \mathcal D\}$ throngh the construction rule (\ref{equation-7}).
\end{theorem}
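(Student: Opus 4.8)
The plan is to prove the theorem by a straightforward induction on the layer index $i$, carried out (if one wishes) separately for each $x\in\mathcal D$. The first thing I would make explicit is that the ``construction rule'' (\ref{equation-7}) is not an extra assumption but an identity that every $\epsilon$-differentially private mechanism satisfies: the definition (\ref{equation-31}) of $\tilde{\mathcal R}_i^x$ rewrites $\mathcal R_i^x$ as $\tilde{\mathcal R}_i^x\cup\big(\cup_{y:\,i-1<\bar d(x,y)\le i}\mathcal R_0^y-\mathcal A_{i-1}^x\big)$ provided the parenthesised set is contained in $\mathcal R_i^x$, and that containment is exactly the fourth claim (\ref{equation-11}) of Theorem \ref{lemma-4}. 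Hence the substance of the theorem is that, using (\ref{equation-7}) as a recurrence, the full set sequence $\{\mathcal R_i^x\}$ can be reconstructed from the data $\{\mathcal R_0^x,\mathcal R^x_{I^x}:x\in\mathcal D\}$, where $\mathcal R^x_{I^x}$ is read as the \emph{indexed} family $(\tilde{\mathcal R}^x_i)_{i\in I^x}$, so that the index set $I^x$ itself --- equivalently, the knowledge of which $\tilde{\mathcal R}^x_i$ are non-empty --- is part of the data.

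Fix $x\in\mathcal D$ and induct on $i\in\bar{\mathbb N}$ with the statement ``$\mathcal R_i^x$ is uniquely determined by $\{\mathcal R_0^y:y\in\mathcal D\}$ together with $\mathcal R^x_{I^x}$.'' For the base case $i=0$, the set $\mathcal R_0^x$ is one of the prescribed initial values, so there is nothing to prove, and $\mathcal A_0^x=\mathcal R_0^x$ is thereby determined. For the inductive step, assume $\mathcal R_j^x$ has been determined for all $j\le i-1$; then $\mathcal A_{i-1}^x=\cup_{j=0}^{i-1}\mathcal R_j^x$ is determined. On the right-hand side of (\ref{equation-7}) the term $\cup_{y\in\mathcal D:\,i-1<\bar d(x,y)\le i}\mathcal R_0^y$ depends only on the given zeroth layers $\{\mathcal R_0^y\}$ and the metric $\bar d$, while $\tilde{\mathcal R}_i^x$ equals $\emptyset$ if $i\notin I^x$ and equals the corresponding component of the family $\mathcal R^x_{I^x}$ if $i\in I^x$; in either case it is determined by the data. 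Thus (\ref{equation-7}) exhibits $\mathcal R_i^x$ as a fixed set-theoretic combination (unions and a set difference) of already-determined sets, so $\mathcal R_i^x$ is determined, which closes the induction. Since $x$ was arbitrary, the whole sequence $\{\mathcal R_i^x:i\in\bar{\mathbb N},x\in\mathcal D\}$ is determined, and uniqueness is automatic because each step of the recurrence outputs a single set; the case $t^x=\infty$ causes no difficulty, as at each finite stage $i$ one only has to test whether $i\in I^x$. Reading this back through Definition \ref{definition-equivalence-mechanisms}, two $\epsilon$-differentially private mechanisms with the same initial values are equivalent.

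There is no deep obstacle here --- the argument is essentially bookkeeping --- but two points deserve care. The first is the interpretation just mentioned: $\mathcal R^x_{I^x}$ must be treated as an indexed family rather than an unordered collection of sets, since otherwise the correspondence $i\mapsto\tilde{\mathcal R}^x_i$ is lost and (\ref{equation-7}) cannot be iterated. The second, which I would state explicitly to forestall any worry about circularity, is that the cross terms in (\ref{equation-7}) involve only the zeroth layers $\mathcal R_0^y$ of other datasets and never their higher layers; consequently the recursion for a fixed $x$ uses only the previously computed layers $\mathcal R_j^x$ ($j<i$) of that same $x$ and the globally prescribed $\{\mathcal R_0^y\}$, so the induction genuinely proceeds one $x$ at a time (equivalently, jointly over all $x$ at a common level $i$) with no cyclic dependence.
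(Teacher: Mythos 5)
Your proof is correct and follows essentially the same route as the paper, which does not give a formal proof but instead derives the identity (\ref{equation-7}) from (\ref{equation-31}) together with the fourth claim (\ref{equation-11}) of Theorem \ref{lemma-4} and then asserts unique determination by iteration. Your explicit induction on the layer index, and your remarks that $\mathcal R^x_{I^x}$ must be read as an indexed family and that the cross terms involve only zeroth layers, simply make precise the bookkeeping the paper leaves implicit.
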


The equation (\ref{equation-7}) accurately captures the correlations among the probability distributions $p^x(r), x\in\mathcal D$.
One very important thing is that Theorem \ref{theorem-representation} shows a way to represent and study mechanisms universally. That is, in order to figure out differential privacy, we only need to study the principles of setting the initial values $\{\mathcal R_0^x, \mathcal R^x_{I^x}: x\in \mathcal D\}$. Then Definition \ref{definition-equivalence-mechanisms} can be rewritten as follow.
%, which is the main result to study differential privacy in this paper
\begin{definition}[Equivalence of Mechanisms \uppercase\expandafter{\romannumeral2}]  \label{definition-1}
Two $\epsilon$-differential privacy mechanisms are said to be equivalent if they share the same initial values $\{\mathcal R_0^x, \mathcal R^x_{I^x}: x\in \mathcal D\}$. % are the same. %\alert{Note that equivalent mechanisms have similar utility.}

Furthermore, for an $\epsilon$-differential privacy mechanism, if its $\mathcal R^x_{I^x}=\emptyset$ for all $x\in \mathcal D$, it is called a basic mechanism. Otherwise, it is called a general mechanism.
\end{definition}

Noticing that the initial values $\{\mathcal R_0^x, \mathcal R^x_{I^x}: x\in \mathcal D\}$ can uniquely determine the mechanism $\{\mathcal R^x_i: i\in \bar{\mathbb N}, x\in \mathcal D\}$, we will use these initial values to denote the mechanism in the following sections where necessary.

\begin{proposition} \label{proposition-3}
For the mechanism $\{\mathcal R^x_i: i\in \bar{\mathbb N}, x\in \mathcal D\}$, if it can be reconstructed by the initial values $\{\mathcal R_0^x, \mathcal R^x_{I^x}: x\in \mathcal D\}$ through the construction rule (\ref{equation-7}), then it at least satisfies $2\epsilon$-differential privacy.
\end{proposition}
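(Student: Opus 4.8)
The plan is to show that the (discretized) mechanism $\{q^x(r):x\in\mathcal D\}$ carried by the set sequences $\{\mathcal R_i^x:i\in\bar{\mathbb N},x\in\mathcal D\}$ — where, as in Definition \ref{definition-discretization}, $q^x(r)=\exp(-L_q^x(r)\epsilon)/\alpha^x$ with $\alpha^x=\sum_{r\in\mathcal R}\exp(-L_q^x(r)\epsilon)$ and $L_q^x(r)$ is the index of the layer of $x$ containing $r$ — satisfies the inequality of Definition \ref{definition-abstract-dp} with $\exp(\epsilon)$ replaced by $\exp(2\epsilon)$. Since $\mathcal R$ is discrete and, for each $x$, the layers $\{\mathcal R_i^x\}_i$ partition $\mathcal R$, it suffices to bound the pointwise likelihood ratio $q^x(r)/q^{x'}(r)\le\exp(2\epsilon)$ for every pair of neighbours $x,x'$ and every $r\in\mathcal R$; summing over $r\in S$ then gives $\Pr[q^x\in S]\le\exp(2\epsilon)\Pr[q^{x'}\in S]$ for all $S\in\mathcal B$.

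The structural input I would establish first is the containment of Theorem \ref{lemma-4}(3), but derived from the construction rule (\ref{equation-7}) alone rather than from differential privacy: for every $x\in\mathcal D$ and every $i\in\bar{\mathbb N}$,
\[\mathcal A_i^x\supseteq\cup_{y\in\mathcal D:\bar d(x,y)\le i}\mathcal R_0^y.\]
This follows by induction on $i$. The case $i=0$ is immediate since $\mathcal A_0^x=\mathcal R_0^x$. For the step, (\ref{equation-7}) yields $\mathcal R_{i+1}^x\supseteq\left(\cup_{y:i<\bar d(x,y)\le i+1}\mathcal R_0^y\right)-\mathcal A_i^x$, hence $\mathcal A_{i+1}^x=\mathcal A_i^x\cup\mathcal R_{i+1}^x\supseteq\mathcal A_i^x\cup\left(\cup_{y:i<\bar d(x,y)\le i+1}\mathcal R_0^y\right)$; together with the inductive hypothesis and the fact that each $y$ with $\bar d(x,y)\le i+1$ lies in $\{\bar d(x,\cdot)\le i\}$ or in the shell $\{i<\bar d(x,\cdot)\le i+1\}$, this closes the induction.

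Next I would use this to prove the decisive layer-proximity estimate: for neighbours $x,x'$ and all $k\in\bar{\mathbb N}$, $\mathcal A_k^{x'}\subseteq\mathcal A_{k+1}^x$, whence, applying the same with the roles of $x,x'$ reversed, $|L_q^x(r)-L_q^{x'}(r)|\le 1$ for every $r$. Arguing by induction on $k$: the base $k=0$ is $\mathcal R_0^{x'}\subseteq\mathcal A_1^x$, which holds by the containment above since $\bar d(x,x')\le 1$; for the step, write $\mathcal A_{k+1}^{x'}=\mathcal A_k^{x'}\cup\mathcal R_{k+1}^{x'}$, bound $\mathcal A_k^{x'}\subseteq\mathcal A_{k+2}^x$ by the hypothesis, and split $\mathcal R_{k+1}^{x'}$ via (\ref{equation-7}) into the forced part $\left(\cup_{y:k<\bar d(x',y)\le k+1}\mathcal R_0^y\right)-\mathcal A_k^{x'}$ and the free part $\tilde{\mathcal R}_{k+1}^{x'}$; every $y$ in the forced part has $\bar d(x,y)\le k+2$ by the triangle inequality, so the containment above places it inside $\mathcal A_{k+2}^x$. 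Granting $|L_q^x(r)-L_q^{x'}(r)|\le 1$, one gets $\alpha^x=\sum_r\exp(-L_q^x(r)\epsilon)\le\exp(\epsilon)\sum_r\exp(-L_q^{x'}(r)\epsilon)=\exp(\epsilon)\alpha^{x'}$, and symmetrically $\alpha^{x'}\le\exp(\epsilon)\alpha^x$, so that
\[\frac{q^x(r)}{q^{x'}(r)}=\frac{\alpha^{x'}}{\alpha^x}\exp\!\big((L_q^{x'}(r)-L_q^x(r))\epsilon\big)\le\exp(\epsilon)\exp(\epsilon)=\exp(2\epsilon),\]
which is $2\epsilon$-differential privacy.

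The step I expect to be the crux, and the likely obstacle, is the free part $\tilde{\mathcal R}_{k+1}^{x'}$ in the layer-proximity induction: rule (\ref{equation-7}) propagates only the zeroth layers $\mathcal R_0^y$ between datasets, so on its own it controls the forced parts of the layers but says nothing about where the excess sets $\tilde{\mathcal R}_i^x$ of a general mechanism (Definition \ref{definition-1}) may sit relative to a neighbour's layers. For a basic mechanism ($\mathcal R_{I^x}^x=\emptyset$ for all $x$) the free parts are empty, the induction closes, and the argument above delivers $2\epsilon$-differential privacy outright; in the general case one would need an additional hypothesis — implicit, presumably, in the demand that $\{\mathcal R_0^x,\mathcal R_{I^x}^x\}$ be reconstructible into a genuine layer partition — forcing each $\tilde{\mathcal R}_i^x$ to meet only layers within distance one of the neighbours', and supplying this is where the real work lies.
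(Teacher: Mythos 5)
The paper states Proposition~\ref{proposition-3} with no proof at all, so there is nothing of the authors' to compare you against; your argument is the natural reconstruction, and it parallels the computations the paper does give for Theorem~\ref{theorem-approximation} (bounding the pointwise ratio $q^x(r)/q^{x'}(r)$ by one factor $\exp(\epsilon)$ from the layer gap and one from the normalizers) and for Corollary~\ref{corollary-1} (the identity $\mathcal A_i^x=\cup_{y:\bar d(x,y)\le i}\mathcal R_0^y$ for basic mechanisms). For basic mechanisms your induction closes and the proof is complete and correct.

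The obstacle you flag for general mechanisms is not a defect of your write-up but a genuine gap in the proposition itself: rule (\ref{equation-7}) places no constraint on where a point outside every $\mathcal R_0^y$ may sit in the free parts $\tilde{\mathcal R}_i^x$ of two neighbouring datasets, and this can be exploited. Concretely, take $\mathcal D=\{x,x'\}$ with $\bar d(x,x')=1$ and $\mathcal R=\{a,b,c\}$ with $\mathcal R_0^x=\{a\}$, $\mathcal R_0^{x'}=\{b\}$; the forced parts put $b\in\mathcal R_1^x$ and $a\in\mathcal R_1^{x'}$, but $c$ is unconstrained, so setting $\tilde{\mathcal R}_1^{x'}=\{c\}$ and $\tilde{\mathcal R}_{10}^{x}=\{c\}$ yields a member of $\mathbb C$ with $L_q^{x}(c)-L_q^{x'}(c)=9$, and the ratio $q^{x'}(c)/q^{x}(c)$ exceeds $\exp(2\epsilon)$ (the normalizers are bounded between fixed constants, so they cannot absorb the factor $\exp(9\epsilon)$). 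Hence the proposition as literally stated is false for general mechanisms; it holds only under an additional hypothesis of exactly the kind you describe, namely that each $\tilde{\mathcal R}_i^x$ meets only layers of each neighbour within distance one of $i$ --- a condition that is automatic for set sequences arising as discretizations of genuine $\epsilon$-differentially private mechanisms (by Theorem~\ref{lemma-4}(1)) but is not implied by membership in $\mathbb C$. You have in effect shown that $\mathbb C$ is strictly larger than the class for which the paper's claim, and its subsequent use of $\mathbb C$ as a surrogate for $\bar{\mathbb B}$ in (\ref{equation-24}) and (\ref{equation-30}), is justified.
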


For the basic mechanisms, we have the following corollary.

\begin{corollary}  \label{corollary-1}
Let $\{p^x(r):x\in \mathcal D\}$ and $\{\mathcal R^x_i: i\in \bar{\mathbb N}, x\in \mathcal D\}$ be as shown in Definition \ref{definition-discretization}. %Set $\mathcal A_{i}^x= \cup_{j=0}^i\mathcal R_{j}^x$ for each $x\in \mathcal D$ and each $i \in \bar{\mathbb N}$.
If $\{p^x(r):x\in \mathcal D\}$ is a basic mechanism and satisfies $\epsilon$-differential privacy, then
\begin{enumerate}
\item for any two neighbors $x,x'\in \mathcal D$, $\mathcal R^x_i\cap \mathcal R^{x'}_j \ne \emptyset$ only if $|i-j|\le 1$;
\item there is
\begin{align}  \label{equation-13}
\mathcal R_i^x=\cup_{x'\in \mathcal D:0<\bar d(x,x')\le 1}\mathcal R_{i-1}^{x'}-\mathcal A_{i-1}^x
\end{align}
for each $x\in \mathcal D$ and each $i \in \mathbb N$;
\item there is
\begin{align} \label{equation-14}
\mathcal A_{i}^x = \cup_{y\in \mathcal D: \bar d(x,y) \le i}\mathcal R_{0}^y
\end{align}
for each $x\in \mathcal D$ and each $i \in\bar{ \mathbb N}$;
\item there is
\begin{align}  \label{equation-15}
\mathcal R_i^x= \cup_{y\in \mathcal D: i-1 <\bar d(x,y) \le i}\mathcal R_{0}^y -\mathcal A_{i-1}^x
\end{align}
for each $x\in \mathcal D$ and each $i \in \mathbb N$.
\end{enumerate}
\end{corollary}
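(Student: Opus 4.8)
The plan is to deduce the entire corollary from Theorem~\ref{lemma-4} together with the single extra input that, by Definition~\ref{definition-1}, a \emph{basic} mechanism is one with $\mathcal R^x_{I^x}=\emptyset$ for every $x$, i.e. every set $\tilde{\mathcal R}^x_i$ of~(\ref{equation-31}) is empty. Claim~1 is then immediate: the assertion that $\mathcal R^x_i\cap\mathcal R^{x'}_j\ne\emptyset$ only when $|i-j|\le1$ uses only $\epsilon$-differential privacy, so it is verbatim claim~1 of Theorem~\ref{lemma-4}. For claim~4, unwinding the definition~(\ref{equation-31}), the emptiness of $\tilde{\mathcal R}^x_i$ reads
\[
\mathcal R_i^x \subseteq \cup_{y\in\mathcal D: i-1<\bar d(x,y)\le i}\mathcal R_0^y - \mathcal A_{i-1}^x
\]
for all $x\in\mathcal D$ and $i\in\mathbb N$; combining this with the reverse containment~(\ref{equation-11}) supplied by Theorem~\ref{lemma-4}(4) upgrades it to the equality~(\ref{equation-15}).

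Claim~3 then follows by induction on $i$ from claim~4. For $i=0$ one has $\mathcal A_0^x=\mathcal R_0^x$ and $\{y:\bar d(x,y)\le0\}=\{x\}$, so~(\ref{equation-14}) holds. For the step, the layers $\mathcal R^x_j$ are pairwise disjoint by~(\ref{equation-8}), hence $\mathcal A_i^x=\mathcal A_{i-1}^x\cup\mathcal R_i^x$; substituting claim~4 for $\mathcal R_i^x$ and using $A\cup(B-A)=A\cup B$ gives $\mathcal A_i^x=\mathcal A_{i-1}^x\cup\big(\cup_{y: i-1<\bar d(x,y)\le i}\mathcal R_0^y\big)$, which equals $\cup_{y:\bar d(x,y)\le i}\mathcal R_0^y$ by the induction hypothesis.

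For claim~2 the inclusion ``$\supseteq$'' in~(\ref{equation-13}) is claim~2 of Theorem~\ref{lemma-4}, so only ``$\subseteq$'' remains. The idea is to rewrite both sides in terms of the $\mathcal R_0^y$'s and compare. Applying claim~4 to each neighbor $x'$ of $x$, and noting that $\mathcal A_{i-2}^{x'}\subseteq\mathcal A_{i-1}^x$ whenever $\bar d(x,x')\le1$ (by claim~3 and the triangle inequality, since $\bar d(x',z)\le i-2$ forces $\bar d(x,z)\le i-1$), one obtains $\cup_{x':0<\bar d(x,x')\le1}\mathcal R_{i-1}^{x'}-\mathcal A_{i-1}^x=\big(\cup_{x':0<\bar d(x,x')\le1}\cup_{z: i-2<\bar d(x',z)\le i-1}\mathcal R_0^z\big)-\mathcal A_{i-1}^x$, while $\mathcal R_i^x=\cup_{y: i-1<\bar d(x,y)\le i}\mathcal R_0^y-\mathcal A_{i-1}^x$ by claim~4. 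Matching the two descriptions, the only real content is: given $r\in\mathcal R_0^y$ with $i-1<\bar d(x,y)\le i$ and $r\notin\mathcal A_{i-1}^x$, exhibit a neighbor $x'$ of $x$ with $i-2<\bar d(x',y)\le i-1$, after which $z=y$ does the job.

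The lower bound $\bar d(x',y)>i-2$ is automatic from the triangle inequality, but the upper bound $\bar d(x',y)\le i-1$ is the genuine obstacle: it is not a formal consequence of $\epsilon$-differential privacy and really requires that a single neighboring step can strictly decrease distance to an arbitrary target dataset. This ``one step toward $y$'' property holds for the histogram metrics $\bar d=\|\cdot\|_1$ and $\bar d=\tfrac12\|\cdot\|_1$ of Section~\ref{subsection:instance-interpretation}, and it is the same structural assumption invoked by the Group Privacy proposition; I would either carry it as a standing hypothesis on $(\mathcal D,\bar d)$ or state claim~2 only for that setting. Note that for $i=1$ no assumption is needed, since there $\{y:0<\bar d(x,y)\le1\}$ is exactly the neighbor set of $x$ and claim~2 literally coincides with claim~4.
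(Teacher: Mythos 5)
The paper states this corollary without proof, treating it as immediate from Theorem~\ref{lemma-4} and the definition of a basic mechanism; your derivation is exactly that intended route (claim~1 verbatim, claim~4 from $\tilde{\mathcal R}^x_i=\emptyset$ plus the inclusion~(\ref{equation-11}), claim~3 by induction using disjointness of layers, claim~2 reduced to claim~4), and all four steps check out. Your one substantive addition is the observation that the ``$\subseteq$'' direction of~(\ref{equation-13}) for $i\ge 2$ needs the existence, for each $y$ with $i-1<\bar d(x,y)\le i$, of a neighbor $x'$ of $x$ with $\bar d(x',y)\le i-1$, and that this is a structural hypothesis on $(\mathcal D,\bar d)$ rather than a consequence of the metric axioms or of $\epsilon$-differential privacy. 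This is a genuine and correct point: a two-point space with $\bar d(x,y)=2$ gives a basic mechanism violating~(\ref{equation-13}) at $i=2$. It is worth noting that the paper already relies on the same tacit assumption in its proof of Theorem~\ref{lemma-4}(3), where the step labeled $\supseteq_c$ is attributed to the ``triangular inequality'' but in fact requires precisely this one-step-toward-$y$ property (the triangle inequality gives only the reverse comparison); it is also the unstated hypothesis behind the Group Privacy proposition, and it does hold for the histogram metrics of Section~\ref{subsection:instance-interpretation}. Carrying it as a standing hypothesis, as you propose, is the right fix.
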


\subsection{Representation of Optimal Mechanism Problems} \label{subsection:representation-utilities}

For the function $f$, let
\begin{equation}
\begin{aligned}
\bar {\mathbb B}= & \resizebox{.95\hsize}{!}{$     \left\{\{\mathcal R^x_i: i\in \bar{\mathbb N}, x\in \mathcal D\}:  \{\mathcal R^x_i: i\in \bar{\mathbb N}, x\in \mathcal D\} \mbox{ is the discretization of a mechanism in } \mathbb B \right\} 
$}\\
:=_a& \resizebox{.95\hsize}{!}{$        \left\{\{\mathcal R_0^x, \mathcal R^x_{I^x}: x\in \mathcal D\}: \{\mathcal R_0^x, \mathcal R^x_{I^x}: x\in \mathcal D\} \mbox{ is the initial values of a mechanism in } \bar {\mathbb B}\right\}    $}
\end{aligned} \label{equation-29}
\end{equation} 
\noindent
denote the universe of the discretized mechanisms of the mechanisms in $\mathbb B$, where the equality $=_a$ is due to Theorem \ref{theorem-representation}.
Then, the set $\bar{\mathbb B}$  would be an approximation of the set $\mathbb B$ and shows a beautiful structure of $\mathbb B$.
By Theorem \ref{theorem-approximation} and Theorem  \ref{theorem-representation}, the Pareto optimal mechanism problem (\ref{equation-23})
can be approximated by the Pareto optimal mechanism problem 
\begin{align} \label{equation-27}
\min_{\{\mathcal R_0^x, \mathcal R^x_{I^x}: x\in \mathcal D\} \in \bar{\mathbb B}}\{ \bar P^x: x \in \mathcal D\},
\end{align}
where
\begin{align}  \label{equation-18}
\bar P^x=\mathbb E_q d(f(x),r) =\frac{\sum_{i=0}^{\infty} \exp(-i\epsilon)\sum_{r\in \mathcal R_i^x} d(f(x),r)}{ \sum_{i=0}^{\infty}\exp(-i\epsilon)\left| \mathcal R_i^x\right| }.
\end{align}
Furthermore, 
let the set 
\begin{equation}
\begin{aligned}
\mathbb C=& \left\{\{\mathcal R^x_i: i\in \bar{\mathbb N}, x\in \mathcal D\}: \{\mathcal R^x_i: i\in \bar{\mathbb N}, x\in \mathcal D\} \mbox{ satisfies (\ref{equation-7})}\right\} \\
:=& \resizebox{.95\hsize}{!}{$        \left\{\{\mathcal R_0^x, \mathcal R^x_{I^x}: x\in \mathcal D\}: \{\mathcal R_0^x, \mathcal R^x_{I^x}: x\in \mathcal D\} \mbox{ is the initial values of a set sequences in } \mathbb C\right\}    $}
\end{aligned} \label{equation-28}
\end{equation}
denote the universe of the set sequences $\{\mathcal R^x_i: i\in \bar{\mathbb N}, x\in \mathcal D\} $ satisfying (\ref{equation-7}), where
\begin{align} 
\tilde{\mathcal R}_i^x \cap \left(\cup_{y\in \mathcal D: i-1 <\bar d(x,y) \le i}\mathcal R_{0}^y-\mathcal A_{i-1}^x \right) =\emptyset.
\end{align}
Then, there is $\mathbb C \supseteq \bar{\mathbb B}$. Unfortunately, we are unknown whether there is $\mathbb C =\bar{\mathbb B}$. Nevertheless, $\mathbb C$ is a good approximation of $\bar{\mathbb B}$ by Proposition \ref{proposition-3} and therefore the optimization problem (\ref{equation-23}) can be approximated by the optimization problem
\begin{align} \label{equation-24}
\min_{\{\mathcal R_0^x, \mathcal R^x_{I^x}: x\in \mathcal D\} \in \mathbb C}\{ \bar P^x: x \in \mathcal D\}.
\end{align}
Clearly, the problem (\ref{equation-24}) is much more operational than the problem (\ref{equation-23}) since the set $\mathbb C$ is known.
%$\{\mathcal R_0^x, \mathcal R^x_{I^x}: x\in \mathcal D\}$ are free parameters, i.e., each parameter is independent to other parameters in $\{\mathcal R_0^x, \mathcal R^x_{I^x}: x\in \mathcal D\}$,  but $\{p^x(r):  x\in \mathcal D\}$ are not.

Similarly, the optimal mechanism problem (\ref{equation-25}) can be approximated by the optimization problem
\begin{align} \label{equation-30}
\min_{\{\mathcal R_0^x, \mathcal R^x_{I^x}: x\in \mathcal D\}\in \mathbb C}\bar P,
\end{align}
where 
\begin{align}  
\bar P=\mathbb E \bar P^x=\sum_{x\in \mathcal D} p(x)\frac{\sum_{i=0}^{\infty} \exp(-i\epsilon)\sum_{r\in \mathcal R_i^x} d(f(x),r)}{ \sum_{i=0}^{\infty}\exp(-i\epsilon)\left| \mathcal R_i^x\right| }.
\end{align}

\section{Analytic Construction of Mechanisms}\label{section-mechanism-design}

The above section shows that, for the query function $f: \mathcal D \rightarrow \mathcal R$, each mechanism and then the utilities are completely determined by the parameters $\{\mathcal R_0^x, \mathcal R^x_{I^x}: x\in \mathcal D\}$, and that the optimal mechanisms can be approximately evaluated through evaluating (\ref{equation-24}) and (\ref{equation-30}).
Now we discuss the changing rules of utilities when tuning these parameters within the set $\mathbb C$.
Another work of this section is to classify the mechanisms according to different settings of these parameters.

\subsection{The Basic Mechanisms}  \label{subsection:construction-basic-mechanisms}

In this section, we consider the basic mechanisms, i.e., the setting $\mathcal R^x_{I^x}=\emptyset$ for all $x\in \mathcal D$ by Definition \ref{definition-1}.
Notice that, by the construction of the set sequences $\{\mathcal R_i^x\}_{x,i}$ in Definition \ref{definition-discretization}, the set $\mathcal R_0^x$ contains those points with the highest outputting probabilities. Therefore, in order to obtain better utility, at least the point $f(x)$ should be included in that set, i.e., there should be $f(x) \in \mathcal R_0^x$ for each $x\in \mathcal D$. We are more interested in the case where $\{f(x) \}= \mathcal R_0^x$ for all $x\in \mathcal D$, which may be the simplest mechanism since the probability distribution $q^x(r)$ is completely determined only by a point $f(x)$.

\begin{definition}[Purest Mechanism]
If $\mathcal R_0^x=\{f(x) \}$ and $\mathcal R_{I^x}^x=\emptyset$ for all $x\in \mathcal D$, then the mechanism $\{\mathcal R_0^x, \mathcal R^x_{I^x}: x\in \mathcal D\}$ is called the purest mechanism for the query function $f:\mathcal D \rightarrow \mathcal R$.
\end{definition}

The purest mechanism has the following interesting property.

\begin{proposition}  \label{proposition-1}
For the purest mechanism $\{\mathcal R^x_i: i\in \bar{\mathbb N}, x\in \mathcal D\}$, there are
\begin{align}
\mathcal R^x_i=\{f(y): i-1 <\bar d(x,y)\le i\} - \{f(y): \bar d(x,y)\le i-1\}
\end{align}
for $ i\in \mathbb N$.
Especially, if $f$ is a strictly monotonic function, then there are  $\mathcal R^x_i=\{f(y): i-1 <\bar d(x,y)\le i\}$ for $ i\in \mathbb N$.
\end{proposition}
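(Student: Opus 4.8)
The plan is to derive the formula for $\mathcal R_i^x$ directly from the recursive construction rule for basic mechanisms, namely equation (\ref{equation-15}) in Corollary \ref{corollary-1}, specialized to the purest mechanism where $\mathcal R_0^y=\{f(y)\}$ for every $y\in\mathcal D$. First I would substitute $\mathcal R_0^y=\{f(y)\}$ into (\ref{equation-15}), which immediately gives
\begin{align*}
\mathcal R_i^x=\bigcup_{y\in\mathcal D:\,i-1<\bar d(x,y)\le i}\{f(y)\}-\mathcal A_{i-1}^x=\{f(y): i-1<\bar d(x,y)\le i\}-\mathcal A_{i-1}^x.
\end{align*}
So the only remaining task is to identify $\mathcal A_{i-1}^x$. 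By equation (\ref{equation-14}) of Corollary \ref{corollary-1}, $\mathcal A_{i-1}^x=\bigcup_{y\in\mathcal D:\,\bar d(x,y)\le i-1}\mathcal R_0^y=\{f(y):\bar d(x,y)\le i-1\}$, which is exactly the set being subtracted in the statement. Combining these two substitutions yields the first claimed formula for all $i\in\mathbb N$.

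For the monotonic case, I would argue that the subtracted set $\{f(y):\bar d(x,y)\le i-1\}$ contributes nothing new, i.e. that it is disjoint from $\{f(y): i-1<\bar d(x,y)\le i\}$, so the set difference equals the first set. Suppose $r$ lies in both sets: then $r=f(y_1)=f(y_2)$ with $\bar d(x,y_1)\le i-1$ and $i-1<\bar d(x,y_2)\le i$. Since $f$ is strictly monotonic (Definition \ref{definition-monotonic-function}), from $\bar d(x,y_2)>\bar d(x,y_1)$ we get $d(f(x),f(y_2))>d(f(x),f(y_1))$; but $f(y_1)=f(y_2)$ forces $d(f(x),f(y_1))=d(f(x),f(y_2))$, a contradiction. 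Hence the two sets are disjoint and $\mathcal R_i^x=\{f(y): i-1<\bar d(x,y)\le i\}$.

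The main subtlety — really the only place requiring care — is making sure Corollary \ref{corollary-1} genuinely applies, i.e. that setting $\mathcal R_0^x=\{f(x)\}$ and $\mathcal R_{I^x}^x=\emptyset$ does define a legitimate $\epsilon$-differentially private basic mechanism so that (\ref{equation-14})–(\ref{equation-15}) hold as equalities rather than mere inclusions. Here I would invoke Proposition \ref{proposition-3}: any set sequence reconstructed from its initial values through the construction rule (\ref{equation-7}) — which for a basic mechanism reduces to (\ref{equation-15}) — automatically satisfies $2\epsilon$-differential privacy, so the purest mechanism is well-defined and the recursive equalities are exactly its defining equations. Strictly speaking one should also check the base case $i=0$ (trivially $\mathcal R_0^x=\{f(x)\}=\mathcal A_0^x$) and note that the induction underlying (\ref{equation-14}) goes through verbatim. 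I expect no genuine obstacle; the statement is essentially an unwinding of the recursion, with the monotonicity hypothesis used only to collapse the set difference in the second assertion.
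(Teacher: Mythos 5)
Your proposal is correct and follows essentially the same route as the paper: the paper also derives the first claim directly from Corollary \ref{corollary-1} by substituting the purest mechanism's initial values, and obtains the second claim from the disjointness of $\{f(y): i-1 <\bar d(x,y)\le i\}$ and $\{f(y): \bar d(x,y)\le i-1\}$ under strict monotonicity (equation (\ref{equation-17})). Your write-up merely fills in the details the paper leaves implicit, including the explicit contradiction argument for disjointness and the remark on why the corollary's equalities apply.
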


\begin{proof}
The first claim is an immediate corollary of Corollary \ref{corollary-1}. The second claim holds since there is
\begin{align}  \label{equation-17}
\{f(y): i-1 <\bar d(x,y)\le i\} \cap \{f(y): \bar d(x,y)\le i-1\} =\emptyset
\end{align}
if $f$ is strictly monotonic by Definition \ref{definition-monotonic-function}.
\qed
\end{proof}

Proposition \ref{proposition-1} shows a very interesting phenomenon: To the dataset $x$, the $i$th layer $\mathcal R_i^x$ of the probability distribution $q^x(r)$ is (approximately, i.e., if the equation (\ref{equation-17}) holds) the set of values of $f$ over $\mathcal N_i^x$, where $\mathcal N_i^x$ is just the set of datasets $y$ whose distances to $x$ satisfy $i-1 <\bar d(x,y)\le i$. This phenomenon is especially useful to understand the structure of differential privacy displayed in Theorem \ref{lemma-4}, Theorem  \ref{theorem-representation} and Corollary \ref{corollary-1}. It can be considered as a microcosm of the structure.

The following definition, the atomic mechanism, presents a generalization to the purest mechnism. 
%We now give a generalization of the purest mechanism: the atomic mechanism. 
It is atomic since it can't be split into more slim mechanisms but, on the other hand, can be used to generate other mechanisms.

\begin{definition}[Atomic Mechanism]
If each initial value set $\mathcal R_0^x$ only contains one point in $\mathcal R$ for all $x\in \mathcal D$, then the basic mechanism $\{\mathcal R_0^x:x\in \mathcal D\}$ is called an atomic mechanism.
\end{definition}

We now discuss how to derive other basic mechanisms from the purest mechanism or the atomic mechanisms. Note that the we only need to change the initial values $\{\mathcal R_0^x: x\in \mathcal D\}$ of the atomic mechanisms  or the purest mechanism.
A natural way is to set $\mathcal R_0^x$ to be a $\delta$-neighborhood $N_{\delta}^{f(x)}:= \{r\in \mathcal R: d(f(x),r)\le \delta\}$ of $f(x)$ but not be the singleton set $\{f(x)\}$, where $\delta$ is a small positive number. The reason of the above setting is obvious: Those points $r$ being near to $f(x)$ respect to metric $d$ should obtain higher outputting probabilities in order to obtain better utility by (\ref{equation-18}). 
%This is the second mechanism.

\begin{definition}[$\delta$-neighborhood mechanism]
If $\mathcal R_0^x=N_{\delta^x}^{f(x)}$ and $\mathcal R_{I^x}^x=\emptyset$ for all $x\in \mathcal D$, then the mechanism $\{\mathcal R^x_i: i\in \bar{\mathbb N}, x\in \mathcal D\}$ is called a $\delta$-neighborhood mechanism of the query function $f:\mathcal D \rightarrow \mathcal R$.
\end{definition}

Notice that each dataset $x$ may have different radius $\delta^x$ to its neighborhood $N_{\delta^x}^{f(x)}$.
One misconception is that the more larger of $\delta^x$ the more better utility of the $\delta$-neighborhood mechanism.
This is totally wrong since when one $x$'s $\delta^x$ becomes larger, the utility of other datasets, in general, becomes worse since those points in $\mathcal R_0^x$, in general, are all included in the $i$th layer of the dataset $y$ if $\bar d(x,y)=i$ by Corollary \ref{corollary-1}.
 Therefore, there should be a balance among the radii $\delta^x$'s of the datasets.

We now discuss the changing rules of the set sequences when one dataset's initial values changes. For example, for the purest mechanism $\{\mathcal R_0^x:x\in \mathcal D\}$ and the mechanism $\{\bar{\mathcal R}_0^x:x\in \mathcal D\}$ such that  
$\bar{\mathcal R}^{x_0}_0=\{f(x_0), r_0\}$ and $\bar{\mathcal R}_0^x={\mathcal R}_0^x=\{f(x)\}$ for all $x\in \mathcal D\setminus\{x_0\}$, we may want to known the relation between the two mechanisms' set sequences. We have the following lemma for the purest mechanism.

\begin{lemma}  \label{lemma-5}
For the query function $f$, let $x_0, y_0$ be two datasets such that $f(x_0)\ne r_0$ where $r_0=f(y_0)$.
Let $\{q^x(r):x\in\mathcal D\}$ be the purest mechanism of $f$ and let $\{\mathcal R^x_i: i\in \bar{\mathbb N}, x\in \mathcal D\}$ be its set sequences. Let $\{\bar q^x(r):x\in\mathcal D\}$ be one basic mechanism of $f$ and let $\{\bar{\mathcal R}^x_i: i\in \bar{\mathbb N}, x\in \mathcal D\}$ be its set sequences such that
$\bar{\mathcal R}^{x_0}_0=\{f(x_0), r_0\}$ and $\bar{\mathcal R}_0^x=\{f(x)\}$ for all $x\in \mathcal D\setminus\{x_0\}$.
%Let $\mathrm{No}_q^{x_0}(r_0)=i$, $\mathrm{No}_q^{x}(r_0)=\ell$, $\mathrm{No}_{\bar q}^x(r_0)=k$ and $\mathrm{No}_{\bar q}^x(f(x_0))=j$. Let $\bar d(x, x_0)=s$.
Let $y$ be any  one dataset such that $r\ne r_0$, where $r=f(y)$.
Then, for any dataset $x\in \mathcal D$, there are ${L}_{\bar q}^{x}(r)={L}_q^{x}(r)$ and ${L}_{\bar q}^{x}(r_0)=\min\left\{\lceil\bar d(x_0, x)\rceil, {L}_q^{x}(r_0)\right\}$. 
\end{lemma}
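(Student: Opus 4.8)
The plan is to reduce the statement to a comparison of the cumulative sets $\mathcal A_i^x:=\cup_{j=0}^{i}\mathcal R_j^x$ and $\bar{\mathcal A}_i^x:=\cup_{j=0}^{i}\bar{\mathcal R}_j^x$ of the two mechanisms. First I would note that, since both $\{q^x\}$ and $\{\bar q^x\}$ are basic mechanisms, by the construction rule (\ref{equation-7}) with $\mathcal R^x_{I^x}=\emptyset$ every layer has the form $\mathcal R_i^x=\big(\cup_{y:\,i-1<\bar d(x,y)\le i}\mathcal R_0^y\big)-\mathcal A_{i-1}^x$; in particular $\mathcal R_i^x\cap\mathcal A_{i-1}^x=\emptyset$, so the layers of each mechanism are pairwise disjoint, and their union is $\cup_{y\in\mathcal D}\mathcal R_0^y=\mathcal R$ (the initial-value sets are non-empty subsets of $\mathcal R$, and every $r\in\mathcal R$ equals $f(y)$ for some $y$, hence lies in $\mathcal R_0^y$). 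Hence each $r\in\mathcal R$ lies in exactly one layer — the first one containing it — so that $L_q^x(r)=\min\{i\in\bar{\mathbb N}: r\in\mathcal A_i^x\}$ and $L_{\bar q}^x(r)=\min\{i\in\bar{\mathbb N}: r\in\bar{\mathcal A}_i^x\}$, and it suffices to compare $\bar{\mathcal A}_i^x$ with $\mathcal A_i^x$.

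For that comparison, by Corollary \ref{corollary-1} (equation (\ref{equation-14})) — or directly by induction on (\ref{equation-7}) using $\mathcal A_i^x=\mathcal A_{i-1}^x\cup\mathcal R_i^x$ — a basic mechanism satisfies $\mathcal A_i^x=\cup_{y\in\mathcal D:\bar d(x,y)\le i}\mathcal R_0^y$. For the purest mechanism this gives $\mathcal A_i^x=\{f(y):\bar d(x,y)\le i\}$, and for the modified one $\bar{\mathcal A}_i^x=\cup_{y:\bar d(x,y)\le i}\bar{\mathcal R}_0^y$. I would then split this last union at $y=x_0$, using $\bar{\mathcal R}_0^y=\{f(y)\}=\mathcal R_0^y$ for $y\ne x_0$ and $\bar{\mathcal R}_0^{x_0}=\{f(x_0),r_0\}$: if $\bar d(x,x_0)>i$ the term $y=x_0$ is absent and $\bar{\mathcal A}_i^x=\mathcal A_i^x$; if $\bar d(x,x_0)\le i$ the value $f(x_0)$ is still recovered from $\bar{\mathcal R}_0^{x_0}$, so only the extra point $r_0$ is gained and $\bar{\mathcal A}_i^x=\mathcal A_i^x\cup\{r_0\}$. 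Since $\bar d(x,x_0)\le i\iff i\ge\lceil\bar d(x_0,x)\rceil$ for $i\in\bar{\mathbb N}$, this reads $\bar{\mathcal A}_i^x=\mathcal A_i^x$ when $i<\lceil\bar d(x_0,x)\rceil$ and $\bar{\mathcal A}_i^x=\mathcal A_i^x\cup\{r_0\}$ when $i\ge\lceil\bar d(x_0,x)\rceil$.

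The conclusion is then immediate. If $r\ne r_0$, then $r\in\bar{\mathcal A}_i^x\iff r\in\mathcal A_i^x$ for every $i$, so the least such indices coincide and $L_{\bar q}^x(r)=L_q^x(r)$. If $r=r_0$, then $r_0\in\bar{\mathcal A}_i^x\iff (r_0\in\mathcal A_i^x$ or $i\ge\lceil\bar d(x_0,x)\rceil)$; since $\min\{i:r_0\in\mathcal A_i^x\}=L_q^x(r_0)$, taking the least $i$ yields $L_{\bar q}^x(r_0)=\min\{\lceil\bar d(x_0,x)\rceil, L_q^x(r_0)\}$, which is the asserted formula (and it also covers the degenerate case $x=x_0$, where $\lceil\bar d(x_0,x)\rceil=0$).

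The computations above are routine; I expect the only point needing care to be the use of the identity $\mathcal A_i^x=\cup_{y:\bar d(x,y)\le i}\mathcal R_0^y$ for $\{\bar q^x\}$, which Proposition \ref{proposition-3} only guarantees to be $2\epsilon$-differentially private, whereas Corollary \ref{corollary-1} is phrased for $\epsilon$-differentially private basic mechanisms. This is harmless, since that identity is a purely combinatorial consequence of the construction rule (\ref{equation-7}) and uses no privacy inequality; everything else is bookkeeping (disjointness and covering of the constructed layers, and the boundary case where $\bar d(x_0,x)$ is an integer).
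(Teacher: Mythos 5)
Your proof is correct and rests on the same key input as the paper's, namely the closed form for the layers of a basic mechanism from Corollary \ref{corollary-1}. The only difference is bookkeeping: the paper argues directly on the layers $\mathcal R_i^x$ and does a three-way case analysis on the candidate layers $\lceil\bar d(x_0,x)\rceil$, $\lceil\bar d(y_0,x)\rceil$, $L_q^x(r_0)$ for where $r_0$ can land, whereas you pass to the cumulative sets $\mathcal A_i^x=\cup_{y:\bar d(x,y)\le i}\mathcal R_0^y$ and observe that $\bar{\mathcal A}_i^x$ differs from $\mathcal A_i^x$ by at most the single point $r_0$, gained exactly when $i\ge\lceil\bar d(x_0,x)\rceil$; the $\min$ formula then drops out of the characterization of $L$ as the first index of membership. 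This is a tidier route to the same conclusion, and your remark that the identity $\mathcal A_i^x=\cup_{y:\bar d(x,y)\le i}\mathcal R_0^y$ needs no privacy hypothesis (only the construction rule with $\mathcal R^x_{I^x}=\emptyset$) correctly disposes of the only point where the hypotheses of Corollary \ref{corollary-1} might be questioned.
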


\begin{proof}
We first prove the equality $L_{\bar q}^{x}(r)=L_q^{x}(r)$. Recall that, by Corollary \ref{corollary-1},  there are
\begin{align}    \label{equation-21}
\mathcal R_i^x= \cup_{y\in \mathcal D: i-1 <\bar d(x,y) \le i}\mathcal R_{0}^y - \cup_{y\in \mathcal D: \bar d(x,y) \le i-1}\mathcal R_{0}^y
\end{align}
and 
\begin{align}  \label{equation-22}
\bar{\mathcal R}_i^x= \cup_{y\in \mathcal D: i-1 <\bar d(x,y) \le i}\bar{\mathcal R}_{0}^y -\cup_{y\in \mathcal D: \bar d(x,y) \le i-1}\bar{\mathcal R}_{0}^y
\end{align}
for each $x\in \mathcal D$ and each $i \in \mathbb N$.
%, where $\bar{\mathcal A}_{i}^x=\cup_{j=0}^i\bar{\mathcal R}_j^x$. 
Then, by combining the settings of $\mathcal R_0^x$ and $ \bar{\mathcal R}_0^x$, if $r\in \mathcal R_i^x$ and $r\ne r_0$, we have $r\in \bar{\mathcal R}_i^x$, which ensures the equality $L_{\bar q}^{x}(r)=L_q^{x}(r)$. 

Now we prove the equality $L_{\bar q}^{x}(r_0)=\min\left\{\lceil\bar d(x_0, x)\rceil,L_q^{x}(r_0)\right\}$. 
%Note that $\mathrm{No}_{\bar q}^{x}(r_0)=\min\{\lceil\bar d(x_0, x)\rceil,\mathrm{No}_q^{x}(r_0)\}$
Denote $i_0=\lceil\bar d(x_0, x)\rceil $, $j_0=\lceil\bar d(y_0, x)\rceil $ and $k_0=L_q^{x}(r_0)$.  
By (\ref{equation-22}), 
there are totally three possible layers of $\bar q^x(r)$ in which the point $r_0$ can be: the $i_0$th, the $j_0$th or the $k_0$th layer of $\bar q^x(r)$. 
Note that there is $k_0\le j_0$ by the equality (\ref{equation-21}), which implies that $r_0$ can't be in the $j_0$th layer. Then there leave two possible cases: $i_0 \le k_0$ and $i_0>k_0$.
First, assume $i_0 \le k_0$. Then the point $r_0$ must be in the $i_0$th layer of $\bar q^x(r)$ by the equality (\ref{equation-22}). Next, assume $i_0>k_0$. Then the point $r_0$ must be in the $k_0$th layer of $\bar q^x(r)$ by the equality (\ref{equation-22}).
In all, there is $L_{\bar q}^{x}(r_0)=\min\{\lceil\bar d(x_0, x)\rceil,L_q^{x}(r_0)\}$.

The proof is complete.
\qed

\end{proof}

Applying the proof techniques of Lemma \ref{lemma-5} to other basic mechanisms, we have the following theorem.

\begin{theorem}  \label{corollary-2}
For the query function $f$, let $x_0, y_0$ be two datasets such that $f(x_0)\ne r_0$ where $r_0=f(y_0)$.
Let $\{q^x(r):x\in\mathcal D\}, \{\bar q^x(r):x\in\mathcal D\}$ be two basic mechanisms of $f$ and let $\{\mathcal R^x_i: i\in \bar{\mathbb N}, x\in \mathcal D\}, \{\bar{\mathcal R}^x_i: i\in \bar{\mathbb N}, x\in \mathcal D\}$ be their set sequences, respectively.
Assume $\mathcal R^x_{0}=\bar{\mathcal R}^x_{0}$ for all $x\in \mathcal D\setminus\{x_0\}$.
Assume $\bar{\mathcal R}^{x_0}_{0}=\mathcal R^{x_0}_{0}\cup\{r_0\}$, where $r_0\notin \mathcal R^{x_0}_{0}$.
Let $y$ be one dataset such that $r\ne r_0$, where $r=f(y)$.
Then there are $L_{\bar q}^{x}(r)=L_q^{x}(r)$ and $L_{\bar q}^{x}(r_0)=\min\left\{\lceil\bar d(x_0, x)\rceil,L_q^{x}(r_0)\right\}$.
\end{theorem}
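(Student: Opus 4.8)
The plan is to mirror the argument of Lemma~\ref{lemma-5} almost verbatim, the only difference being that the base layer $\mathcal R^{x_0}_0$ is now an arbitrary nonempty subset of $\mathcal R$ rather than the singleton $\{f(x_0)\}$. Since both $\{q^x(r):x\in\mathcal D\}$ and $\{\bar q^x(r):x\in\mathcal D\}$ are basic mechanisms, Corollary~\ref{corollary-1} gives the explicit construction rule
\begin{align*}
\mathcal R_i^x &= \bigcup_{y\in\mathcal D: i-1<\bar d(x,y)\le i}\mathcal R_0^y \;-\; \bigcup_{y\in\mathcal D:\bar d(x,y)\le i-1}\mathcal R_0^y, \\
\bar{\mathcal R}_i^x &= \bigcup_{y\in\mathcal D: i-1<\bar d(x,y)\le i}\bar{\mathcal R}_0^y \;-\; \bigcup_{y\in\mathcal D:\bar d(x,y)\le i-1}\bar{\mathcal R}_0^y,
\end{align*}
for every $x\in\mathcal D$ and every $i\in\mathbb N$, and these two families of initial values differ only at $x_0$, where $\bar{\mathcal R}^{x_0}_0 = \mathcal R^{x_0}_0\cup\{r_0\}$ with $r_0\notin\mathcal R^{x_0}_0$.

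First I would prove $L_{\bar q}^x(r)=L_q^x(r)$ for a point $r=f(y)$ with $r\neq r_0$. The point is that adding $r_0$ to the single set $\mathcal R^{x_0}_0$ changes, for a fixed $x$, only whether $r_0$ belongs to a given union; for any other point $r\neq r_0$, membership of $r$ in the two unions above is identical, so $r$ lands in the same layer index for $\bar q^x$ as for $q^x$. (One should note that $r\in\mathcal R$ is actually produced by both mechanisms because $f(y)\in\mathcal R^{y}_0$ in the purest case, but here we only need that $r$ has a well-defined layer in each, which follows since $r\in\mathcal R=\{f(x):x\in\mathcal D\}$ and the layers partition $\mathcal R$; I would spell this out from Corollary~\ref{corollary-1}, item~1, which forces the layers of a basic mechanism to be a genuine partition.) This part is essentially bookkeeping.

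Next I would handle $L_{\bar q}^x(r_0)$. Write $i_0=\lceil\bar d(x_0,x)\rceil$ and $k_0=L_q^x(r_0)$; also let $j_0=\lceil\bar d(y_0,x)\rceil$, noting $r_0=f(y_0)\in\mathcal R^{y_0}_0$ so $r_0$ enters the $j_0$th union for $q^x$, whence $k_0\le j_0$ by the construction rule. In $\bar q^x$, the point $r_0$ can only first appear (i.e.\ appear without having appeared at a smaller index) at one of the indices where it enters a base set: either via $\mathcal R^{x_0}_0$'s enlarged copy, contributing at index $i_0$, or via some $\mathcal R^{y}_0$ already present in $q^x$, the earliest of which is $k_0$. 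Since $j_0\ge k_0$ the index $j_0$ is never the first appearance. So the first appearance index is $\min\{i_0,k_0\}$: if $i_0\le k_0$, the enlarged $\bar{\mathcal R}^{x_0}_0$ makes $r_0$ show up at layer $i_0$ (and the subtraction term cannot have removed it earlier, since it was not present in any earlier union); if $i_0>k_0$, then $r_0$ already appears at layer $k_0$ exactly as in $q^x$, and the only effect of the new copy at the later index $i_0$ is harmless. Hence $L_{\bar q}^x(r_0)=\min\{\lceil\bar d(x_0,x)\rceil, L_q^x(r_0)\}$, which is the claim.

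The main obstacle — really the only subtle point — is justifying that the subtraction terms $\bigcup_{\bar d(x,y)\le i-1}\mathcal R^y_0$ do not interfere: one must argue that if $r_0$ does not belong to any union $\bigcup_{\bar d(x,y)\le i_0-1}\bar{\mathcal R}^y_0$ then it genuinely surfaces in layer $i_0$, and conversely that it cannot surface earlier than $\min\{i_0,k_0\}$. I would make this rigorous by observing that, because the $\bar{\mathcal R}^y_0$ differ from the $\mathcal R^y_0$ only by the single element $r_0$ at $y=x_0$, we have $r_0\in\bigcup_{\bar d(x,y)\le m}\bar{\mathcal R}^y_0$ iff $m\ge\min\{i_0,k_0\}$; combining this with $\bar{\mathcal R}^x_m=\bar{\mathcal A}^x_m-\bar{\mathcal A}^x_{m-1}$ immediately pins down the unique layer of $r_0$. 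The rest is the same contradiction-free case split ($i_0\le k_0$ versus $i_0>k_0$) used in the proof of Lemma~\ref{lemma-5}, so I would simply invoke ``applying the proof technique of Lemma~\ref{lemma-5}'' after setting up these two observations, and conclude $\qed$.
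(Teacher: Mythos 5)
Your proposal is correct and follows exactly the route the paper intends: the paper offers no separate proof of this theorem beyond the remark that it follows by ``applying the proof techniques of Lemma~\ref{lemma-5}'' to general basic mechanisms, and your argument is precisely that adaptation, with the added (and welcome) care about the subtraction terms and the characterization of $L^x_{\bar q}(r_0)$ as the first index $m$ with $r_0\in\bar{\mathcal A}^x_m$.
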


The results of Lemma \ref{lemma-5} and Theorem \ref{corollary-2} show that, if $r\ne r_0$, then the layer of $r$ in $\bar q^x(r)$ is equal to its layer in $q^x(r)$.
They also show that the layer of the point $r_0$ is always ascending in each probability distribution $\bar q^x(r)$ compared to its layer in $q^x(r)$. Now we may wonder  what is the effect of the layer ascending of $r_0$ to the utilities $\{P^x:x\in \mathcal D\}$. The following theorem gives the question an answer.

\begin{theorem}  \label{lemma-6}
Let $\{q^x(r):x\in\mathcal D\}$ and $\{\bar q^x(r):x\in\mathcal D\}$ be as shown in Theorem \ref{corollary-2}.
%two basic mechanisms. Let $x_0, y_0$ be two datasets such that $f(x_0)\ne f(y_0)$  and set $r_0=f(y_0)$. For $x\in \mathcal D$, set $\mathrm{No}_q^x(r)= j_0$.
%Let the initial values of $\{q^x(r):x\in\mathcal D\}$ be $\mathcal R_0^x=\{f(x)\}$ for all $x\in \mathcal D$. Let the initial values of $\{\bar q^x(r):x\in\mathcal D\}$ be $\bar{\mathcal R}^{x_0}_0=\{f(x_0), r_0\}$ and $\bar{\mathcal R}_0^x=\{f(x)\}$ for all $x\in \mathcal D\setminus\{x_0\}$.
Let $\mathbb E_q[d(r,f(x))]$, $\mathbb E_{\bar q}[d(r,f(x))]$ be the expected distortion of $\{q^x(r):x\in\mathcal D\}$ and $\{\bar q^x(r):x\in\mathcal D\}$ at $x$, respectively, as defined in (\ref{equation-18}).
Then, for any one dataset $x$, if
\begin{align}  \label{equation-20}
\frac{ \sum_{i=0}^{\infty}\exp(-i\epsilon )\sum_{r \in \mathcal R_i^{x}} d(f(x),r)-\exp(-L_q^{x}(r_0)\epsilon)d(f(x), r_0)}{\sum_{i=0}^{\infty}\exp(-i\epsilon )| \mathcal R_i^{x}|-\exp(-L_q^{x}(r_0)\epsilon)} \le \frac{d(f(x),r_0)}{1},
\end{align}
there is $\mathbb E_q[d(r,f(x))] \le \mathbb E_{\bar q}[d(r,f(x))]$. Otherwise, there is $\mathbb E_q[d(r,f(x))] \ge \mathbb E_{\bar q}[d(r,f(x))]$.
\end{theorem}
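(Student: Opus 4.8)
The plan is to reduce the comparison of $\mathbb E_q[d(r,f(x))]$ and $\mathbb E_{\bar q}[d(r,f(x))]$ to an application of the mediant inequality in Lemma \ref{chap-optimal-mechanism:lemma-1}. By Theorem \ref{corollary-2} the two mechanisms differ only in the position of the single point $r_0$: for every $r\ne r_0$ one has $L_{\bar q}^{x}(r)=L_q^{x}(r)$, while $L_{\bar q}^{x}(r_0)=\min\{\lceil\bar d(x_0,x)\rceil, L_q^{x}(r_0)\}$. Write $k=L_q^{x}(r_0)$ and $k'=L_{\bar q}^{x}(r_0)$, so $k'\le k$. Using the closed form (\ref{equation-18}) for $\bar P^x=\mathbb E_q d(f(x),r)$, and splitting every sum into the contribution of $r_0$ and the contribution of all other points, I would set $a=\sum_{i}\exp(-i\epsilon)\sum_{r\in\mathcal R_i^x}d(f(x),r)-\exp(-k\epsilon)d(f(x),r_0)$ and $b=\sum_{i}\exp(-i\epsilon)|\mathcal R_i^x|-\exp(-k\epsilon)$ for the "everything except $r_0$" numerator and denominator; these are the same for both mechanisms because the layers of all $r\ne r_0$ coincide. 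Then
\begin{align*}
\mathbb E_q[d(r,f(x))]=\frac{a+\exp(-k\epsilon)\,d(f(x),r_0)}{b+\exp(-k\epsilon)},\qquad
\mathbb E_{\bar q}[d(r,f(x))]=\frac{a+\exp(-k'\epsilon)\,d(f(x),r_0)}{b+\exp(-k'\epsilon)}.
\end{align*}

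Next I would treat $t\mapsto g(t)=\dfrac{a+t\,d(f(x),r_0)}{b+t}$ for $t\in(0,\infty)$; by Lemma \ref{chap-optimal-mechanism:lemma-1} (with $\alpha_0=a,\ \beta_0=b,\ \alpha_1=d(f(x),r_0),\ \beta_1=1$), $g$ is increasing when $a/b<d(f(x),r_0)$ and decreasing when $a/b\ge d(f(x),r_0)$, i.e. exactly according to the condition (\ref{equation-20}), whose left-hand side is precisely $a/b$ and whose right-hand side is $d(f(x),r_0)$. Since $\exp(-k'\epsilon)\ge\exp(-k\epsilon)$ (because $k'\le k$), we have $g(\exp(-k'\epsilon))\le g(\exp(-k\epsilon))$ in the decreasing case and $g(\exp(-k'\epsilon))\ge g(\exp(-k\epsilon))$ in the increasing case. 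Reading off which of $g(\exp(-k\epsilon))=\mathbb E_q$ and $g(\exp(-k'\epsilon))=\mathbb E_{\bar q}$ is larger then gives: if (\ref{equation-20}) holds (decreasing regime) then $\mathbb E_q[d(r,f(x))]\le\mathbb E_{\bar q}[d(r,f(x))]$, and otherwise $\mathbb E_q[d(r,f(x))]\ge\mathbb E_{\bar q}[d(r,f(x))]$, which is the claim.

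A few bookkeeping points need care. First, one must check that $b>0$ so that $g$ is well defined on the relevant range; this holds because $b$ counts (with positive weights) all points $r\ne r_0$, and in the nondegenerate situation there is at least one such point — if $\mathcal R=\{r_0\}$ the statement is vacuous. Second, one must verify $a\ge 0$, which is immediate since it is a sum of nonnegative terms $\exp(-i\epsilon)d(f(x),r)$. Third, the case $k'=k$ is trivial (the two expectations are literally equal and both inequalities hold), so one may assume $k'<k$, i.e. $\lceil\bar d(x_0,x)\rceil<k$. I expect the main obstacle to be purely organizational rather than mathematical: making the split of the sums in (\ref{equation-18}) rigorous when $r_0$ may a priori be counted with a different weight in the two mechanisms, and confirming that $a,b$ genuinely do not depend on which mechanism we use — this is exactly where Theorem \ref{corollary-2}'s statement that $L_{\bar q}^x(r)=L_q^x(r)$ for all $r\ne r_0$ is indispensable. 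Once that invariance is pinned down, the monotonicity argument via Lemma \ref{chap-optimal-mechanism:lemma-1} is routine.
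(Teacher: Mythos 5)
Your proposal is correct and is exactly the argument the paper intends (the paper's own proof is a one-line citation of Lemma \ref{chap-optimal-mechanism:lemma-1} and Theorem \ref{corollary-2}): the split into a common part $(a,b)$ shared by both mechanisms --- justified precisely by $L_{\bar q}^x(r)=L_q^x(r)$ for $r\ne r_0$ --- plus the single $r_0$-term, followed by monotonicity of $t\mapsto\bigl(a+t\,d(f(x),r_0)\bigr)/(b+t)$ evaluated at $t=\exp(-k\epsilon)$ versus $t=\exp(-k'\epsilon)$ with $k'\le k$, is the right way to make the ``immediate corollary'' precise. One small slip in your write-up: when (\ref{equation-20}) holds, i.e.\ $a/b\le d(f(x),r_0)$, Lemma \ref{chap-optimal-mechanism:lemma-1} puts you in the \emph{increasing} regime (not the ``decreasing regime'' as your parenthetical claims), and it is from $g$ increasing together with $\exp(-k'\epsilon)\ge\exp(-k\epsilon)$ that $\mathbb E_{\bar q}\ge\mathbb E_q$ follows; your final inequalities are the correct ones, but the regime label attached to them is flipped.
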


\begin{proof}
This is an immediate corollary of Lemma \ref{chap-optimal-mechanism:lemma-1} and Theorem \ref{corollary-2}.
\qed
\end{proof}

Theorem \ref{lemma-6} shows that it is possible to improve the utility of the basic mechanism at $x\in \mathcal D$ by migrating some point $r_0$ into the initial values set of $x_0 \in \mathcal D$ so long as the distance $d(f(x),r_0)$ is small enough, which is the theoretical foundation of the claim that the $\delta$-neighborhood mechanisms may have more better utility than the purest mechanisms. However, Theorem \ref{lemma-6} only shows the change rule of utility when one $x$'s initial value set changes, the change rule of utility when several or all $x$'s initial value sets change simultaneously is still unclear.
%\alert{even if one can achieve the case one datset by another dataset}. 
We will give some experimental evidence to the change rule in Section \ref{section-application}, whose theoretical result needs further exploring.

%\alert{Now, we discuss how to iteratively generate other basic mechanisms from the atomic mechanisms by changing the initial value set.}

\begin{corollary}  \label{corollary-3}
Each basic mechanism can be derived from an atomic mechanism in the way of one dataset by another dataset as in Theorem \ref{corollary-2}.
\end{corollary}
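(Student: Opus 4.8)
The plan is to realize any basic mechanism as the endpoint of a (possibly transfinite) chain of single-point insertions, each of which is exactly one step of the form described in Theorem~\ref{corollary-2}, starting from a suitably chosen atomic mechanism. Recall that a basic mechanism is completely specified by its initial values $\{\mathcal R_0^x: x\in\mathcal D\}$ (with all $\mathcal R^x_{I^x}=\emptyset$), the full set sequence being recovered pointwise from them through the construction rule (\ref{equation-15}) (equivalently (\ref{equation-7}) with every $\tilde{\mathcal R}^x_i$ empty), and that, by Proposition~\ref{proposition-3}, any family of initial values of this type does define a legitimate basic mechanism. Hence ``deriving'' one basic mechanism from another amounts purely to modifying the family $\{\mathcal R_0^x: x\in\mathcal D\}$, and the task is to show that one can pass from all-singletons to an arbitrary target family by adding one point to one $\mathcal R_0^x$ at a time.

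First I would fix the target basic mechanism, with initial values $\{\mathcal S_0^x: x\in\mathcal D\}$, where each $\mathcal S_0^x$ is nonempty (the set $\mathcal R_0^x$ always collects the highest-probability points, so this may be taken for granted or simply assumed). For each $x$ choose a representative $r^x\in\mathcal S_0^x$, taking $r^x=f(x)$ whenever $f(x)\in\mathcal S_0^x$; the atomic mechanism we start from is the one with $\mathcal R_0^x=\{r^x\}$ for all $x\in\mathcal D$. The points still to be inserted are exactly the pairs $(x,r)$ with $r\in\bigcup_{x\in\mathcal D}\big(\mathcal S_0^x\setminus\{r^x\}\big)$. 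Enumerate (or well-order) this set of pairs and process them one by one: at a stage whose current basic mechanism has initial values $\{\mathcal R_0^x: x\in\mathcal D\}$, take the next pair $(x_0,r_0)$ and pass to the basic mechanism with $\bar{\mathcal R}^{x_0}_0=\mathcal R^{x_0}_0\cup\{r_0\}$ and $\bar{\mathcal R}^{x}_0=\mathcal R^{x}_0$ for $x\neq x_0$. This is precisely the move analyzed in Theorem~\ref{corollary-2}, so each transition is of the required ``one dataset by another dataset'' type, with the layer bookkeeping $L_{\bar q}^{x}(r)=L_q^{x}(r)$ for $r\neq r_0$ and $L_{\bar q}^{x}(r_0)=\min\{\lceil\bar d(x_0,x)\rceil,L_q^{x}(r_0)\}$ available at every stage.

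It then remains to verify three points. (i) Each move meets the hypotheses of Theorem~\ref{corollary-2}: since $\mathcal R=\{f(y):y\in\mathcal D\}$, every inserted point $r_0$ equals $f(y_0)$ for some $y_0\in\mathcal D$; and with the representative choice above, any pair $(x_0,r_0)$ still to be inserted has $f(x_0)\notin\mathcal S_0^{x_0}$ or $r^{x_0}=f(x_0)$ already present, so $r_0\neq f(x_0)$, which is the remaining hypothesis. (ii) Every intermediate family of initial values still has $\mathcal R^x_{I^x}=\emptyset$, so by Proposition~\ref{proposition-3} every intermediate object is a genuine basic mechanism. (iii) Once all pairs have been processed the family of initial values is exactly $\{\mathcal S_0^x: x\in\mathcal D\}$, so the resulting basic mechanism coincides with the target. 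The main obstacle is (iii) when some $\mathcal S_0^x$ is infinite: the chain of moves is then infinite (or transfinite) and one must make sense of its ``limit''. The clean way is to observe that, by (\ref{equation-15}), each layer $\mathcal R^x_i$ is a function of the initial values alone, and that each element of each $\mathcal S_0^x$ is inserted at some stage; hence the family of initial values stabilizes to $\{\mathcal S_0^x\}$ and so does the entire set sequence. Alternatively one states the corollary for basic mechanisms whose initial value sets are finite, which already covers the purest mechanism and the $\delta$-neighborhood mechanisms, i.e.\ every case of practical interest.
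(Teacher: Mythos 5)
Your proposal is correct and follows essentially the same route as the paper's own (very terse) proof: start from an atomic mechanism whose singleton initial values are chosen inside the target sets $\{\bar{\mathcal R}_0^x\}$, then iteratively apply Theorem~\ref{corollary-2} to insert the remaining points one at a time. Your additional care in choosing the representatives so that the hypothesis $r_0\ne f(x_0)$ of Theorem~\ref{corollary-2} holds at every step, and in handling the stabilization of the initial values when some target set is infinite, fills in details the paper leaves implicit.
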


\begin{proof}
Let $\{\bar{\mathcal R}_0^x:x\in \mathcal D\}$ be the mechanism needing to be generated.
First, set the mechanism $\{q^x(r):x\in\mathcal D\}$ to be the atomic mechanism where each $\mathcal R_0^x \subseteq \bar{\mathcal R}_0^x$ for all $x\in \mathcal D$.
The claim can then be proved just by iteratively using the result of Theorem \ref{corollary-2}.
\qed
\end{proof}

\subsection{The General Mechanisms}

This section considers the general mechanisms where there exist some $x\in \mathcal D$ such that $\mathcal R^x_{I^x} \ne \emptyset$ as defined in Definition \ref{definition-1}. We first extend the results of Theorem \ref{corollary-2} to the general mechanisms.

\begin{corollary}  \label{lemma-7}
For the query function $f$, let $x_0, y_0$ be two datasets such that $f(x_0)\ne r_0$ where $r_0=f(y_0)$.
Let $\{q^x(r):x\in\mathcal D\}, \{\bar q^x(r):x\in\mathcal D\}$ be two mechanisms of $f$ and let $\{\mathcal R^x_i: i\in \bar{\mathbb N}, x\in \mathcal D\}, \{\bar{\mathcal R}^x_i: i\in \bar{\mathbb N}, x\in \mathcal D\}$ be their set sequences, respectively.
Assume $\mathcal R^x_{0}=\bar{\mathcal R}^x_{0}$ and $\mathcal R^x_{I^x}=\bar{\mathcal R}^x_{I^x}$ for all $x\in \mathcal D\setminus\{x_0\}$.
Assume $\mathcal R^{x_0}_{i}=\bar{\mathcal R}^{x_0}_{i}$ for all $i \in I^{x_0}\cup\{0\}\setminus\{i_0\}$ and assume $\bar{\mathcal R}^{x_0}_{i_0}=\mathcal R^{x_0}_{i_0}\cup\{r_0\}$, where $i_0 \in I^{x_0}\cup\{0\}$ and 
%$r_0\notin \cup_{i \in I^{x_0}\cup\{0\}} \mathcal R^{x_0}_{i}$.
$r_0\notin \mathcal R^{x_0}_{i_0}$.
Let $y$ be one dataset such that $r\ne r_0$, where $r=f(y)$.
Then there are $L_{\bar q}^{x}(r)=L_q^{x}(r)$ and $L_{\bar q}^{x}(r_0)=\min\left\{\lceil\bar d(x_0, x)\rceil+i_0,L_q^{x}(r_0)\right\}$.
\end{corollary}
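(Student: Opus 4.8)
The plan is to transplant the proof of Lemma~\ref{lemma-5}/Theorem~\ref{corollary-2} to general mechanisms, the only genuinely new point being that the extra copy of $r_0$ is now seeded into the $i_0$-th initial-value set of $x_0$ (an element of $\mathcal R^{x_0}_{I^{x_0}}$, or $\mathcal R_0^{x_0}$ when $i_0=0$) rather than into $\mathcal R_0^{x_0}$, so that its propagation carries an extra offset of $i_0$.

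First I would prove the easy half, $L_{\bar q}^{x}(r)=L_q^{x}(r)$ for every $r=f(y)\ne r_0$. The initial values $\{\mathcal R_0^x,\mathcal R^x_{I^x}:x\in\mathcal D\}$ of $q$ and of $\bar q$ differ only in that $\bar q$ adjoins the single point $r_0$ to the $i_0$-indexed set at $x_0$. Since unions, set differences and the ``$-\mathcal A^x_{i-1}$'' removals in the construction rule (\ref{equation-7}) all act pointwise, an induction on $i$ gives, for every $x$ and every $r\ne r_0$, that $r\in\mathcal R_i^x$ if and only if $r\in\bar{\mathcal R}_i^x$: whether or not $r_0$ sits in an earlier layer is irrelevant to the fate of a different point $r$. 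Hence $r$ occupies the same layer in $q^x$ and in $\bar q^x$.

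Next I would locate the layer of $r_0$, mirroring the proof of Lemma~\ref{lemma-5}. Enumerate the ways $r_0$ can be contributed to a layer of $\bar q^x$. There are two kinds: (a) the ``old'' contributions of $q$, which are untouched by the argument of the previous paragraph and hence first place $r_0$ at layer $k_0:=L_q^{x}(r_0)$; and (b) the freshly inserted copy of $r_0$, which sits at layer $i_0$ of $x_0$ and therefore, running (\ref{equation-7}) together with the neighbour-layer constraint of the first claim of Theorem~\ref{lemma-4} iterated along a shortest neighbour chain from $x_0$ to $x$, is driven into some layer of $\bar q^x$ of index at most $i_0+\lceil\bar d(x_0,x)\rceil$, while the ``$-\mathcal A^x_{i-1}$'' removals pin it to exactly that first index. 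Because a point occupies the smallest layer to which it is contributed, $L_{\bar q}^{x}(r_0)=\min\{i_0+\lceil\bar d(x_0,x)\rceil,\,k_0\}$; as in Lemma~\ref{lemma-5}, a short case split on $i_0+\lceil\bar d(x_0,x)\rceil\le k_0$ versus $>k_0$ confirms the minimum is attained and is not undercut by the datum $f(y_0)=r_0$ (whose contribution first lands at layer at most $\lceil\bar d(y_0,x)\rceil$, which already dominates $k_0$).

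The main obstacle is the propagation statement in step (b) for a seed that lives at a positive layer: whereas (\ref{equation-7}) as written spreads only the layer-$0$ sets $\mathcal R_0^y$, here one must argue that a point forced into layer $i_0$ of $x_0$ necessarily appears by layer $i_0+\lceil\bar d(x_0,x)\rceil$ of every $x$, and no earlier than its old contributions dictate. This is where the hypotheses $i_0\in I^{x_0}\cup\{0\}$ and $r_0\notin\mathcal R^{x_0}_{i_0}$ do their work, ruling out the degenerate possibility that $r_0$ is already pinned to an earlier layer of $x_0$ and ensuring the insertion is consistent with the disjointness of layers; the remainder is the same bookkeeping as in Lemma~\ref{lemma-5}, now carried out under the uniform shift $i\mapsto i+i_0$.
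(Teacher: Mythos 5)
Your proposal is correct and follows essentially the same route as the paper, whose entire proof is the single sentence that the claim is ``an immediate corollary of the construction rule (\ref{equation-7}) and Theorem \ref{corollary-2}.'' In fact you go further than the paper: you explicitly identify and address the one genuine subtlety --- that (\ref{equation-7}) as written propagates only the layer-$0$ sets, so the spreading of a seed placed at a positive layer $i_0$ of $x_0$ must be extracted from the neighbour-layer constraint of Theorem \ref{lemma-4} iterated along a neighbour chain, yielding the $i\mapsto i+i_0$ shift --- a point the paper's one-line proof silently glosses over.
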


\begin{proof}
This is an immediate corollary of the construction rule (\ref{equation-7}) and Theorem \ref{corollary-2}.
\qed
\end{proof}

Note that the results of Theorem \ref{lemma-6} are also suitable for the general mechanisms. Therefore, the construction of mechanisms of general mechanisms is similar with the construction of the basic mechanisms in Section \ref{subsection:construction-basic-mechanisms}.

%\alert{Derive general mechanism from a basic mechanism while preserving/satisfying the construction rule (\ref{equation-7})}

\begin{corollary}
Each general mechanism can be derived from a basic mechanism in the way of one dataset by another dataset as in Corollary \ref{lemma-7}.
\end{corollary}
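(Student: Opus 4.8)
The plan is to imitate the proof of Corollary~\ref{corollary-3} one level up: there a basic mechanism was obtained from an atomic one by inserting single points into the layer-$0$ initial sets, and here a general mechanism will be obtained from a basic one by inserting single points into the higher-layer extra sets $\tilde{\mathcal R}^x_i$. Let $\{\bar{\mathcal R}_0^x, \bar{\mathcal R}^x_{I^x}: x\in\mathcal D\}$ be the general mechanism to be generated, where $\bar{\mathcal R}^x_{I^x}=\{\tilde{\mathcal R}^x_{i_0^x},\dots,\tilde{\mathcal R}^x_{i_{t^x}^x}\}$ with all layer indices in $\mathbb N$. First I would take as the starting point the basic mechanism $\{\mathcal R_0^x: x\in\mathcal D\}$ with the same layer-$0$ initial values $\mathcal R_0^x=\bar{\mathcal R}_0^x$ and with every extra set empty, i.e.\ $\mathcal R^x_{I^x}=\emptyset$ for all $x$; by Definition~\ref{definition-1} this is a basic mechanism, and by Proposition~\ref{proposition-3} its reconstruction via the rule (\ref{equation-7}) is a legitimate mechanism.

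Next I would fix an enumeration of all triples $(x,i,r)$ with $x\in\mathcal D$, $i\in I^x$, $r\in\tilde{\mathcal R}^x_i$, arranged so that for each fixed $x$ the layer index $i$ is non-decreasing, and process them one at a time. At a generic step the current mechanism is some $\{q^x(r):x\in\mathcal D\}$ with set sequences $\{\mathcal R^x_i\}$, the next triple is $(x_0,i_0,r_0)$, and I would pass to $\{\bar q^x(r):x\in\mathcal D\}$ obtained by replacing the extra part at layer $i_0$ of $x_0$ with $\mathcal R^{x_0}_{i_0}\cup\{r_0\}$ and leaving all other initial values unchanged. Writing $r_0=f(y_0)$ (possible since $\mathcal R=\{f(y):y\in\mathcal D\}$), and using the standing assumption $f(x_0)\in\mathcal R^{x_0}_0$ recalled in Section~\ref{subsection:construction-basic-mechanisms}, we get $f(x_0)\ne r_0$ because $i_0\ge 1$, and $r_0\notin\mathcal R^{x_0}_{i_0}$ because it has not been inserted yet. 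Hence this single step is exactly the modification of Corollary~\ref{lemma-7}, which yields $L_{\bar q}^x(r)=L_q^x(r)$ for $r\ne r_0$ and $L_{\bar q}^x(r_0)=\min\{\lceil\bar d(x_0,x)\rceil+i_0,L_q^x(r_0)\}$, and in particular places $r_0$ in layer $i_0$ of $\bar q^{x_0}$.

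After all triples are processed the resulting mechanism has precisely the layer-$0$ sets $\bar{\mathcal R}_0^x$ and precisely the extra sets $\tilde{\mathcal R}^x_i$, $i\in I^x$, so by the uniqueness in Theorem~\ref{theorem-representation} it coincides with the target general mechanism; when some $I^x$ or some $\tilde{\mathcal R}^x_i$ is infinite the derivation is read as the obvious limiting process, as in Corollary~\ref{corollary-3}. I expect the main obstacle to be the bookkeeping inside this iteration: one must check at every step that $r_0\notin\mathcal R^{x_0}_{i_0}$ and $f(x_0)\ne r_0$, that the intermediate family is still in $\mathbb C$, i.e.\ still satisfies the disjointness condition $\tilde{\mathcal R}^x_i\cap(\cup_{y:\,i-1<\bar d(x,y)\le i}\mathcal R_0^y-\mathcal A_{i-1}^x)=\emptyset$ from (\ref{equation-28}), and that the processing order (lower layers of a dataset before higher ones) guarantees $i_0\le L_q^{x_0}(r_0)$, so that the freshly inserted point actually remains at layer $i_0$ rather than being pushed lower — this is the place where one genuinely uses that the target is a valid general mechanism and not an arbitrary family of sets.
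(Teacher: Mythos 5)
Your proposal matches the paper's own argument: the paper simply says the proof is analogous to that of Corollary~\ref{corollary-3}, i.e., start from the basic mechanism sharing the same layer-$0$ initial values and iteratively insert the points of the extra sets $\tilde{\mathcal R}^x_i$ one at a time via Corollary~\ref{lemma-7}. Your version is in fact more careful than the paper's one-line proof, since you make explicit the enumeration order and the bookkeeping conditions ($r_0\notin\mathcal R^{x_0}_{i_0}$, membership in $\mathbb C$, processing lower layers first) that the paper leaves implicit.
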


\begin{proof}
The proof is similar with the proof of Corollary \ref{corollary-3}.
\qed
\end{proof}

\subsubsection{The Approximation Mechanisms}

We now discuss the approximation problem. Let $f$ and $g$ be two functions with the domain $\mathcal D$ and the codomain $\mathcal R$.
The approximation problem is to substitute the study of differential privacy problem of $f$ for the study of differential privacy problem of the function $g$. 
One reason to study the approximation problem is the need of low sensitivity Lipschitz functions \cite{DBLP:journals/pvldb/KarwaRSY11,DBLP:conf/tcc/KasiviswanathanNRS13,DBLP:conf/sigmod/ChenZ13,DBLP:conf/focs/RaskhodnikovaS16}.
%, which is used as a low sensitivity function to substitute the original query function.
For example, if $g$ is a  low sensitivity Lipschitz function and is an approximation of $f$, then the  global sensitivity-based mechanisms, such as Laplace mechanism, can be applied to $g$ to obtain differentially private approximation of $f$ \cite{DBLP:conf/focs/RaskhodnikovaS16,DBLP:conf/tcc/DixitJRT13}. 

Another reason to study the approximation problem is for the situation where the datasets have different occurring probabilities. In this situation, it is reasonable to let those datasets with high occurring probabilities have relatively better utilities than those without. In this setting, the utility of a mechanism is measured by the quantity $P$ as shown in (\ref{equation-19}).

The key feature of the approximation problem is that when the dataset is $x$, the query result is not $f(x)$ or the points close to $f(x)$ with respect to metric $d$, which is formalized as the following definition.

\begin{definition}[Approximation Mechanism]
For a dataset $x$ and a positive number $\delta^x$, set $N_{\delta}^{f(x)}= \{r\in \mathcal R: d(r,f(x)) \le \delta\}$. Let $\{q^x(r): x\in \mathcal D\}$ be a mechanism for the query function $f(x),x\in\mathcal D$.
If there exists a dataset $x\in \mathcal D$ such that $N_{\delta^x}^{f(x)} \not\subseteq \mathcal R^x_0$ for any positive number $\delta^x$, we say that $\{q^x(r): x\in \mathcal D\}$ is an approximation mechanism for $f$.
\end{definition}

The aim of the approximation problems can be explained as substituting the study of the approximation mechanisms of $f$ for the study of the non-approximation mechanisms of $g$. 

\begin{proposition} \label{proposition-2}
Let $\{{\mathcal R}_0^x, {\mathcal R}^x_{I^x}: x\in \mathcal D\}$ be one approximation mechanism of $f$. Assume that, for each $x\in \mathcal D$, there exist one $f(x')\in \mathcal R$ and one positive number $\delta^{x'}$ such that $N_{\delta^{x'}}^{f(x')} \in {\mathcal R}_0^x$. Set $g(x)=f(x')$ for $x\in \mathcal D$. Then $\{{\mathcal R}_0^x, {\mathcal R}^x_{I^x}: x\in \mathcal D\}$ is one non-approximation mechanism of $g$.

\end{proposition}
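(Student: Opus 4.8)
The plan is to unwind the two definitions at play — that of an \emph{approximation mechanism} and, by negation, that of a \emph{non-approximation mechanism} — and to feed in the hypothesis about the neighborhoods $N_{\delta^{x'}}^{f(x')}$. Concretely, a mechanism specified by initial values $\{\mathcal R_0^x, \mathcal R^x_{I^x}: x\in\mathcal D\}$ is \emph{not} an approximation mechanism for a query function $h:\mathcal D\to\mathcal R$ exactly when, for every $x\in\mathcal D$, there exists a positive number $\delta^x$ with $N_{\delta^x}^{h(x)}\subseteq\mathcal R^x_0$; this is precisely the condition I will establish with $h=g$.

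First I would point out that nothing about the object $\{\mathcal R_0^x, \mathcal R^x_{I^x}: x\in\mathcal D\}$ needs to be re-derived: the set sequences $\{\mathcal R^x_i: i\in\bar{\mathbb N}, x\in\mathcal D\}$ obtained from these initial values through the construction rule (\ref{equation-7}), as well as the $2\epsilon$-differential privacy furnished by Proposition \ref{proposition-3}, depend only on the dataset metric space $(\mathcal D,\bar d)$ and on $\mathcal R$, and not on which query function the mechanism is attached to. Hence $\{\mathcal R_0^x, \mathcal R^x_{I^x}: x\in\mathcal D\}$ is equally a legitimate ($2\epsilon$-differentially private) mechanism for $g:\mathcal D\to\mathcal R$, and the only thing left to check is the non-approximation property relative to $g$. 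Then, fixing an arbitrary $x\in\mathcal D$, I would invoke the hypothesis to obtain the associated dataset $x'$ (which depends on $x$) and the positive number $\delta^{x'}$ with $N_{\delta^{x'}}^{f(x')}\subseteq\mathcal R^x_0$. Since $g(x)=f(x')$ by the very definition of $g$, we get $N_{\delta^{x'}}^{g(x)}=N_{\delta^{x'}}^{f(x')}\subseteq\mathcal R^x_0$, so setting $\delta^x:=\delta^{x'}$ exhibits a positive number with $N_{\delta^x}^{g(x)}\subseteq\mathcal R^x_0$. As $x$ was arbitrary, the defining condition of an approximation mechanism for $g$ fails for every dataset, so $\{\mathcal R_0^x, \mathcal R^x_{I^x}: x\in\mathcal D\}$ is a non-approximation mechanism of $g$.

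There is no genuinely hard step here: the statement follows directly from the definitions once one notes that the set sequences and the privacy level of the mechanism are intrinsic to $(\mathcal D,\bar d)$ and $\mathcal R$ and thus transfer verbatim from $f$ to $g$. The only mild subtleties I would flag explicitly are that the witnessing dataset $x'$ in the hypothesis varies with $x$ — so $g$ is defined by choosing, for each $x$, one such $x'$ — and that the symbol ``$N_{\delta^{x'}}^{f(x')}\in\mathcal R^x_0$'' in the hypothesis is to be read as the inclusion $N_{\delta^{x'}}^{f(x')}\subseteq\mathcal R^x_0$, in keeping with the definition of the approximation mechanism.
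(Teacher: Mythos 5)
Your proof is correct and is exactly the direct definitional unwinding that the paper itself deems ``immediate'' and omits: you negate the definition of an approximation mechanism, observe that the set sequences are intrinsic to the mechanism and transfer verbatim from $f$ to $g$, and use $g(x)=f(x')$ to exhibit the witness $\delta^x:=\delta^{x'}$ with $N_{\delta^x}^{g(x)}\subseteq\mathcal R_0^x$ for every $x$. Your reading of the hypothesis ``$N_{\delta^{x'}}^{f(x')}\in\mathcal R_0^x$'' as the inclusion $N_{\delta^{x'}}^{f(x')}\subseteq\mathcal R_0^x$ is the intended one.
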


\begin{proof}

The proof is immediate and therefore is omitted. 
\qed
\end{proof}

\subsection{Instance Analysis} \label{section-application}

In this section, we analyze some known mechanisms and give some experiments.

\subsubsection{$K$-Norm Mechanism}

The $K$-norm mechanism \cite{DBLP:conf/stoc/HardtT10,DBLP:conf/stoc/BhaskaraDKT12} is one interesting mechanism.
Set $\mathcal D=\mathbb R^n, \mathcal R=\mathbb R^d$ and let $F:\mathcal D \rightarrow \mathcal R$ be a linear function. In \cite{DBLP:conf/stoc/HardtT10,DBLP:conf/stoc/BhaskaraDKT12}, the metric $\bar d$ is defined as $\bar d(x,y)=\|x-y\|_1$ for any two datasets $x,y\in \mathcal D$, and the metric $d$ is defined as $d(r,r')=\|r-r'\|_2$ for any two points $r,r'\in \mathcal R$.
The $K$-norm mechanism is defined as 
\begin{align}
p^x(r)=\frac{1}{\alpha}\exp(-\|F(x)-r\|_K\epsilon),  
\end{align}
where $\alpha$ is the normalizer, $K=FB_1^n=\mathcal A_1^x-F(x)$ with $B_1^n=\mathcal N^x-x$ being the $\ell_1$ unit ball. 

\begin{proposition}
The $K$-norm mechanism is a basic mechanism.
\end{proposition}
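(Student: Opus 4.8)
The plan is to verify that the $K$-norm mechanism, when discretized according to Definition \ref{definition-discretization}, satisfies the defining property of a basic mechanism from Definition \ref{definition-1}; namely, that its set sequences $\{\mathcal R_i^x\}$ coincide with the ones generated from the initial values $\{\mathcal R_0^x : x\in\mathcal D\}$ through the construction rule, equivalently that $\mathcal R_{I^x}^x=\emptyset$ for every $x$, equivalently (by Corollary \ref{corollary-1}) that the identity
\begin{align*}
\mathcal R_i^x = \cup_{y\in\mathcal D:\, i-1<\bar d(x,y)\le i}\mathcal R_0^y - \mathcal A_{i-1}^x
\end{align*}
holds for all $x$ and all $i\in\mathbb N$. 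So the first step is to compute the layers $\mathcal R_i^x$ explicitly for the $K$-norm density $p^x(r)=\frac1\alpha\exp(-\|F(x)-r\|_K\epsilon)$.

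First I would observe that, since $\|F(x)-r\|_K$ depends on $x$ only through the translation $r\mapsto r-F(x)$ and $\alpha$ is translation-invariant, we have $M=\max p^x(r)=p^x(F(x))=1/\alpha$, attained exactly at $r=F(x)$. Then from the definition of the layers,
\begin{align*}
\mathcal R_i^x=\{r\in\mathcal R:\ i\le \|F(x)-r\|_K<i+1\}=F(x)+\big(\, (i+1)K - iK\,\big),
\end{align*}
i.e. the $i$th layer is the ``annulus'' between the scaled bodies $iK$ and $(i+1)K$ centered at $F(x)$. In particular $\mathcal R_0^x = F(x)+K = \mathcal A_1^x$ as already noted in the excerpt (modulo the $\mathcal A_0^x$ correction), and more generally $\mathcal A_i^x = F(x)+(i+1)K$. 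Next I would use the key fact that $K=FB_1^n$ with $F$ linear and $B_1^n$ the $\ell_1$ unit ball, so that for $y$ with $\bar d(x,y)=\|x-y\|_1\le i$ we get $F(y)\in F(x)+ F(i B_1^n) = F(x)+iK$; conversely every point of $F(x)+iK$ of the form $F(y)$ arises this way because $F$ is surjective onto the relevant lattice/subspace of values. Hence $\cup_{\bar d(x,y)\le i}\mathcal R_0^y = \cup_{\bar d(x,y)\le i}(F(y)+K)$ sweeps out exactly $F(x)+iK+K = F(x)+(i+1)K = \mathcal A_i^x$, which gives (\ref{equation-14}) with equality; subtracting $\mathcal A_{i-1}^x$ then yields (\ref{equation-15}), and by the equivalence in Corollary \ref{corollary-1} / Definition \ref{definition-1} this means $\mathcal R_{I^x}^x=\emptyset$, so the mechanism is basic.

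The main obstacle I anticipate is the surjectivity/covering argument in the previous step: one must be careful that $\{F(y):\bar d(x,y)\le i\}$ really does cover all of $(F(x)+iK)\cap\mathcal R$ and not merely a discrete subset of it, since $\mathcal R=\mathbb R^d$ is continuous while the datasets $y$ range over a discrete set — this is where the standing discreteness assumption on $\mathcal D,\mathcal R$ and the precise convention ``$\mathcal R=\{f(x):x\in\mathcal D\}$'' must be invoked, so that $\mathcal R$ is actually the (discrete) image of $F$ and the Minkowski-sum identity $\cup_{v\in iB_1^n\cap\mathbb Z^n}(Fv+K)=F(x)+(i+1)K$ holds at the level of the value set. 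A secondary subtlety is handling the boundary conventions of the half-open intervals defining $\mathcal R_i^x$ and $\mathcal N_i^x$ (the $i-1<\bar d\le i$ versus $i\le\|\cdot\|_K<i+1$ bookkeeping), but this is routine once the geometric picture of nested dilates of $K$ is in place. I would close by remarking that the argument also explains the name: the single body $K$ (equivalently the single initial layer $\mathcal R_0^x$) determines the entire mechanism, which is precisely the content of being basic.
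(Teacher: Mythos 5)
Your proposal is correct and follows essentially the same route as the paper: identify each layer $\mathcal R_i^x$ with the $K$-norm annulus $\{r: i\le\|F(x)-r\|_K<i+1\}$, use linearity of $F$ and $K=FB_1^n$ to rewrite this as $\{F(y): i\le\bar d(x,y)<i+1\}$, deduce $\mathcal A_i^x=\cup_{y:\bar d(x,y)\le i}\mathcal R_0^y$, and subtract $\mathcal A_{i-1}^x$ to recover the construction rule (\ref{equation-15}) with equality. Your Minkowski-sum phrasing and your explicit flagging of the covering/surjectivity subtlety (that $\{F(y):\bar d(x,y)\le i\}$ must exhaust $(F(x)+iK)\cap\mathcal R$) are the same computation the paper performs implicitly, and if anything you are more careful about the point the paper glosses over.
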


\begin{proof}
By Definition \ref{definition-discretization}, 
\begin{align}
\mathcal R_i^x =&\{r \in \mathcal R : i \le \|F(x)-r\|_K < i+1\},  \\
%=&  \{r \in \mathcal R : i-1< \|F(x-x')+F(x')-r\|_K \le i\}
=& \{F(y) \in \mathcal R : i \le \|F(x-y)\|_K < i+1\} \\
=& \{F(y) \in \mathcal R : i \le \bar d(x,y) < i+1\} 
\end{align}
for $i \in \bar{\mathbb N}$ and $x\in \mathcal D$. Then 
\begin{align}
\mathcal A_i^x=&\{F(y) \in \mathcal R :  \bar d(x,y) < i+1\} 
= \cup_{y\in\mathcal D:\bar d(x,y)\le i}\mathcal R_{0}^{y}.
\end{align}
Therefore, 
\begin{align}
\mathcal R_i^x =& \mathcal A_i^x-\mathcal A_{i-1}^x  
= \cup_{y\in\mathcal D: i-1< \bar d(x,y)\le i}\mathcal R_{0}^{y}-\mathcal A_{i-1}^x,
\end{align}
which implies that the $K$-norm mechanism is a basic mechanism.
\qed
\end{proof}

\subsubsection{Sensitivity-Based Mechanisms}

The sensitivity-based mechanisms \cite{DBLP:conf/tcc/DworkMNS06,DBLP:conf/stoc/NissimRS07,DBLP:conf/sigmod/ZhangCPSX15} are a kind of mechanisms which are both simple and efficient. They include the global sensitivity-based mechanism \cite{DBLP:conf/tcc/DworkMNS06}, the local-sensitivity based mechanism \cite{DBLP:conf/sigmod/ZhangCPSX15}, the smooth sensitivity-based mechanism \cite{DBLP:conf/stoc/NissimRS07}, and some variants of constructing low-sensitivity Lipschitz functions \cite{DBLP:journals/pvldb/KarwaRSY11,DBLP:conf/tcc/KasiviswanathanNRS13,DBLP:conf/sigmod/ChenZ13,DBLP:conf/focs/RaskhodnikovaS16}. 

Now, we analyze the global sensitivity mechanism.
Let $p^x(r)$ be the probability distribution of the equation $(4)$ in \cite{DBLP:conf/tcc/DworkMNS06}. Then
\begin{align}
p^x(r)=\frac{1}{\alpha^x}\exp \left(-\frac{d(r,f(x))}{\Delta f}\epsilon \right),
\end{align}
where $\alpha^x$ is the normalizer. For simplicity, let $\alpha^x=\alpha^y$ for any two datases $x,y\in \mathcal D$. In the following, we prove that, in general, the global sensitivity-based mechanisms are not basic mechanisms for the non-monotonic functions.

\begin{proposition} \label{proposition-10}
For the query function $f:\mathcal D\rightarrow \mathcal R$, assume there exist three datasets $x_0,y_0,z_0\in \mathcal D$ such that $f(z_0) \notin \cup_{y\in\mathcal D: \bar d(x_0, y) \le i_0} \mathcal R_0^y$,
$f(z_0) \in \{r\in \mathcal R: i_0 \Delta f \le d(f(x_0),r) < (i_0+1)\Delta f\}$ and $\bar d(z_0,x_0) >i_0$, where $i_0=\lceil \bar d(y_0,x_0)\rceil$.
Then the global sensitivity-based mechanism is not a basic mechanism. 
\end{proposition}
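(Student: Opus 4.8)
The plan is to show that under the stated hypotheses, the discretized layers of the global sensitivity-based mechanism $p^{x_0}(r)$ violate the defining equation (\ref{equation-15}) of a basic mechanism, namely $\mathcal R_i^{x_0}=\cup_{y\in\mathcal D: i-1<\bar d(x_0,y)\le i}\mathcal R_0^y-\mathcal A_{i-1}^{x_0}$. The point $f(z_0)$ will be exhibited as an element of $\mathcal R_{i_0}^{x_0}$ (as computed from the probability distribution via Definition \ref{definition-discretization}) that cannot lie in the right-hand side of the basic-mechanism formula at level $i_0$, so the two cannot be equal.

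First I would compute the layers $\mathcal R_i^{x_0}$ directly from the definition of the global sensitivity-based mechanism. Since $p^{x_0}(r)=\frac{1}{\alpha^{x_0}}\exp\!\big(-\frac{d(r,f(x_0))}{\Delta f}\epsilon\big)$ and (by the simplifying assumption $\alpha^x=\alpha^y$) the maximum $M$ is attained at $d(r,f(x_0))=0$, the discretization threshold $\exp(-(i+1)\epsilon)<p^{x_0}(r)/M\le\exp(-i\epsilon)$ translates into $i\Delta f\le d(f(x_0),r)<(i+1)\Delta f$. Thus $\mathcal R_i^{x_0}=\{r\in\mathcal R: i\Delta f\le d(f(x_0),r)<(i+1)\Delta f\}$, and in particular the hypothesis $f(z_0)\in\{r\in\mathcal R: i_0\Delta f\le d(f(x_0),r)<(i_0+1)\Delta f\}$ says precisely that $f(z_0)\in\mathcal R_{i_0}^{x_0}$.

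Next I would argue that if the mechanism were basic, then (\ref{equation-15}) at $x=x_0$, $i=i_0$ would force $f(z_0)\in\cup_{y\in\mathcal D: i_0-1<\bar d(x_0,y)\le i_0}\mathcal R_0^y\subseteq\cup_{y\in\mathcal D:\bar d(x_0,y)\le i_0}\mathcal R_0^y$. But the hypothesis states $f(z_0)\notin\cup_{y\in\mathcal D:\bar d(x_0,y)\le i_0}\mathcal R_0^y$, a contradiction; hence the mechanism is not basic. (The roles of $y_0$ and $i_0=\lceil\bar d(y_0,x_0)\rceil$ together with $\bar d(z_0,x_0)>i_0$ serve to guarantee such a configuration is consistent and nonvacuous — $f(z_0)$ is ``too far'' in the dataset metric to be reached as some $\mathcal R_0^y$ within distance $i_0$, yet ``close enough'' in the value metric $d$ to land in layer $i_0$, which is exactly the mismatch that non-monotonicity of $f$ permits.)

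The main obstacle I anticipate is the first step: carefully justifying that $M=\max_{r,x}p^x(r)$ is attained at $r=f(x_0)$ for the dataset $x_0$, so that the normalized layer thresholds reduce cleanly to the stated $\ell_d$-ball annuli. This hinges on the simplifying assumption $\alpha^x=\alpha^y$ for all $x,y$ (made in the paragraph preceding the proposition), which makes $p^x(r)$ maximal exactly when $d(r,f(x))=0$; without it one would have to track the normalizers. Once the layer description $\mathcal R_i^{x_0}=\{r: i\Delta f\le d(f(x_0),r)<(i+1)\Delta f\}$ is in hand, the remainder is a direct comparison of set memberships against equation (\ref{equation-15}) and is essentially immediate.
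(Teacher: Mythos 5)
Your proposal is correct and follows essentially the same route as the paper's proof: both identify the layers of the global sensitivity-based mechanism as the annuli $\mathcal R_i^{x_0}=\{r: i\Delta f\le d(f(x_0),r)<(i+1)\Delta f\}$, observe that the hypotheses place $f(z_0)$ in $\mathcal R_{i_0}^{x_0}$ but outside $\cup_{y:\bar d(x_0,y)\le i_0}\mathcal R_0^y$, and conclude that equation (\ref{equation-15}) fails at level $i_0$. Your extra care with the normalizers in deriving the annulus description is a harmless refinement of what the paper states directly.
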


\begin{proof}
Note that 
\begin{align*}
\mathcal R^{x_0}_{i_0}=&\{r\in \mathcal R: i_0\Delta f\le d(f(x_0),r) < (i_0+1)\Delta f\} \\
=& \{r\in \mathcal R: i_0\Delta f\le d(f(x_0),r) < (i_0+1)\Delta f\}-\{r\in \mathcal R: d(f(x_0),r) < i_0\Delta f\}
\end{align*}
and $\mathcal A_{i_0}^{x_0}=\{r\in \mathcal R: d(f(x_0),r) < (i_0+1)\Delta f\}$
for all $i\in \bar{\mathbb N}$. Furthermore, 
\begin{align*}
 &\cup_{y\in \mathcal D: i_0-1<\bar d(x_0,y)\le i_0}\mathcal R_{0}^{y}-\mathcal A_{i_0-1}^{x_0}\\
 =& \cup_{y\in\mathcal D: i_0-1<\bar d(x_0,y)\le i_0}\{r\in \mathcal R: 0\le d(f(y),r) < \Delta f\}-\{r\in \mathcal R: d(f(x_0),r) < i_0\Delta f\}. 
% \subseteq& \mathcal R_i^x.
\end{align*}
Then, by the assumptions, there is
\begin{align}
f(z_0) \notin \cup_{y: i_0-1<\bar d(x_0,y)\le i_0}\mathcal R_{0}^{y}-\mathcal A_{i_0-1}^{x_0}
\end{align}
but $f(z_0) \in \mathcal R_{i_0}^{x_0}$. Therefore, the global sensitivity-based mechanism is not a basic mechanism for $f$.
\qed
\end{proof}

Note that the assumptions about $f$ in Proposition \ref{proposition-10}, in general, are holding when $f$ is a non-monotonic function. Therefore, the global sensitivity-based mechanism is not a basic mechanism for   non-monotonic functions, in general.
Similarly, other sensitivity-based mechanisms, in general, are not basic mechanisms for non-monotonic query functions. 
The above analysis also shows that the sensitivity-based mechanisms are far less optimal respect to tradeoffs between utility and privacy for non-monotonic query functions since, in general, the set $\tilde {\mathcal R}_i^x$ is not empty set even  when $i$ is very large which will result in poor utility by Theorem \ref{lemma-6}. 

\subsubsection{Experiments to Subgraph Counting Function} 

The subgraph counting is one important problem in differential privacy \cite{DBLP:conf/sigmod/ChenZ13,DBLP:conf/tcc/KasiviswanathanNRS13,DBLP:journals/pvldb/KarwaRSY11,DBLP:conf/sigmod/ZhangCPSX15}. We use the \emph{edge differential privacy} as in \cite{DBLP:conf/sigmod/ZhangCPSX15}. That is, two graphs are said to be neighbors if the difference of their edges is 1. 
The query function $f$ is to count the number of \emph{triangles} in a graph $x\in \mathcal D$, where $\mathcal D = \mathcal G_n$ denotes the set of all the graphs with the number of nodes equals $n$ such that any two graphs in $\mathcal G_n$ are not isomorphic. 
Note that $\mathcal R=\{f(x):x \in \mathcal D\}$, which is the set of possible number of triangles. 
The utilities of mechanisms are measured by $\{P^x: x\in \mathcal D\}$ defined in (\ref{equation-16}).

We compare the utilities of the purest mechanism or the $\delta$-neighborhood mechanism to the Ladder mechanisms in \cite[Algorithm 1]{DBLP:conf/sigmod/ZhangCPSX15}. For the fairness of comparison, we set the codomain of the counting function $f$ in \cite[Algorithm 1]{DBLP:conf/sigmod/ZhangCPSX15} be $\mathcal R$ as above instead of $\mathbb Z$. Furthermore, we substitute $2\epsilon$ for $\epsilon$ in
\cite[Algorithm 1]{DBLP:conf/sigmod/ZhangCPSX15} which ensures that the Ladder mechanism is $2\epsilon$-differentially private as ours.
We evaluate the rate $R^x = P^x/P^x_L$, where $P^x$ denotes the utility of our mechanism at $x$ and $P^x_L$ denotes the corresponding utility of the Ladder mechanism at $x$.

\begin{figure}[t]
\includegraphics[scale=0.7]{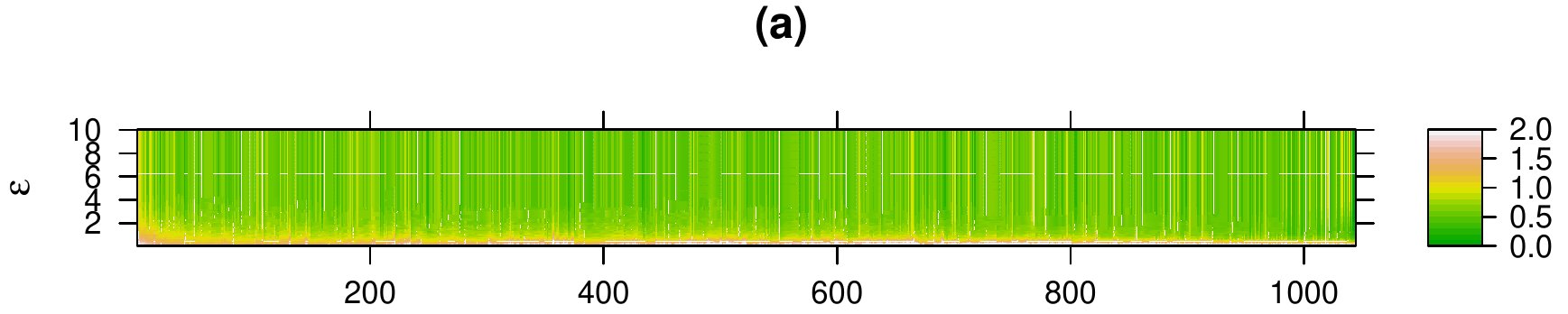}
\includegraphics[scale=0.7]{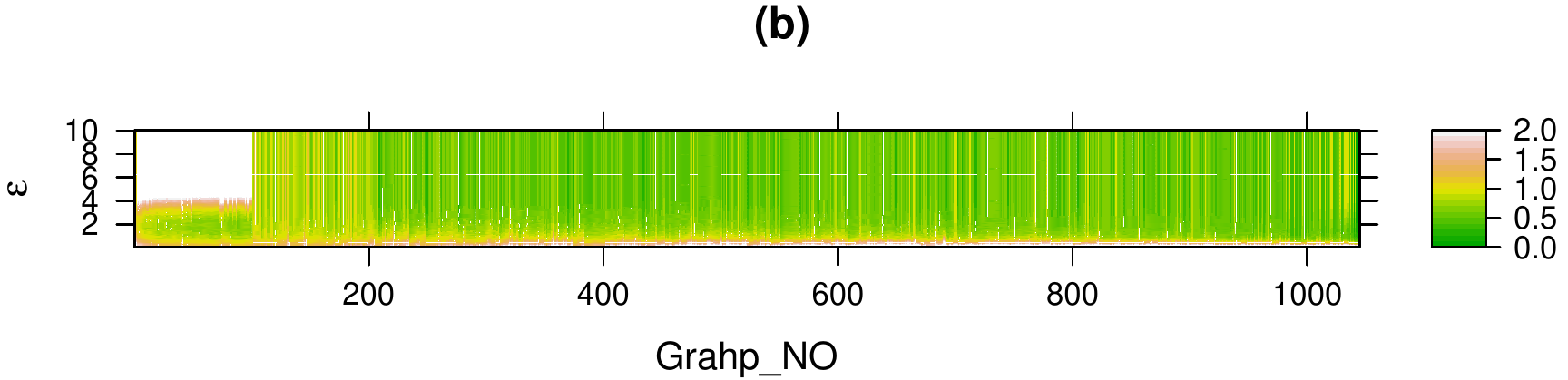}
\caption{The comparison of the purest mechanism and the $\delta$-neighborhood mechanism to the Ladder mechanism of their utilities for $\mathcal G_7$} \label{figure-ladder-ours-mean-G7}
\end{figure}
%\begin{figure}
%\includegraphics[scale=0.7]{3-delta}
%\caption{The comparison of ours and the ladder function of their mean values for $\mathcal G_7$}  \label{figure-ladder-ours-mean-G7-delta}
%\end{figure}
%trim={<left> <lower> <right> <upper>}
%\includegraphics[trim=1cm 10cm 0cm 8cm,width=1\textwidth]{2-delta.eps}

The details of the experiments are as follows. We set $\mathcal D$ be $\mathcal G_7$, where $|\mathcal D|=1044$ and $|\mathcal R|=28$. The results are shown in Fig. \ref{figure-ladder-ours-mean-G7} where the point $i$ in the $x$-axis denotes a graph $x$, the point $y$ in $y$-axis denotes the value $\epsilon$ and the value at the coordinate $(i,y)$ is the value $R^i$ when the input graph is $i$ and $\epsilon=y$.

The upper figure in Fig. \ref{figure-ladder-ours-mean-G7} shows the results when comparing the Ladder mechanism to the purest mechanism. The below figure in Fig. \ref{figure-ladder-ours-mean-G7} shows the results when comparing the Ladder mechanism to the $\delta$-neighborhood mechanism, where $\delta=1$ for all the graphs in $\{i: i \in \{1,2,\ldots,100\}\}$, and $\delta=0$ for other graphs in $\mathcal D$. From Fig. \ref{figure-ladder-ours-mean-G7} we can see the purest mechanism is better than the Ladder mechanism for most graphs when $\epsilon\ge 0.5$. However, the $\delta$-neighborhood mechanism  is worse than the Ladder mechanism for those graphs of $\delta=1$ and of most $\epsilon$. We reason that this is due to the distance $d(f(i),y)\ge 1$ for most $y\in \mathcal R$, which may result in that the inequality (\ref{equation-20}) does not hold.

\subsubsection{Experiments to Linear Function}

The linear query function (Definition \ref{definition-linear-function}) is a kind of well studied query functions in differential privacy \cite{DBLP:conf/stoc/HardtT10,DBLP:conf/stoc/NikolovTZ13,DBLP:journals/tit/GengV16,DBLP:journals/tit/GengV16a,DBLP:journals/vldb/LiMHMR15}. Instead of treating batch linear queries, we treat a linear query. %Batch linear queries can be treated by the combination of the method of this section and the functions compression method as discussed in Section \ref{subsection-related-work}.

We consider a special kind of the linear queries: the sum query. For the sum query, one dataset can be denoted as its histogram $x \in \mathbb R^N$, with $x_i$ denoting the number of elements in $x$ of type $i$ \cite{DBLP:conf/stoc/NikolovTZ13,DBLP:journals/fttcs/DworkR14}. As discussed in Section \ref{subsection-theory-function} we use the neighboring set $\mathcal V_f$ of $f$ to denote the linear function $f$.

The details of the experiments are as follows. We consider four linear functions (over four differnt dataset universes) respectively. They are $\mathcal V_1= [0,1] \cup [1000,1001]$, $\mathcal V_2= [0,100] \cup [1000,1001]$, $\mathcal V_3= [0,500] \cup [1000,1001]$, $\mathcal V_4= [0,1001]$, where $[a,b]$ denotes the corresponding interval in $\mathbb R$. Note that the first three sets are all concave set, which are different from the condition of the standard $K$-norm mechanism \cite{DBLP:conf/stoc/HardtT10} whose neighboring set $K$ is convex. 
We compare the purest mechanism and the $\delta$-neighborhood mechanism to the Staircase mechanism \cite{DBLP:journals/tit/GengV16} for these queries. 
%The corresponding algorithms to evaluate the sequences $\{\mathcal I^x_i\}_i$ and $\{\mathcal R^x_i\}_i$ are similar with Algorithm \ref{algorithm-subgraph-counting} and are omitted. 
Note that, since a linear query is symmetric for different datasets, the utilities $P^x, P^y$ are the same for any two $x,y\in \mathcal D$. 
Therefore, we only need to evaluate the set sequences and $P^x$ for only one dataset. Note that the same value $\delta$ is assigned to different datasets due to the above symmetric property. Before giving the detailed experiments, we first present some theoretical results about linear queries.

\begin{corollary} \label{corollary-linear-function-mechanisms}
Let the mechanism $\{p^x(r): x\in \mathcal D\}$ be $\epsilon$-differentially private and let its discretized mechanism $\{q^x(r): x\in \mathcal D\}$ be either a purest mechanism or a $\delta$-neighborhood mechanism. Then $\{q^x(r): x\in \mathcal D\}$ is $\epsilon$-differentially private.
%The purest mechanism and the $\delta$-neighborhood mechanism are both $\epsilon$-differentially private for the linear queries.
\end{corollary}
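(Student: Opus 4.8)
The statement asks us to replace, for these two particular mechanism families over a linear query, the generic $2\epsilon$ guarantee of Theorem~\ref{theorem-approximation} by the sharp value $\epsilon$. The plan rests on a single reduction. Writing $q^x(r)=\exp(-L_q^x(r)\epsilon)/\alpha^x$ as in Definition~\ref{definition-discretization}, for any two neighbors $x,x'$ and any $r\in\mathcal R$ one has
\begin{align*}
\frac{q^x(r)}{q^{x'}(r)}=\exp\!\big((L_q^{x'}(r)-L_q^{x}(r))\epsilon\big)\cdot\frac{\alpha^{x'}}{\alpha^x}.
\end{align*}
The first factor is already bounded by $\exp(\epsilon)$ once we know that neighboring layer indices differ by at most one; the only reason Theorem~\ref{theorem-approximation} loses a factor of two is the normalizer ratio $\alpha^{x'}/\alpha^x$, which in general only lies in $(\exp(-\epsilon),\exp(\epsilon))$. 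Hence the whole argument reduces to proving that, for a linear $f$ and for the purest or $\delta$-neighborhood mechanism, \emph{all normalizers coincide}: $\alpha^x=\alpha^{x'}$ for every pair of datasets.

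\textbf{Step 1: neighboring layer indices differ by at most one.} Both the purest mechanism and the $\delta$-neighborhood mechanism are basic mechanisms (their $\mathcal R^x_{I^x}$ are empty by the respective definitions), and $\{p^x(r):x\in\mathcal D\}$ is $\epsilon$-differentially private by hypothesis, so Theorem~\ref{lemma-4}(1) (equivalently Corollary~\ref{corollary-1}(1)) gives $\mathcal R^x_i\cap\mathcal R^{x'}_j\neq\emptyset$ only when $|i-j|\le 1$. For each $x$ the layers $\{\mathcal R^x_i\}_i$ partition $\mathcal R$ (any $r$ with $p^x(r)=0$ for some $x$ has $p^{x'}(r)=0$ for all datasets by $\epsilon$-differential privacy and can be deleted from $\mathcal R$), so $L_q^x(r)$ is well defined and $|L_q^x(r)-L_q^{x'}(r)|\le 1$ for all $r$ and all neighbors $x,x'$.

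\textbf{Step 2: translation equivariance of the layers, hence equality of normalizers.} Here linearity enters. For a linear $f$ (Definition~\ref{definition-linear-function}) and the translation-invariant metric $\bar d(x,y)=\|x-y\|_1$, one has $f(y)=f(x)+f(y-x)$ and $\bar d(x,y)=\bar d(0,y-x)$, so a shell $\{f(y):i-1<\bar d(x,y)\le i\}$ is the $f(x)$-translate of the corresponding shell around $0$, and likewise for balls. For the purest mechanism, Proposition~\ref{proposition-1} then gives
\begin{align*}
\mathcal R^x_i=\{f(y):i-1<\bar d(x,y)\le i\}-\{f(y):\bar d(x,y)\le i-1\}=f(x)+\mathcal R^0_i .
\end{align*}
For the $\delta$-neighborhood mechanism (with the single radius $\delta$ used for all datasets, as forced in the symmetric linear setting), $\mathcal R^x_0=N^{f(x)}_{\delta}=f(x)+N^{0}_{\delta}$ because $d$ is translation invariant; feeding this into the basic-mechanism construction rule~(\ref{equation-15}) (valid by Corollary~\ref{corollary-1}) and inducting on $i$ again yields $\mathcal R^x_i=f(x)+\mathcal R^0_i$. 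In either case $|\mathcal R^x_i|=|\mathcal R^0_i|$ for every $i$ and every $x$, so the normalizer
\begin{align*}
\alpha^x=\sum_{i\ge 0}\exp(-i\epsilon)\,|\mathcal R^x_i|
\end{align*}
does not depend on $x$.

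\textbf{Step 3: conclusion.} Combining Steps~1 and~2, for neighbors $x,x'$ and any $r\in\mathcal R$ we get $q^x(r)/q^{x'}(r)=\exp((L_q^{x'}(r)-L_q^{x}(r))\epsilon)\le\exp(\epsilon)$, and summing over $r\in S$ yields $\Pr[q^x\in S]\le\exp(\epsilon)\Pr[q^{x'}\in S]$ for every $S$, i.e. $\{q^x(r):x\in\mathcal D\}$ is $\epsilon$-differentially private. I expect Step~2 to be the real obstacle: one must isolate exactly which ingredients force the layer cardinalities to be dataset-independent — linearity of $f$, translation invariance of both $\bar d$ and $d$, the closed-form layer descriptions of Proposition~\ref{proposition-1} and Corollary~\ref{corollary-1}, and the use of one common radius $\delta$. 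Without them (for instance if $f$ were only monotonic, or if the radii $\delta^x$ varied with $x$) the normalizers would generally differ and only the $2\epsilon$ bound of Theorem~\ref{theorem-approximation} would survive.
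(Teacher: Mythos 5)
Your proposal is correct and follows the same route the paper takes: the paper justifies the corollary in one line by noting that the symmetry of a linear query forces $\alpha^x=\alpha^y$ for all datasets, so the only source of the extra $\exp(\epsilon)$ factor in Theorem~\ref{theorem-approximation} disappears. Your Steps 1--3 simply make that one-line argument explicit (translation equivariance of the layers, equality of normalizers, and the layer-index bound from Theorem~\ref{lemma-4}), and your closing remarks about needing a common radius $\delta$ and translation invariance match the paper's own caveats.
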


The above corollary is due to the symmetric property presented above which leads to $\alpha^x =\alpha^y$  for any two datasets $x,y$. %\emph{Note that, for two neighbors $x,x'$, the mechanisms in Corollary \ref{corollary-linear-function-mechanisms} reach the bound of the inequality (\ref{formula-inequality-dp}) at all the points in $\mathcal I^x_i\cap \mathcal I^{x'}_{i+1}$ or all the points in $\mathcal R^x_i\cap \mathcal R^{x'}_{i+1}$, $i \in \bar{\mathbb N}$}.

We now discuss the convergence of the set sequences.

\begin{definition}[The convergence of set sequence] \label{definition-sequence-convergence}
Let $\mathcal V_f$ be a linear query function over $\mathbb R$. The corresponding set sequence $\{\mathcal R_i: i \in \bar{\mathbb N}\}$ is said to be convergent if there exist $a_n$ and $n$ such that $\mathcal R_n=a_n\pm [0,\Delta f]$ and $\mathcal R_{n+1}=a_n\pm [\Delta f,2\Delta f]$, where $\Delta f$ is the global sensitivity of $\mathcal V_f$.
\end{definition}

\begin{proposition}  \label{proposition-convergence}
Assume $\mathcal V_f=[0,a]\cup[b,c] $ is a linear query function, where $0<a<b<c$. Then the sequence $\{\mathcal R_i:i\in \bar{\mathbb N}\}$ of $f$ is convergent. 
% when $i \ge \frac{\Delta f}{c-b}$. \alert{Need verification.}
\end{proposition}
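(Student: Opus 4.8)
The plan is to turn the recursion for the (purest) set sequence into an explicit sumset and then to show that the sumsets eventually fill in their single gap. Since $f$ is linear and $\bar d$ is translation invariant on the real vector space $\mathcal D$, we have $\{f(y):\bar d(x,y)\le i\}=f(x)+\{f(z):\bar d(0,z)\le i\}$ for every $x$, so it suffices to work with the dataset for which $f(x)=0$; write $K=\mathcal V_f=[0,a]\cup[b,c]$. First I would check that, because $0\in K$, the sets $S_i:=\{f(y):\bar d(0,y)\le i\}$ are nested increasing, and that $S_i$ equals the $i$-fold sumset $iK:=\underbrace{K+\cdots+K}_{i}$ (with $0K=\{0\}$): the inclusion ``$\subseteq$'' writes a $y$ with $\bar d(0,y)\le i$ as $i$ copies of $y/i$, each at distance $\le 1$ from $0$, and uses linearity of $f$; the inclusion ``$\supseteq$'' uses subadditivity of $\bar d$ together with linearity of $f$. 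By Proposition~\ref{proposition-1} (equivalently Corollary~\ref{corollary-1} with $\mathcal A_i^x=S_i$), the layers of the purest mechanism are then
\[
\mathcal R_i=\{f(y): i-1<\bar d(0,y)\le i\}-S_{i-1}=S_i\setminus S_{i-1}=iK\setminus (i-1)K .
\]

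Next I would compute $iK$ explicitly: choosing $k$ of the $i$ summands from $[b,c]$ and the remaining $i-k$ from $[0,a]$ gives
\[
iK=\bigcup_{k=0}^{i}\bigl[\,kb,\ kc+(i-k)a\,\bigr].
\]
Both the left endpoint $kb$ and the right endpoint $kc+(i-k)a$ are strictly increasing in $k$, the least left endpoint is $0$, and the greatest right endpoint is $ic$; hence $iK=[0,ic]$ as soon as every two consecutive intervals in this union overlap, i.e. as soon as $(k+1)b\le kc+(i-k)a$, equivalently $b\le ia+k(c-b-a)$, holds for all $0\le k\le i-1$. The right-hand side is affine in $k$, so its minimum over this range is attained at $k=0$ (value $ia$) or at $k=i-1$ (value $a+(i-1)(c-b)$); therefore all the overlaps hold, and $iK=[0,ic]$, for every $i\ge n_0$ with $n_0:=\bigl\lceil\max\!\bigl(b/a,\ 1+(b-a)/(c-b)\bigr)\bigr\rceil$.

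Finally, for $i>n_0$ this yields $\mathcal R_i=iK\setminus (i-1)K=[0,ic]\setminus[0,(i-1)c]=\bigl((i-1)c,\,ic\,\bigr]$, an interval of length $c=\Delta f$ (the global sensitivity of $\mathcal V_f=[0,a]\cup[b,c]$ is $\max\{|v|:v\in\mathcal V_f\}=c$). Taking $n=n_0+1$ and $a_n=(n-1)c$, and reading these layers through the $\pm$ symmetrization of Definition~\ref{definition-sequence-convergence}, gives exactly $\mathcal R_n=a_n\pm[0,\Delta f]$ and $\mathcal R_{n+1}=a_n\pm[\Delta f,2\Delta f]$, so $\{\mathcal R_i:i\in\bar{\mathbb N}\}$ is convergent. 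If instead $\mathcal V_f$ is understood as the genuinely symmetric set $\pm([0,a]\cup[b,c])$, the same conclusion follows: $iK$ is symmetric and always contained in $[-ic,ic]$, while its non-negative part contains the $i$-fold sumset of $[0,a]\cup[b,c]$, which is $[0,ic]$ for $i\ge n_0$, forcing $iK=[-ic,ic]$.

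I expect the main obstacle to be the second step --- proving that the finitely many gaps of $iK$ all close \emph{simultaneously} once $i$ is large, uniformly in $k$. The reduction of the overlap condition to a single inequality that is affine in $k$ is what makes this tractable, but one must be careful that the intervals $[kb,kc+(i-k)a]$ are correctly ordered by both endpoints and that a chain of pairwise-overlapping sorted intervals indeed has connected union $[0,ic]$; the rest (the sumset identity $S_i=iK$ and matching the resulting form of $\mathcal R_i$ against Definition~\ref{definition-sequence-convergence}) is routine.
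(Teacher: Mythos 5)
Your proof is correct, and it is both more explicit and more complete than the one in the paper. The paper's own argument is a two-line sketch: it observes only that the block $[b,c]$ generates the interval $[ib,ic]$ inside $\mathcal A_i^x$, that once $i\ge \Delta f/(c-b)$ this interval contains $[(i-1)\Delta f, i\Delta f]$, and hence that $[i\Delta f,(i+1)\Delta f]\subseteq \mathcal R_{i+1}$; it then declares convergence. That argument only yields a \emph{containment}, whereas Definition~\ref{definition-sequence-convergence} demands the equality $\mathcal R_n=a_n\pm[0,\Delta f]$; in particular one must rule out residual gaps in $\mathcal A_i$ below $(i-1)c$ that would otherwise leak into the higher layers $\mathcal R_{i+1},\mathcal R_{i+2},\dots$ forever. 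Your route --- identifying $\mathcal A_i$ with the $i$-fold sumset $iK=\bigcup_{k=0}^i\bigl[kb,\ kc+(i-k)a\bigr]$ and showing that all consecutive overlaps $(k+1)b\le kc+(i-k)a$ hold simultaneously once $i\ge n_0$, because the overlap condition is affine in $k$ and so is checked at the two extreme values of $k$ --- is exactly what closes this gap: it gives $iK=[0,ic]$ \emph{exactly}, hence $\mathcal R_i=((i-1)c,\,ic]$ exactly, with an explicit threshold. What the paper's shorter argument buys is brevity and the single clean condition $i(c-b)\ge\Delta f$, but as written it does not establish the equality required by its own definition. The remaining issues in your write-up are cosmetic and you already flag them: the identification $S_i=iK$ uses divisibility in $\mathcal D$ (harmless for $\mathcal D=\mathbb R^N$ as in the paper's experiments), and the mismatch between your one-sided interval $((i-1)c,ic]$ and the notation $a_n\pm[0,\Delta f]$ is an ambiguity of the paper's own definition, whose example sets $\mathcal V_1,\dots,\mathcal V_4$ are likewise one-sided.
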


\begin{proof}
Note that the interval $[b,c]$ will generate the interval $[ib,ic]$ in $\mathcal R_i^x$. Setting $i \ge \frac{\Delta f}{c-b}$, we have $[(i-1)\Delta f, i\Delta f]\subseteq [ib,ic] $. This implies that $[i\Delta f, (i+1)\Delta f]\subseteq \mathcal R_{i+1} $. Then it is convergent.
\end{proof}
%The corresponding $\{\mathcal R_i:i\in \bar{\mathbb N}\}$ sequence has the similar convergence property with $\{\mathcal I_i:i\in \mathbb N\}$ as in Proposition \ref{proposition-convergence}.

\begin{figure}[t]
\centering
\includegraphics[scale=0.33]{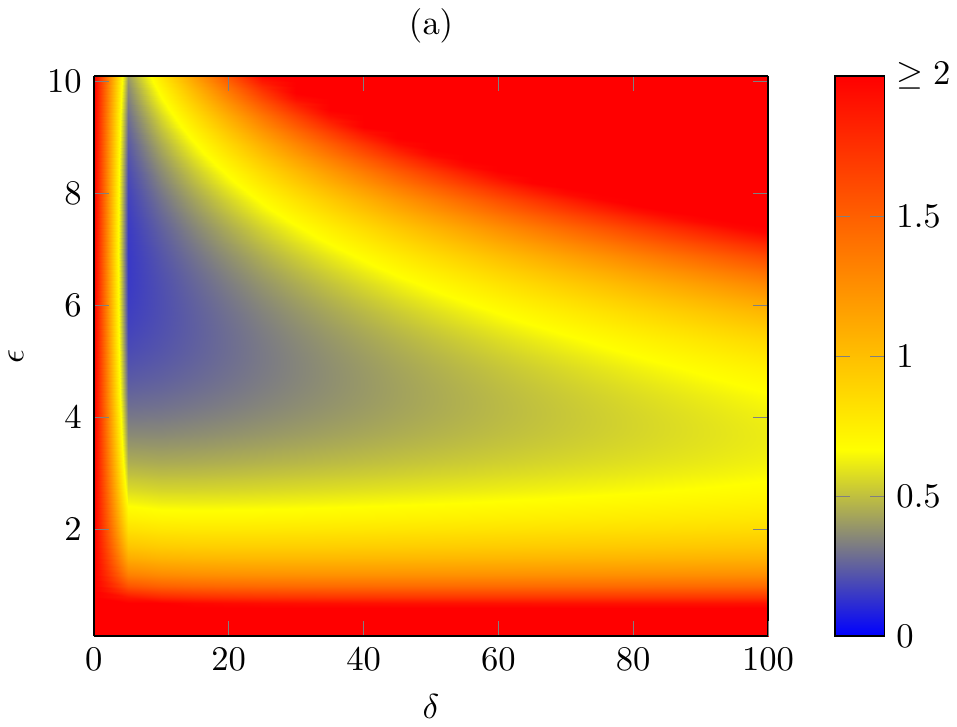}
\includegraphics[scale=0.33]{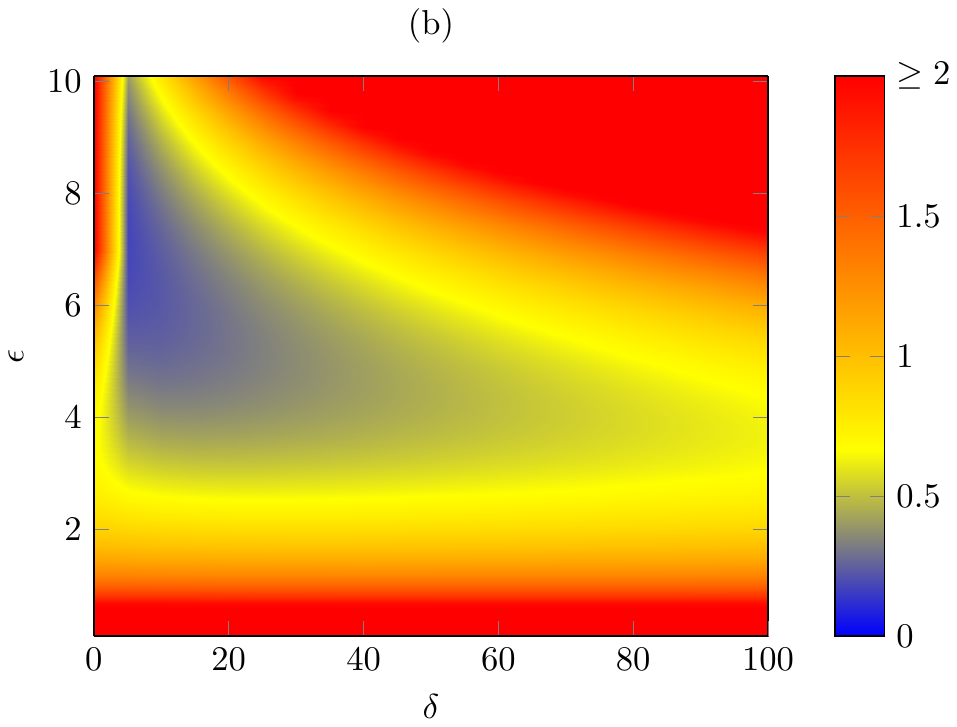}
\includegraphics[scale=0.33]{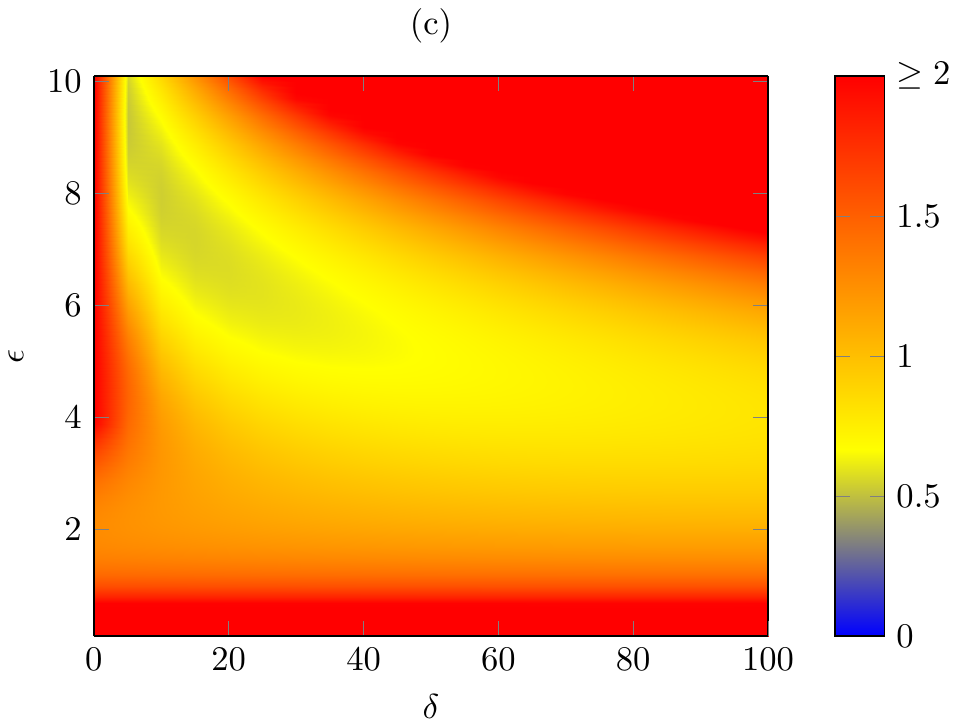}
\includegraphics[scale=0.33]{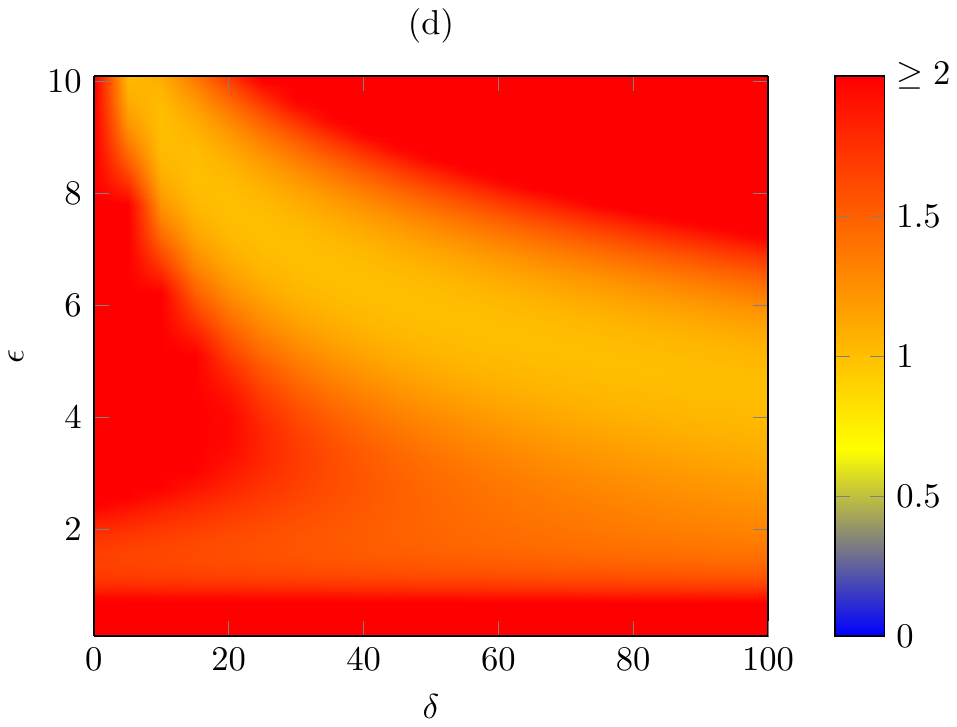}
\caption{The utilities' comparison of the purest mechanism and the $\delta$-neighborhood mechanism to the Staircase mechanism for the queries $\mathcal V_1,\mathcal V_2,\mathcal V_3,\mathcal V_4$ are shown in $(a), (b), (c), (d)$, respectively} \label{figure-example-2}
\end{figure}

The utility of the mechanism $\mathcal M$ is measured by $P^x$ defined in (\ref{equation-16}), in which $f(x)=0$. We compare the purest mechanism or the $\delta$-neighborhood mechanism to the Staircase mechanism in \cite[Algorithm 1]{DBLP:journals/tit/GengV16}. We evaluate the rate $R= P^x/P^x_S$, where $P^x$ denotes the utility of the purest mechanism or the $\delta$-neighborhood mechanism and $P^x_S$ denotes the utility of the Staircase mechanism. The results are shown in Fig. \ref{figure-example-2} where the $x$-axis denotes the values of $\delta$, the $y$-axis denotes the value of $\epsilon$ and the value at the coordinate $(\delta,\epsilon)$ is the value $R$ for the definite $\delta$ and $\epsilon$. In Fig. \ref{figure-example-2}, the results of the four queries $\mathcal V_1,\mathcal V_2,\mathcal V_3,\mathcal V_4$ are shown in $(a),(b),(c),(d)$, respectively.

We now analyze the results in Fig. \ref{figure-example-2}. The four queries $\mathcal V_1,\mathcal V_2,\mathcal V_3,\mathcal V_4$ have the same global (and local) sensitivity $\Delta f=1001$. However, the volumes of their neighboring set (Section \ref{subsection-theory-function}) are different. Explicitly, $\mbox{Vol}(\mathcal V_1)=2, \mbox{Vol}(\mathcal V_2)=101, \mbox{Vol}(\mathcal V_3)=501$ and $\mbox{Vol}(\mathcal V_4)=1001$. Fig. \ref{figure-example-2} shows some interesting phenomenon: The more larger of the value $\Delta f/\vol(\mathcal V_f)$, the more better of our mechanisms compared to the Staircase mechanism when $ \epsilon \ge 3 $ and $5 \le \delta \le 50$. Furthermore, Fig. \ref{figure-example-2} shows that it is possible that the $\delta$-neighborhood mechanism can have more better utilities than the purest mechanism at every datasets.

\section{Conclusion} \label{section-conclusion}

By capturing the correlations among the differential privacy outputs $\mathcal M(x), x\in \mathcal D$, the differential privacy mechanism $\mathcal M$ can be represented just by some parameters, by which the universe of $\epsilon$-differential privacy mechanisms of the query function $f$ is just a set of these parameters. These greatly simplify the construction of differential privacy mechanisms and then greatly simplify the optimal mechanism problems. More importantly, these results are universal to every query functions defined in Section \ref{section-DP-model}.

The results of this paper provide a way to universally discuss the optimal differentially private mechanisms defined in Section \ref{subsection:representation-utilities}, at least theoretically. Of course, we must acknowledge that the optimal mechanism problems are so complicated that it is too early to say that our approaches can give the optimal mechanism problems a relatively satisfied solution.  %Nevertheless, they provide a way to deal with the optimal mechanism problems. 
Clearly, it is necessary to give a detailed mathematical exploring of the optimal mechanisms for some simple functions, such as the permutation functions in Definition \ref{definition-permutation-function} and the linear functions in Definition \ref{definition-linear-function}, before discussing the optimal mechanisms of real world problems, like those examples in Section \ref{subsection:instance-interpretation}. It seems that some results about operators in functional analysis, especially about the linear operators, will inevitably be involved in the exploring. These should be urgent future works.

Furthermore, how to generalize the results of this paper to $(\epsilon, \delta)$-differential privacy is another interesting work. Moreover,  
this paper only focuses on the utility-privacy tradeoffs but seldom on the algorithm complexity. 
%We hope, in future, there will be works that focus on the tradeoff among all of utility, privacy and algorithm complexity.
We hope, in future, we can add the algorithm complexity consideration into our approaches.

\bibliographystyle{unsrt}

\end{document}